\newcommand{\cambios}[1]{#1}
\def\be{\begin{eqnarray}}
\def\ee{\end{eqnarray}}
\def\beq{\begin{equation}}
\def\eeq{\end{equation}}
\def\({\left (}
\def\){\right )}
\newtheorem{theorem}{Theorem}[section]
\newtheorem*{note}{Note}
\newtheorem{proposition}[theorem]{Proposition}
\newtheorem{corollary}[theorem]{Corollary}
\newtheorem{definition}[theorem]{Definition}
\newtheorem{remark}[theorem]{Remark}
\title{A novel notion of null infinity for c-boundaries and generalized black holes}
\author[1]{I.P. Costa e Silva}
\author[2]{J.L. Flores}
\author[3]{J. Herrera}
\affil[1]{\small{{\textit{Department of Mathematics\\
Universidade Federal de Santa Catarina, 88.040-900 Florian\'{o}polis-SC, Brazil.}}}}
\affil[2]{\small{\textit{Departamento de \'Algebra, Geometr\'{i}a y Topolog\'{i}a\\ Facultad de Ciencias, Universidad de M\'alaga\\ Campus Teatinos, 29071 M\'alaga, Spain.}}}
\affil[3]{\small{\textit{Departamento de Matemáticas, Edificio Albert Einstein, Universidad de Córdoba\\ Campus de Rabanales, 14071 Córdoba, Spain}}}
\begin{document}

\maketitle

	\begin{abstract}
\noindent
  We give new definitions of null infinity and black hole in terms of causal boundaries, applicable to any strongly causal spacetime $(M,g)$. These are meant to extend the standard ones given in terms of conformal boundaries, and use the new definitions to prove a classic result in black hole theory for this more general context: if the null infinity is {\em regular} (i.e. well behaved in a suitable sense) and $(M,g)$ obeys the null convergence condition, then any closed trapped surface in $(M,g)$ has to be inside the black hole region. As an illustration of this general construction, we apply it to the class of {\em generalized plane waves}, where the conformal null infinity is not always well-defined. In particular, it is shown that (generalized) black hole regions do {\em not} exist in a large family of these spacetimes.

\end{abstract}

\section{Introduction}

Since its introduction by Geroch, Kronheimer and Penrose in 1972 \cite{GerochIdealPointsSpaceTime1972}, the {\em causal boundary} (or {\em $c$-boundary}) of a given strongly causal spacetime $(M,g)$ has asserted itself as a method for attaching ideal points to $M$ which, although somewhat more abstract than conformal boundaries, is (unlike the latter) deeply ingrained in the very causal structure of $(M,g)$. In particular, this means that the $c$-boundary is, in a strong sense, an inescapable feature of the (conformally invariant aspects of the) spacetime geometry.

By contrast, from a purely geometric perspective, the existence of a conformal boundary seems to impose an {\em ad hoc} restriction on $(M,g)$: it relies on the existence of a suitable open conformal embedding into a larger spacetime $(\tilde{M},\tilde{g})$, inducing a piecewise $C^1$ (often required to be $C^{\infty}$) boundary $\partial M$ with fairly restrictive properties first introduced by Roger Penrose in 1963 as a geometric description of the asymptotic properties of the gravitational field \cite{PenroseAsymptoticStructure1963,PenroseConformalInfinity1964} (see also, e.g., \cite{HawkingLargeScaleStructure1975,WaldGeneralRelativity1984,Frauendiener2004} and their references for general accounts and further technical details and \cite{ChruscielConformalboundaryextensions2010} for a analysis of the uniqueness and some criteria for existence of conformal boundaries). It is unclear to us how generic are such properties, and indeed the very existence, of conformal boundaries.

To be sure, it is well-known that among solutions of the Einstein field equations with various signs of the cosmological constant, the seminal work of Friedrich \cite{Friedrich1,Friedrich2,Friedrich3} shows that the existence of a conformal {\em future null infinity} ${\cal J}^+$ in the sense of Penrose is a {\em stable} feature under (non-linear) perturbations for a large class of initial data for the Einstein field equations, either vacuum or suitable matter content \cite{Friedrich2,ChruschielDelayExistenceAsymptoticallySimple2002}. However, even in simple but key examples such as plane waves, which also have great physical interest, the conformal boundary may either not exist altogether, or else have counterintuitive properties \cite{Florescausalboundarywavetype2008}. Although such solutions are indeed fairly special and probably too idealized from a physical standpoint, to the best of our knowledge the absence of a sensible conformal boundary they feature might well turn out to be a stable feature by itself (see, for example, \cite{Friedrich3} and references therein for an extensive discussion of this difficult issue).

Be as it may, it cannot be gainsaid that conformal boundaries are extremely convenient in physical applications. Whenever available, they present an exceedingly elegant and well-motivated means of addressing a number of key physical notions. One of the most beautiful applications of conformal boundaries in physics is in the classical theory of black holes. These are often informally described as spacetime regions from which causal communication with ``external observers'' is barred. Now, since the precise mathematical surrogate of ``causal communication'' is fairly clear in Lorentzian geometry, any purported rigorous definition of black holes is tantamount to giving a mathematically accurate and physically useful model of the term ``external observer''.

In standard treatments, these ``observers'' have been equated with the points on the future null infinity ${\cal J}^+$ in the sense of Penrose indicated above. (Indeed, the future part of conformal boundary is often taken to consist only of such points.) However, as also pointed out before, the properties of the conformal null infinity which are useful in mathematical Relativity can be rather specialized, abstracted as they are from very specific spacetimes which are solutions of the Einstein field equation.

Therefore, it can be of great interest to take advantage of the universality and geometric naturalness of the causal boundary construction, while making it as amenable as possible to concrete applications. The original causal boundary construction \cite{GerochIdealPointsSpaceTime1972} presented a number of undesirable properties even in the simplest cases. Fortunately, the main glitches in the original proposal have been addressed and solved by a suitable redefinition of the $c$-boundary, initially by Szabados \cite{SzabadosCausalboundarystrongly1988,SzabadosCausalboundarystrongly1989} and Marolf and Ross \cite{Marolfnewrecipecausal2003}, and further streamlined in a quite satisfactory way by Flores in \cite{Florescausalboundaryspacetimes2007}, and by Flores, Herrera and S\'{a}nchez in \cite{Floresfinaldefinitioncausal2011} (see also the references of the latter paper and in \cite{Garcia-ParradoCausalstructurescausal2005} for a detailed discussion of previous/alternative proposals by other authors).

Our purpose in this paper is to show how one can use the $c$-boundary construction to give a completely general definition of future null infinity ${\cal J}^+$ which is purely geometric, and valid for any strongly causal spacetime $(M^{n+1},g)$ with $n \geq 2$. Since the definition we give relies only on the $c$-boundary, one does not need to assume any field equations, specific dimensions or special asymptotic properties of $(M,g)$. 
Still, our approach here strongly suggests that, at least in many concrete physical situations in which the conformal null infinity plays a role, the latter will coincide with the $c$-boundary version. If so, we do not loose any of the good properties of the conformal boundaries (see \cite{CostaFloresHerrera2018-2}).

We also use our notion of null infinity to define the concomitant notion of a black hole in this general context. It is a matter of no little surprise that some non-trivial properties of black holes can indeed be extended to this general setting. A simple but important example we discuss here is a classic result in black hole theory (see, e.g., Prop. 12.2.2 of \cite{WaldGeneralRelativity1984}: under physically reasonable geometric conditions on the spacetime, any closed trapped surface remains inside the black hole region thereof. Other minor structural features characterizing black holes are also discussed here and shown to apply even in this very general situation.

It should be emphasized that these extended notions are not meant to provide {\em substitutes} for the standard, conformal notions which are so useful in physical applications, but rather {\em complements} to them. Finer aspects of physical interest, such as the detailed conditions characterizing (null) asymptotic flatness, or key results like the so-called ``peeling property'' of the Weyl tensor in gravitational wave theory \cite{SachsPeeling1961} are best described in terms of conformal boundaries, precisely because the definition of the latter structure has been fine-tuned for a long time to achieve this. But an array of extra assumptions have to be added, albeit physically tested and well-motivated. Indeed, the merest glance at the definition of asymptotic flatness in Wald's book \cite{WaldGeneralRelativity1984}, for example, reveals the number of assumptions one has to make for the notion of infinity to work at its fullest power.

The point here is that even in much less stringent situations we may still hope for usefully doing as much with less. The results about black holes proved here offer a case in point, but we also provide examples (in section \ref{ppwaves}) which are of considerable physical interest (indeed, a class of the so-called {\em generalized plane waves}) but where conformal notions are out of the question, since a conformal boundary (with good properties) simply may not exist!

The generality we envisage here, however, comes at a price. In order to obtain good results from our notions of null infinity and black holes, a number of {\em seemingly} rather technical assumptions and definitions are needed along the way. This is because a number of annoying counterexamples may be found for most of the otherwise intuitive statements which hold in the conformal case. We have endeavoured to present these extra hypotheses clearly, and while it may not at all be obvious to the reader, the specific assumptions we adopt are actually very mild and hold for all but some rather artificial examples. However, including a detailed discussion here would enormously complicate the paper, so we analyze them in depth elsewhere \cite{CostaFloresHerrera2018-2}. In particular it will be shown on that juncture that most assumptions in this paper are automatically satisfied for a large class of spacetimes where an interesting conformal boundary exists.

The rest of this paper is organized as follows.

In section \ref{prelim}, we introduce the notation and briefly review some of the basic definitions and results about the $c$-boundary of a strongly causal spacetime which will be used in the later parts of the paper.



 In section \ref{conf}, we briefly review the results in \cite{Floresfinaldefinitioncausal2011} about the relation between the c-boundaries and the conformal ones (when the latter are suitably defined). 


 In section \ref{bh}, a definition is given of null infinity for $c$-boundaries of strongly causal spacetimes. This definition is directly inspired by the classical notion of null infinity given in terms of the conformal boundary, plus the needed extra technicalities. We also discuss therein some of its general properties.

This extended notion of null infinity is then used in section \ref{bh2} to introduce a general notion of black hole, and we explore some of its more elementary properties. It is here that, under very mild assumptions 
(taken for granted in the usual, conformal setting \cite{CostaFloresHerrera2018-2}) which {\em do not} include any notion of asymptotic flatness, we are able to prove that any closed trapped surface in $(M,g)$ has to be inside the black hole region (Corollary \ref{trappedcor}), which as mentioned before is a key result in (standard) black hole theory.

In section \ref{ppwaves} the previous results are applied to give a description of null infinity for a class of generalized plane waves, which may {\em not} admit conformal boundaries, and it is shown that these spacetimes do not admit black holes (see \cite{Florescausalboundarywavetype2008,HubenyRagamaniNoHorizons,SenovillaNoBHinPPWaves2003} for related results).In Ref. \cite{SenovillaNoBHinPPWaves2003}, especially, the author shows that under certain mild additional geometric restrictions on generalized wave-type spacetimes, closed trapped surfaces do not exist therein\footnote{We thank Jos\'{e} M. M. Senovilla for bringing this reference to our attention.}. That (purely quasi-local) result is interestingly complementary to our own results here: in the context of Corollary \ref{trappedcor} we too may conclude, from a different perspective, that insofar as any trapped surfaces would be inside a black hole region, they do not exist in the wave-type spacetimes considered here.

Finally, in an appendix we discuss an example which shows that the clauses in Definitions \ref{scri} and \ref{ample} are not only natural when interpreted from the classical conformal approach, but cannot be discarded in favor of less technical-looking ones.



\section{Preliminaries: the c-boundary construction}\label{prelim}

Throughout this paper, $(M,g)$ will denote a {\it spacetime}, i.e. a connected, time-oriented $C^{\infty}$ Lorentzian manifold $(M,g)$. We shall assume that the reader is familiar with standard facts in Lorentzian geometry and causal theory as given in the basic references \cite{BeemGlobalLorentzianGeometry1996,HawkingLargeScaleStructure1975,ONeillSemiRiemannianGeometryApplications1983}.

The causal completion of $(M,g)$ is by now a standard construction put forth for the first time in a seminal paper by Geroch, Kronheimer and Penrose \cite{GerochIdealPointsSpaceTime1972}. The underlying idea is simple enough: to add \textit{ideal points} to the original spacetime, comprising the so called \textit{causal completion} (or \textit{$c$-completion} for short), in such a way that any inextendible timelike curve in $(M,g)$ has endpoints. The set of these ideal endpoints form the associated {\em causal boundary}, or {\em $c$-boundary} for short. This construction is in addition conformally invariant in the sense that two spacetimes in the same conformal class have identical $c$-boundaries. Finally, it is applicable to any strongly causal spacetime, independently of their asymptotic properties.

For the convenience of the reader and to set up the notation we shall use throughout, we briefly review the c-boundary construction here, referring to \cite{GerochIdealPointsSpaceTime1972} and \cite{Floresfinaldefinitioncausal2011} for further details and proofs. Let us start with some basic
notions.

A non-empty subset $P\subset M$ is called a {\em past set} if it
coincides with its chronological past, i.e. $P=I^{-}(P)$. (In particular, every past set is open.) A past set that cannot be written as the
union of two proper subsets, both of which are also past sets, is
said to be an {\em indecomposable past} set (IP). It can be shown that an IP either coincides with the past of some point of the spacetime, i.e., $P=I^{-}(p)$ for $p\in M$, or else $P=I^{-}(\gamma)$ for some inextendible future-directed
timelike curve $\gamma$. In the former case, $P$ is said to be a {\em proper indecomposable past
set} (PIP), and in the latter case $P$ is said to be a {\em
terminal indecomposable past set} (TIP). These two classes of IPs are disjoint.

Another useful technical definition is the following. The {\em common past} of a given set $S\subset
M$ is defined by \[\downarrow S:=I^{-}(\{p\in M:\;\; p\ll
q\;\;\forall q\in S\}).\]
The corresponding
definitions for {\em future sets}, IFs, TIFs,
PIFs, {\em common future}, etc., are obtained just by interchanging the roles of past and
future, and will always be understood.

The set of all IPs constitutes the so-called {\it future $c$-completion} of $(M,g)$, denoted by $\hat{M}$. If $(M,g)$ is strongly causal, then $M$ can naturally be viewed as a subset of $\hat{M}$ by identifying every point $p\in M$ with its respective PIP, namely $I^-(p)$. Indeed, it is well-known \cite{BeemGlobalLorentzianGeometry1996} that a strongly causal $(M,g)$ is {\em distinguishing}, i.e.,
\[
\forall p,q \in M, \, I^{\pm}(p)=I^{\pm}(q) \Longrightarrow p=q,
\]
so that the inclusion $i: p\in M \hookrightarrow I^-(p) \in \hat{M}$ is indeed one-to-one.

{\em Throughout this paper, unless otherwise explicitly stated, we shall assume that $(M,g)$ is strongly causal.}

The {\it future $c$-boundary} $\hat{\partial} M$ of $(M,g)$ is defined as the set of all its TIPs. Therefore, upon identifying $M$ with its image in $\hat{M}$ by the natural inclusion as outlined above,
\[
\hat{\partial} M \equiv \hat{M} \setminus M.
\]
The definitions of \textit{past $c$-completion} $\check{M}$ and {\it past $c$-boundary} $\check{\partial}M$ of $(M,g)$ are readily defined in a time-dual fashion using IFs.

Now, one would expect that the {\em (total) $c$-completion} of $(M,g)$ could be obtained by ``joining together'' in some suitable sense the past and future $c$-completions. However, naive attempts to do so meet with some surprisingly thorny technical issues (again, see \cite{Floresfinaldefinitioncausal2011} and references therein for a discussion). To circumvent these issues, certain clever manipulations are required which were first carried out by Szabados \cite{SzabadosCausalboundarystrongly1988, SzabadosCausalboundarystrongly1989}, with important improvements by Marolf and Ross \cite{Marolfnewrecipecausal2003}.

First, we introduce the so-called {\em Szabados relation} (or \textit{$S$-relation}) between IPs and IFs: an IP $P$ and an IF $F$ are {\em S-related}, $P\sim_{S}F$, if $P$ is a maximal IP inside
$\downarrow F$ and $F$ is a maximal IF inside $\uparrow P$. In particular, for any $p \in M$, it can be shown that $I^-(p) \sim_{S} I^+(p)$. Then, we have the following

\begin{definition}\label{d1} The {\em (total) c-completion} $\overline{M}$ is
formed by all
the pairs $(P,F)$ formed by $P\in \hat{M}\cup\{\emptyset\}$ and
$F\in \check{M}\cup\{\emptyset\}$ such that either
\begin{itemize}
\item[i)] both $P$ and $F$ are non-empty and $P\sim_{S}F$; or
\item[ii)] $P=\emptyset$, $F \neq \emptyset$ and there is no $P'\neq \emptyset$ such that $P'\sim_{S}F$; or
\item[iii)] $F=\emptyset$, $P \neq \emptyset$ and there is no $F'\neq\emptyset$ such that $P\sim_{S}F'$.
\end{itemize}
The original manifold $M$ is then identified with the set $\{(I^{-}(p),I^{+}(p)): p\in M\}$, and the {\em c-boundary} is defined as $\partial M\equiv\overline{M}\setminus M$.
\end{definition}
\begin{remark}\label{r0} \emph{We will systematically use the following fact, which is easy to check: given any IP $P \neq \emptyset$ (resp. any IF $F \neq \emptyset$), there always exists $F \in \check{M}\cup\{\emptyset\}$ (resp. $P \in \hat{M}\cup\{\emptyset\}$) such that $(P,F) \in \overline{M}$.
}
\end{remark}

Having defined the set structure of the $c$-completion, the next step is to extend the chronological relation in $(M,g)$ to the $c$-completion as follows:

\begin{equation}
(P,F)\ll
(P',F')\;\;\iff\;\; F\cap P'\neq\emptyset. \label{eq:7}
\end{equation}

Moreover, two pairs $(P,F)$,
$(P',F')$ are {\em causally related}, denoted by $(P,F)\leq (P',F')$, if
$F'\subset F$ and $P\subset P'$. Finally, these pairs said to be {\em
horismotically related} if they are causally, but not
chronologically related\footnote{\label{fn:1}These notions for causal and horismotical relations are not totally satifactory in general (see for instance \cite[Section 3.2]{Marolfnewrecipecausal2003} for more discussion on this issue). However, under the additional hypotheses we will assume later in this paper, they can be adopted without problems.}.

In concrete applications, it is also important to introduce a suitable topology on the $c$-completion $\overline{M}$.
The original topology considered in \cite{GerochIdealPointsSpaceTime1972}, although Hausdorff, was plagued by a number of technical problems and failed to yield sensible results even in simple cases. (An extensive discussion of these issues with further pertinent references on alternative proposed topologies on the $c$-completion can be found in Ref. \cite{Floresfinaldefinitioncausal2011}.) Our choice here, for reasons which are exhaustively discussed in \cite{Floresfinaldefinitioncausal2011}, is the so-called {\em chronological topology} on the $c$-completion.

The chronological topology is more conveniently defined by means of a so-called {\em limit operator} (see for instance \cite{FloresHausdorffseparabilityboundaries2016}). Let us briefly recall this notion here. Let $X$ be any set. Denote by ${\cal S}_X$ the set of all (infinite) sequences in $X$ (including their subsequences), and by $\mathbb{P}(X)$ the power set of all subsets of $X$. A {\em limit operator} on $X$ is any mapping $L:{\cal S}_X \rightarrow \mathbb{P}(X)$ such that if $\sigma \in {\cal S}_X$ is a sequence in $X$ and $\sigma '$ is a subsequence of $\sigma$, then $L(\sigma) \subset L(\sigma ')$. If $L$ is one such limit operator, the associated {\em derived topology} $\tau _{L}$ is defined via its closed sets: by definition, a subset $C \subset X$ is closed in $\tau _L$ if and only if $L(\sigma) \subset C$ for any sequence $\sigma$ of elements of $C$. The following facts about the topological space $(X,\tau _L)$ thus defined are readily verified.
\begin{itemize}
\item[LO1)] If $x \in L(\sigma)$ for a sequence $\sigma \in {\cal S}_X$, then $\sigma$ converges to $x$ with respect to $\tau_L$.
\item[LO2)] The topological space $(X,\tau _L)$ is {\em sequential}, i.e., a set $C \subset X$ is closed if and only if it is sequentially closed therein.
\end{itemize}
A limit operator $L$ is said of {\em first order} if the converse of (LO1) also holds, i.e. if the following equivalence is satisfied:
  \begin{equation}
x\in L(\sigma)\iff\sigma \hbox{ converges to $x$ with respect to $\tau_{L}$}.\label{eq:8}
\end{equation}

Recall also that with any sequence $\{A_n\}_{n \in \mathbb{N}}$ of subsets of $X$ we can associate the {\em Hausdorff inferior and superior limits} of sets as
\begin{eqnarray}
\mathrm{LI}(A_{n})\equiv
\liminf(A_{n})&:=&\cup_{n=1}^{\infty}\cap_{k=n}^{\infty}A_{k} \\
\mathrm{LS}(A_{n})\equiv
\limsup(A_{n})&:=&\cap_{n=1}^{\infty}\cup_{k=n}^{\infty}A_{k}.
\end{eqnarray}
Clearly one always has $\mathrm{LI}(A_{n}) \subset \mathrm{LS}(A_{n})$. Moreover, if $\{A_m\}$ is any subsequence,
\[
\mathrm{LI}(A_{n}) \subset \mathrm{LI}(A_{m}) \subset \mathrm{LS}(A_{m}) \subset \mathrm{LS}(A_{n}),
\]
Simple examples show that these inclusions are usually strict.

Next, we define the {\em future chronological limit operator} $\hat{L}$ on $\hat{M}$ as follows. Given a sequence $\sigma=\{P_{n}\}_{n}\subset \hat{M}$ of IPs and $P\in \hat{M}$, we set

\begin{equation}
\label{eq:3}
P\in \hat{L}(\sigma)\iff \left\{\begin{array}{l}
                                 P\subset \mathrm{LI}(\sigma)\\
                                 P \hbox{ is a maximal IP in }\mathrm{LS}(\sigma).
\end{array}\right.
\end{equation}
Again, by simply interchanging past and future sets we may analogously define the {\it past chronological limit operator} $\check{L}$ on $\check{M}$. Then, the {\em future (resp. past) chronological topology on $\hat{M}$ (resp. $\check{M}$)} is the derived topology associated to the limit operator $\hat{L}$ (resp. $\check{L}$).

\smallskip

We are now ready to define the chronological topology on the full $c$-boundary. To this end, define a limit operator $L$ on $\overline{M}$ as follows: given a sequence
$\sigma=\{(P_{n},F_{n})\}\subset\overline{M}$, put

\begin{equation}
\label{eq:4}
(P,F)\in L(\sigma)\iff \left\{
  \begin{array}{l}
    P\in \hat{L}(P_{n}) \hbox{ if $P\neq \emptyset$}\\
F\in\check{L}(F_n) \hbox{ if $F\neq \emptyset$}.
  \end{array}
\right.
\end{equation}
By definition, the {\em chronological topology on $\overline{M}$} is the derived topology $\tau_L$ associated to the limit operator $L$ defined in (\ref{eq:4}).

\begin{remark}\emph{
    In general, the limit operator defined in (\ref{eq:4}) is not of first order (see, for instance, \cite[Figure 7]{FloresHausdorffseparabilityboundaries2016}). However, this property can be proven to hold under some mild and general hypotheses \cite{FloresGromovCauchycausal2013, AkeSpacetimecoveringscasual2017}, valid in most cases of physical interest. Accordingly, throughout this paper we will implicitly assume that this property holds, and so, the equivalence (\ref{eq:8}) will be systematically used.
}
\end{remark}

The following result, whose proof is given in \cite{Floresfinaldefinitioncausal2011}, summarizes the key properties of the chronological topology.

\begin{theorem}
\label{thm:mainc-completion}
Let $(M,g)$ be a strongly causal spacetime and consider its associated $c$-completion $\overline{M}$ endowed with the chronological relations and chronological topology defined in (\ref{eq:7}) and (\ref{eq:4}), respectively. Then, the following statements hold.

\begin{enumerate}[label=(\roman*)]
\item \label{item:mismatopologia}The inclusion $M\hookrightarrow \overline{M}$ is continuous, with an open dense image. In particular, the topology induced on $M$ by the chronological topology on $\overline{M}$ coincides with the original, manifold topology.
\item The chronological topology is second-countable and $T_1$-separable, but not necessarily Hausdorff.
\item \label{item:chains} Let $\{x_n\}\subset M$ be a {\em future (resp. past) chain}, i.e., a sequence satisfying that $x_n\ll x_{n+1}$ (resp. $x_{n+1}\ll x_{n}$) for all $n$. Then,
\[
\begin{array}{c}
L(\left\{ x_{n} \right\})=\left\{ (P,F)\in \overline{M}: P=I^-(\left\{ x_n \right\}) \right\}
\\
\hbox{(resp. $L(\left\{ x_{n} \right\})=\left\{ (P,F)\in \overline{M}: F=I^+(\left\{ x_n \right\}) \right\})$.}
\end{array}
\]

\item \label{item:c-completioncompleta} The c-completion is {\em complete} in the following sense: given any (future or past) chain $\{x_n\}\subset M$, necessarily $L(\left\{ x_{n} \right\}) \neq \emptyset$, i.e. any (future or past) chain converges in $\overline{M}$ (cf. (\ref{eq:8})).

\item The sets $I^{\pm}((P,F),\overline{M})$ are open for all $(P,F)\in \overline{M}$.
\end{enumerate}

 \end{theorem}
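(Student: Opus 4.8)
\textbf{Proof proposal for Theorem \ref{thm:mainc-completion}.}

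The plan is to prove the five items in essentially the order stated, since each relies on basic structural facts about IPs/IFs and the Hausdorff set-limits, and the later items build on the earlier ones. For item \ref{item:mismatopologia}, the first thing I would do is establish that the inclusion $i\colon M \hookrightarrow \overline{M}$ sending $p \mapsto (I^-(p), I^+(p))$ is continuous with respect to $\tau_L$: since $(\overline{M},\tau_L)$ is sequential (fact LO2), it suffices to check sequential continuity, so I would take $p_n \to p$ in the manifold topology and verify $(I^-(p),I^+(p)) \in L(\{(I^-(p_n),I^+(p_n))\})$ using that $I^\pm$ behave well under limits in a strongly causal spacetime — concretely, that $I^-(p) \subset \mathrm{LI}(I^-(p_n))$ and $I^-(p)$ is maximal in $\mathrm{LS}(I^-(p_n))$, which is where strong causality (hence the distinguishing property) is used to rule out a strictly larger IP absorbing the limit superior. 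For openness and density of the image, density is immediate because every IP is a countable increasing union of PIPs (so any $(P,F) \in \overline{M}$ is a limit of points of $M$), and openness together with the statement that the subspace topology on $M$ is the manifold topology requires showing that a manifold-convergent sequence whose limit operator value lands in $M$ forces convergence in the usual sense; here I would invoke that the chronological topology restricted to $M$ refines the manifold topology, combined with the fact that $(M,g)$ is locally compact Hausdorff and strongly causal so the two topologies on $M$ must agree.

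For the second item, second-countability follows from the existence of a countable dense subset of $M$ (a countable family of PIPs generating all IPs as increasing unions) together with the sequential nature of $\tau_L$; $T_1$-separability amounts to checking that singletons $\{(P,F)\}$ are closed, i.e. that no nonconstant sequence has $(P,F)$ as an ``extra'' limit point forced by the set-limits, which follows because $P \in \hat{L}(\sigma)$ pins down $P$ as a specific maximal IP in $\mathrm{LS}$; failure of Hausdorffness I would simply illustrate by reference to the known examples (e.g. the classic two-TIP example) rather than reprove it. Items \ref{item:chains} and \ref{item:c-completioncompleta} are the heart of the matter: for a future chain $\{x_n\}$ with $x_n \ll x_{n+1}$, the set $P_\infty := I^-(\{x_n\}) = \bigcup_n I^-(x_n)$ is an IP (a countable increasing union of PIPs is indecomposable), and since the sets $I^-(x_n)$ increase we get $\mathrm{LI}(I^-(x_n)) = \mathrm{LS}(I^-(x_n)) = P_\infty$, so $P_\infty$ is trivially the maximal IP in the limit superior, giving $P_\infty \in \hat{L}(\{x_n\})$; then Remark \ref{r0} provides an $F$ with $(P_\infty, F) \in \overline{M}$, and by the definition (\ref{eq:4}) of $L$ on $\overline{M}$ (where the $F$-component condition is vacuous when one only controls the $P$-side through a cofinal argument) one concludes $L(\{x_n\}) \neq \emptyset$ and has the claimed form; the reverse inclusion — that \emph{every} limit $(P,F)$ of the chain has $P = P_\infty$ — uses first-orderness of $\hat{L}$ on $\hat{M}$ (the monotone case being well-behaved) to force $P = P_\infty$, after which $F$ ranges over all admissible partners. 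Completeness is then just the observation that $\hat L(\{x_n\})$ is nonempty for a chain, which the increasing-union argument already gave.

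For the final item, I would fix $(P,F) \in \overline{M}$ and show $I^+((P,F),\overline{M})$ is open by proving its complement is sequentially closed: if $(P_n,F_n) \notin I^+((P,F),\overline{M})$, i.e. $F \cap P_n = \emptyset$ for all $n$ (unwinding (\ref{eq:7})), and $(P_n,F_n) \to (P',F')$, then I want $F \cap P' = \emptyset$; since $P' \subset \mathrm{LI}(P_n)$, any point $q \in F \cap P'$ would lie in $P_n$ for all large $n$, contradicting $F \cap P_n = \emptyset$ — this uses only that $F$ is open and that chronological relations are ``open'' in the spacetime, so it is genuinely short. \textbf{The main obstacle} I anticipate is item \ref{item:mismatopologia}, specifically the claim that the subspace topology $\tau_L|_M$ coincides with the manifold topology rather than merely refining it: ruling out the possibility that $\tau_L|_M$ is strictly finer requires a careful argument that any $\tau_L$-convergent sequence in $M$ whose limit lies in $M$ also converges in the manifold topology, which hinges delicately on strong causality (one builds, from a putative non-manifold-convergent subsequence, either a violation of the distinguishing property or an ``escaping'' sequence whose set-limits cannot equal $I^\pm$ of an interior point), and on the local structure of $I^\pm$ near a point. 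Everything else reduces to the monotonicity of the set-limits along chains and the sequential character of $\tau_L$.
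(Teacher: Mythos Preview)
The paper does not actually prove this theorem. Immediately before the statement the authors write ``The following result, whose proof is given in \cite{Floresfinaldefinitioncausal2011}, summarizes the key properties of the chronological topology,'' and no argument follows. So there is nothing in the paper to compare your proposal against; Theorem \ref{thm:mainc-completion} is quoted from the Flores--Herrera--S\'anchez reference and used as a black box.

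That said, your sketch is broadly along the lines one finds in the cited source, and your argument for item (v) is essentially complete (the case $P'=\emptyset$ being trivial). Your identification of the subspace-topology coincidence in item \ref{item:mismatopologia} as the most delicate point is accurate. However, there is a genuine gap in your treatment of item \ref{item:chains}: the claim that ``the $F$-component condition is vacuous when one only controls the $P$-side'' is not correct. By (\ref{eq:4}), the condition $F\in\check{L}(\{I^+(x_n)\})$ is vacuous only when $F=\emptyset$. For a future chain the sets $I^+(x_n)$ decrease, so $\mathrm{LI}=\mathrm{LS}=\bigcap_n I^+(x_n)$, and for the inclusion $\supseteq$ you must verify that any $F$ with $P_\infty\sim_S F$ is a maximal IF inside $\bigcap_n I^+(x_n)$; this uses that $\{q:P_\infty\subset I^-(q)\}=\bigcap_n I^+(x_n)$ and that any IF contained in this intersection is automatically contained in $\uparrow P_\infty$ (because an IF is open, so each of its points has a strict predecessor still in the IF). For the inclusion $\subseteq$ you must separately rule out limits of the form $(\emptyset,F)$, by showing that any such $F$ would then be $S$-related to $P_\infty$, contradicting $(\emptyset,F)\in\overline{M}$. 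None of this is deep, but it is not vacuous, and your sketch as written does not supply it.
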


 Let $\gamma:[a,b) \rightarrow M$ be an arbitrary future(resp. past)-directed causal curve. We want to extend the usual notion of {\em future (resp. past) endpoint} of $\gamma$ to points on $\overline{M}$. Concretely, we say that a pair $(P,F) \in \overline{M}$ is a {\em future (resp. past) endpoint} of $\gamma$ if for any increasing sequence $\left\{ t_{n} \right\} \subset [a,b)$ with $t_{n}\nearrow b$, we have $(P,F) \in L(\left\{ \gamma (t_{n}) \right\})$.

 The following immediate consequence of Theorem \ref{thm:mainc-completion} describes when a point $(P,F)\in\overline{M}$ is an endpoint of a causal curve, and will be very important in later sections. We shall often deal only with future endpoints, since the corresponding statements for past endpoints are easy to obtain from time-duality and will always be understood.

\begin{corollary}
\label{cor:endpoints}
Suppose $(M,g)$ is strongly causal spacetime with $c$-completion $\overline{M}$ as above. Let $\gamma:[a,b) \rightarrow M$ be a future-directed causal curve in $M$ and $(P,F) \in \overline{M}$. Then, the following statements hold.
\begin{enumerate}[label=(\roman*)]
\item \label{item:endpoints1}If $P=I^-(p)$ and $F=I^+(p)$ for some $p \in M$, i.e., if $(P,F)$ is a point of $M$ via its natural inclusion, then $(P,F)$ is a future endpoint of $\gamma$ (in the extended sense) if, and only if, for any increasing sequence $\left\{ t_{n} \right\} \subset [a,b)$ with $t_{n}\nearrow b$, we have $\gamma (t_{n}) \rightarrow p$ in $M$. (In order words, $(P,F)$ is a future endpoint of $\gamma$ in the extended sense iff $p$ is an endpoint in the ordinary, spacetime sense.)
\item \label{item:endpoints2} If $\gamma$ is a future (resp. past) directed {\em timelike} curve, then
\begin{equation}
\label{keyeq1}
(P,F)\in\overline{M} \mbox{ is a future endpoint of $\gamma$} \Longleftrightarrow P=I^-(\gamma).
\end{equation}
Moreover, when $\gamma$ is timelike, then it always has some endpoint on $\overline{M}$.
\item \label{item:endpoint3}If the future-directed causal curve $\gamma$ has $(P,F)$ as future endpoint with $P\neq \emptyset$, then $P=I^{-}(\gamma)$.
\end{enumerate}
\end{corollary}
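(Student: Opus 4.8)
The plan is to derive Corollary \ref{cor:endpoints} directly from Theorem \ref{thm:mainc-completion}, handling its three items by reducing each to statements about the limit operator $L$ evaluated on sequences along $\gamma$.

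For item \ref{item:endpoints1}, suppose first that $(P,F)=(I^-(p),I^+(p))$ is a point of $M$. By Theorem \ref{thm:mainc-completion}\ref{item:mismatopologia}, the chronological topology induces the manifold topology on $M$, and the inclusion $M\hookrightarrow\overline M$ is continuous; since $\overline M$ is second-countable, hence sequential, convergence in $M$ is equivalent to convergence in $\overline M$ to the same (interior) point. Because (as remarked in the excerpt) the limit operator $L$ is being assumed of first order, $(P,F)\in L(\{\gamma(t_n)\})$ is equivalent to $\gamma(t_n)\to(P,F)$ in $\tau_L$, which by the above is equivalent to $\gamma(t_n)\to p$ in $M$. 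Quantifying over all increasing sequences $t_n\nearrow b$ gives precisely the stated equivalence with "$p$ is a future endpoint of $\gamma$ in the ordinary sense."

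For item \ref{item:endpoints2}, let $\gamma$ be future-directed timelike. Then for any $t_n\nearrow b$ the sequence $\{\gamma(t_n)\}$ is a future chain, so Theorem \ref{thm:mainc-completion}\ref{item:chains} tells us $L(\{\gamma(t_n)\})=\{(P,F)\in\overline M: P=I^-(\{\gamma(t_n)\})\}$. The key observation is that $I^-(\{\gamma(t_n)\})=I^-(\gamma)$: this holds because $\bigcup_n I^-(\gamma(t_n))=I^-(\gamma)$ (every point chronologically below some point of $\gamma$ lies below some $\gamma(t)$, and by density of the $t_n$ and openness of $I^-$, below some $\gamma(t_n)$), and this union is independent of the chosen increasing sequence with $t_n\nearrow b$. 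Hence $(P,F)$ is a future endpoint of $\gamma$ iff $(P,F)\in L(\{\gamma(t_n)\})$ for one (equivalently every) such sequence, iff $P=I^-(\gamma)$, giving \eqref{keyeq1}. Existence of an endpoint then follows from Theorem \ref{thm:mainc-completion}\ref{item:c-completioncompleta} (completeness: the future chain $\{\gamma(t_n)\}$ has $L(\{\gamma(t_n)\})\neq\emptyset$), together with Remark \ref{r0} to produce a companion $F$ when needed.

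For item \ref{item:endpoint3}, let $\gamma$ be future-directed causal (not necessarily timelike) with future endpoint $(P,F)$, $P\neq\emptyset$. The idea is to push $\gamma$ slightly into the chronological past: pick a future-directed timelike curve (or a chain) $\{x_n\}$ with $x_n\ll\gamma(t_n)$ and $x_n\ll x_{n+1}$, chosen so that $I^-(\{x_n\})=I^-(\gamma)$ — this is possible because $\gamma$ causal implies $I^-(\gamma)$ is an IP (indeed a TIP or PIP) and one can select a countable cofinal chain of chronological predecessors of points of $\gamma$ whose union of pasts recovers $I^-(\gamma)$. Since $P\subset\mathrm{LI}(I^-(\gamma(t_n)))$ and $P$ is a maximal IP in $\mathrm{LS}(I^-(\gamma(t_n)))$ by the definition of $\hat L$ in \eqref{eq:3}, and since for causal $\gamma$ one checks $\mathrm{LI}(I^-(\gamma(t_n)))=\mathrm{LS}(I^-(\gamma(t_n)))=I^-(\gamma)$ (the monotonicity $I^-(\gamma(t_n))\subset I^-(\gamma(t_{n+1}))$ forces both Hausdorff limits to equal the union $I^-(\gamma)$), it follows that $P\subset I^-(\gamma)$ and $P$ is a maximal IP in $I^-(\gamma)$; but $I^-(\gamma)$ is itself an IP, so the only maximal IP it contains is itself, whence $P=I^-(\gamma)$.

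The main obstacle is the verification, in items \ref{item:endpoints2} and \ref{item:endpoint3}, that the Hausdorff inferior and superior limits of $\{I^-(\gamma(t_n))\}$ both collapse to $I^-(\gamma)$ and that this is independent of the chosen increasing sequence $t_n\nearrow b$. This rests on the monotonicity of the pasts along $\gamma$ (so the sequence of sets is increasing, making $\mathrm{LI}=\mathrm{LS}=\bigcup_n I^-(\gamma(t_n))$) together with the elementary fact that $\bigcup_n I^-(\gamma(t_n))=I^-(\gamma)$ for any sequence exhausting $[a,b)$; once these are in place the rest is a direct application of Theorem \ref{thm:mainc-completion}. The other mild subtlety is that $I^-(\gamma)$ is an IP for \emph{causal} (not merely timelike) inextendible-from-the-right $\gamma$, which one invokes to conclude that a maximal IP inside it must be all of it.
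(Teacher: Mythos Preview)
Your proposal is correct and follows essentially the same route as the paper: item (i) via first-order $L$ plus Theorem~\ref{thm:mainc-completion}\ref{item:mismatopologia}, item (ii) via future chains and Theorem~\ref{thm:mainc-completion}\ref{item:chains}--\ref{item:c-completioncompleta}, and item (iii) by unpacking the definition of $\hat L$ and using that $I^-(\gamma)$ is an IP. The one cosmetic difference is that in (iii) you announce a detour (``push $\gamma$ slightly into the chronological past'' via a chain $\{x_n\}$) which you then never use; the argument you actually give---monotonicity of $\{I^-(\gamma(t_n))\}$ forcing $\mathrm{LI}=\mathrm{LS}=I^-(\gamma)$, hence the unique maximal IP therein is $I^-(\gamma)$ itself---is precisely the paper's, only with the equality $\mathrm{LI}=\mathrm{LS}$ made explicit where the paper is content with the two inclusions $P\subset\mathrm{LI}\subset I^-(\gamma)\subset\mathrm{LS}$.
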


\begin{proof}
$(i)$ By definition, $(P,F)$ is a future endpoint of $\gamma$ (in the extended sense) if, and only if, for any increasing sequence $\left\{ t_{n} \right\} \subset [a,b)$ with $t_{n}\nearrow b$,
\[
(P,F) \in L(\left\{ \gamma (t_{n}) \right\}) \Longleftrightarrow (I^-(\gamma (t_{n})),I^+(\gamma (t_{n}))) \rightarrow (P,F) \mbox{ in $\overline{M}$}\Longleftrightarrow \gamma (t_{n}) \rightarrow p \mbox{ in $M$},
\]
where we have used the definition of endpoint in the extended sense for the first equivalence, the fact that the limit operator is first-order (cf. (\ref{eq:8})) to obtain the second equivalence, and the fact that the induced topology on $M$ is the manifold topology according to item \ref{item:mismatopologia} of Thm. \ref{thm:mainc-completion} to get the third equivalence.
\\
$(ii)$ If $\gamma$ is timelike, then for any increasing sequence $\left\{ t_{n} \right\} \subset [a,b)$ with $t_{n}\nearrow b$ the corresponding sequence $\left\{ \gamma (t_{n})\right\}$ is a future chain in $M$, (\ref{keyeq1}) follows immediately from items \ref{item:chains} and \ref{item:c-completioncompleta} of Thm. \ref{thm:mainc-completion}. \\
$(iii)$ Suppose $(P,F)$ is a future endpoint of the causal curve $\gamma$ with $P\neq \emptyset$, and take again any increasing sequence $\left\{ t_{n} \right\} \subset [a,b)$ with $t_{n}\nearrow b$, so that $(P,F) \in L(\left\{ \gamma (t_{n}) \right\})$. According to the definition of the limit operator $L$, Eq. (\ref{eq:4}), we have that
\[
P \in \hat{L}(\left\{ I^-(\gamma (t_{n}) \right\}),
\]
which in turn means, according to Eq. (\ref{eq:3}), that
\[
P\subset \mathrm{LI}(\left\{ I^-(\gamma (t_{n}) \right\}) \subset I^-(\gamma).
\]
Now, clearly $I^-(\gamma)\subset \mathrm{LS}(\left\{ I^-(\gamma (t_{n}) \right\})$, and the maximality of $P$ as an IP
therein, as required by (\ref{eq:3}), now yields that $P=I^-(\gamma)$, as desired.
\end{proof}

\begin{remark}\label{r}{\em The situation when $\gamma$ is causal but not timelike is more involved. The converse of item \ref{item:endpoint3} is no longer true (see \cite[Section 3.5]{Floresfinaldefinitioncausal2011}). Indeed, inextendible causal curves may not have endpoints in the c-completion. It is therefore natural to consider the following definition (compare with  \cite[Definition 3.33]{Floresfinaldefinitioncausal2011}).}
\end{remark}
\begin{definition}\label{def:properly-causal}
  A $c$-completion $\overline{M}$ is said to be {\em properly causal} if any future or past-inextendible causal curve in $M$ has an endpoint on $\overline{M}$. \end{definition}
\noindent Some conditions ensuring the properly causal property were summarized in \cite[Theorem 3.35]{Floresfinaldefinitioncausal2011}. For instance, a spacetime is properly causal if it is globally hyperbolic. In order to obtain more general results, we will take a special look at \cite[Theorem 3.35 (iii)]{Floresfinaldefinitioncausal2011}. In fact, that theorem motivates the following definition.

\begin{definition} A future-inextendible (resp. past-inextendible) causal curve $\gamma:[a,b)\rightarrow M$ is {\em future-regular} (resp. {\em past-regular}) if
 \[
 \uparrow \gamma = \uparrow I^{-}(\gamma)\quad
    \left({\rm resp.} \downarrow \gamma = \downarrow I^{+}(\gamma) \right).
\]
 \end{definition}
\begin{remark}\label{r2}{\em
  Clearly, the inclusion $\uparrow \gamma \subset \uparrow I^{-}(\gamma)$ holds for any future-inextendible causal curve $\gamma$. Moreover, any future-inextendible timelike curve is future-regular (with time-dual analogous statements). In a globally hyperbolic spacetime the common future $\uparrow I^-(\gamma)$ of any future-directed causal curve $\gamma$ is empty. In particular, any future-inextendible causal curve is future-regular. Analogously, past-inextendible causal curves are always past-regular in a globally hyperbolic spacetime. On the other hand, there are very simple examples where future(past)-regularity fails. For instance, take the flat $2d$ Minkowski spacetime $(\mathbb{R}^2, -dt^2+dx^2)$ and delete the spacelike half-axis $t=0,x\geq 0$. Then, the null geodesic generator of the past of the origin $(0,0)$ on the side of positive $x$ is not future-regular.}
\end{remark}
\noindent The following result shows that only a few very specific causal curves can fail to be regular (see also \cite[Prop. 3.2]{Florescausalboundarywavetype2008}.

  \begin{proposition}
    Let $\gamma:[a,b)\rightarrow M$ be a future-inextendible (resp. past-inextendible) causal curve. If there is no $c\in [a,b)$ such that $\gamma|_{(c,b)}$ is a null geodesic ray, then $\gamma$ is future(resp. past)-regular.
  \end{proposition}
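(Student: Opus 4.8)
The plan is to prove the contrapositive in a slightly reorganized form: I want to show that if $\gamma$ is \emph{not} future-regular, then $\gamma|_{(c,b)}$ is a null geodesic ray for some $c\in[a,b)$. By Remark~\ref{r2} we already know $\uparrow\gamma\subset\uparrow I^-(\gamma)$ always holds, so the failure of future-regularity means there exists a point $q\in M$ with $q\in\uparrow I^-(\gamma)\setminus\uparrow\gamma$; that is, $q$ chronologically precedes every point of $I^-(\gamma)$ but does \emph{not} chronologically precede every point of the image of $\gamma$. First I would unpack both conditions pointwise: $q\in\uparrow I^-(\gamma)$ means $q\ll p$ for all $p\in M$ with $p\ll\gamma(t)$ for some $t$, i.e. $q\ll p$ whenever $p\in I^-(\gamma(t))$ for some $t$; while $q\notin\uparrow\gamma$ means there is at least one parameter value $t_0$ with $q\not\ll\gamma(t_0)$. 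Since $\gamma$ is future-directed and future-inextendible, for $t>t_0$ we have $\gamma(t_0)\in I^-(\gamma(t))$ (using that any later point on a causal curve is either in the chronological future of an earlier one, or the segment between them is a null geodesic), so for every such $t$ the first condition forces $q\ll\gamma(t)$ — but the second forbids $q\ll\gamma(t_0)$. The key tension is thus: $\gamma(t_0)$ is the earliest (or among the latest ``bad'') points that $q$ fails to chronologically precede, while it lies in the past of all strictly later points.

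The central step is to convert this into a statement about the structure of $\gamma$ near $t_0$. I would argue that for all $t$ in some interval $(c,b)$ with $c\ge t_0$, the causal curve $\gamma|_{[c,t]}$ must fail to be \emph{deformable to a timelike curve}, because otherwise one could perturb $\gamma$ slightly to get $\gamma(c)\ll\gamma(t)$ and then chain $q\ll\gamma(c)$... — actually the cleaner route is: I claim that for every $t>t_0$, while $q\ll\gamma(t)$, there is \emph{no} $s$ with $t_0<s$ and $\gamma(t_0)\ll\gamma(s)$, for otherwise transitivity of $\ll$ through the achronal-looking structure would combine with $q\ll$ (points near $\gamma(t_0)$ in its past, which $q$ does precede) to yield $q\ll\gamma(t_0)$, a contradiction. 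Hence $\gamma(t_0)$ is not chronologically related to any later point $\gamma(s)$, which by the standard causality lemma (a causal curve along which two points are causally but not chronologically related must be, up to reparametrization, a null geodesic without conjugate points — see \cite{ONeillSemiRiemannianGeometryApplications1983,HawkingLargeScaleStructure1975}) forces $\gamma|_{[t_0,b)}$ to be a null geodesic. Taking $c=t_0$ (or a slightly smaller value if one wants $\gamma|_{(c,b)}$ to be a ray in the open-interval sense) gives the desired conclusion, and a null geodesic that is future-inextendible is a null geodesic ray.

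The main obstacle I anticipate is handling the case distinctions in ``$\gamma(t_0)\ll\gamma(t)$ vs.\ connected by a null geodesic'' cleanly, and in particular ruling out the possibility that $q$ fails to precede $\gamma(t_0)$ only because $\gamma(t_0)$ is itself a ``null-limit'' point while some earlier $\gamma(t_1)$ with $t_1<t_0$ also fails — i.e. showing that the set $\{t : q\not\ll\gamma(t)\}$ is a (possibly degenerate) initial segment and that its supremum behaves well. I would address this by letting $c:=\sup\{t\in[a,b): q\not\ll\gamma(t)\}$ and showing (a) $c<b$ (else $q\notin\uparrow I^-(\gamma)$ since $I^-(\gamma)=\bigcup_t I^-(\gamma(t))$ and... one needs $q$ to precede points arbitrarily far along), (b) for $t>c$, $q\ll\gamma(t)$ but $\gamma(t)\notin I^+(\gamma(c))$ — this last being the crux, proved by the transitivity/perturbation argument above using openness of $I^-$ — and hence (c) $\gamma|_{(c,b)}$ is a null geodesic ray by the classical characterization. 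I would also double-check the endpoint/openness subtleties using that $I^-((P,F),\overline M)$ is open (Theorem~\ref{thm:mainc-completion}(v)) if it turns out cleaner to run the argument with the ideal point $(I^-(\gamma),F)$ in place of $q$, though I expect the purely spacetime-level argument to suffice.
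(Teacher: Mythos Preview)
Your contrapositive strategy is sound in principle, but you have the definition of the common future reversed. In this paper $\uparrow S := I^+(\{p \in M : s \ll p \text{ for all } s \in S\})$ is the chronological \emph{future} of the set of upper bounds of $S$; thus $q \in \uparrow I^-(\gamma)$ means there exists $p \ll q$ with $I^-(\gamma) \subset I^-(p)$ --- the opposite of your reading ``$q \ll p$ for all $p \in I^-(\gamma)$''. With your interpretation the condition is vacuous: a point $q$ chronologically preceding every element of the past set $I^-(\gamma)$ would satisfy $I^-(q) \subset I^-(\gamma) \subset I^+(q)$, forcing $I^-(q) = \emptyset$ by chronology, which never happens in a spacetime. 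So as written, the argument does not get off the ground.

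With the correct definition your idea goes through cleanly, and the sup-argument you anticipated is unnecessary. Pick $q \in \uparrow I^-(\gamma) \setminus \uparrow \gamma$ and $p \ll q$ with $I^-(\gamma) \subset I^-(p)$; since $q \notin \uparrow \gamma$, this $p$ cannot be an upper bound of $\gamma$, so $\gamma(t_0) \not\ll p$ for some $t_0$. If $\gamma(t_0) \ll \gamma(s)$ for any $s > t_0$ then $\gamma(t_0) \in I^-(\gamma) \subset I^-(p)$, a contradiction; hence $\gamma|_{[t_0,b)}$ is achronal and therefore a null geodesic ray. For comparison, the paper argues directly rather than by contrapositive: the hypothesis produces a sequence $t_n \nearrow b$ with $\gamma(t_n) \ll \gamma(t_{n+1})$, and concatenating timelike arcs between consecutive points gives a timelike curve $\eta$ with $I^-(\eta) = I^-(\gamma)$ and the same upper bounds (hence the same common future) as $\gamma$; since timelike curves are automatically future-regular, so is $\gamma$. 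Both routes are short once set up correctly; the paper's avoids unpacking $\uparrow$ altogether.
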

  \begin{proof}
     Let us focus on the future case (the past case is proved analogously). From the hypothesis, there exists a sequence $\left\{ t_{n} \right\}$ with $t_{n}\nearrow b$ such that $\gamma(t_{n})\ll \gamma(t_{n+1})$ for all $n$. Therefore, we can construct a timelike curve $\eta$ by concatenating timelike curves connecting consecutive points of the sequence. By construction, $\eta$ and $\gamma$ have the same past and the same common future, and the result follows from the fact that any timelike curve $\gamma$ is future-regular.
  \end{proof}

\noindent On the other hand, we shall often need to work with null geodesic rays; to make them treatable, we introduce the following definition:
  \begin{definition}
\label{def:sproperlycausal} \noindent A c-completion $\overline{M}$ is {\em strongly properly causal} if any future-inextendible (resp. past-inextendible) null geodesic ray is future(resp. past)-regular.
  \end{definition}
\noindent Clearly, strong proper causality implies proper causality (see \cite[Theorem 3.35 (iii)]{Floresfinaldefinitioncausal2011}). One might ask whether there is any natural causal condition on $(M,g)$ which implies strong proper causality. Note, however, that the flat $2d$ spacetime we construct in Remark \ref{r2} is {\em stably causal}, since the coordinate $t$ gives a time-function therein. Yet, it turns out that {\em causal continuity}, a causal condition immediately stronger than stable causality in the causal ladder (see Section 3.3 of Ch. 3 in \cite{BeemGlobalLorentzianGeometry1996}), holds in $M$:

\begin{proposition}\label{prop:causalcontinuity}
  The $c$-completion of any causally continuous spacetime $(M,g)$ is strongly properly causal.
\end{proposition}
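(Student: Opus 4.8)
The plan is to show that in a causally continuous spacetime, every future-inextendible null geodesic ray $\gamma:[a,b)\to M$ satisfies $\uparrow\gamma=\uparrow I^-(\gamma)$ (the past case being time-dual). Since $\uparrow\gamma\subset\uparrow I^-(\gamma)$ always holds by Remark \ref{r2}, the whole content is the reverse inclusion: if $q\in M$ has $\gamma(t)\ll q$ for every $t\in[a,b)$, then $p\ll q$ for every $p\in I^-(\gamma)$; equivalently $I^-(\gamma)\subset I^-(q)$. So it suffices to prove $I^-(\gamma)\subset I^-(q)$ whenever $q$ lies to the future of every point of $\gamma$.

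**Using causal continuity.**
Recall that causal continuity means the spacetime is distinguishing and the set-valued maps $r\mapsto I^-(r)$ and $r\mapsto I^+(r)$ are outer continuous; equivalently (and this is the form I would use) that $r\mapsto I^-(r)$ and $r\mapsto I^+(r)$ are \emph{inner continuous}: if $r_n\to r$ then $I^-(r)\subset \mathrm{LI}(I^-(r_n))$, and dually for futures. The key step is then: pick $p\in I^-(\gamma)$, so $p\ll\gamma(t_0)$ for some $t_0$. I want to conclude $p\ll q$. The natural route is to run a point near the far end of $\gamma$. Take an increasing sequence $t_n\nearrow b$; the points $\gamma(t_n)$ have no accumulation point in $M$ (inextendibility plus strong causality), so this is not a convergence argument in $M$ directly. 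Instead I would argue as follows: since $p\ll\gamma(t_0)$ and $t\mapsto I^+(\gamma(t))$ is nondecreasing, $p\ll\gamma(t)$ for all $t\ge t_0$; I want to ``push $p$ forward'' to land in the past of $q$.

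**The main argument.**
Here is the heart of the proof. Fix $q$ with $\gamma(t)\ll q$ for all $t$. Fix $p\in I^-(\gamma)$, say $p\ll\gamma(t_0)$. Choose a point $p'$ with $p\ll p'\ll\gamma(t_0)$. By openness of $I^-(q)$ and the fact that $\gamma(t_0)\ll q$, we have $\gamma(t_0)\in I^-(q)$; but more is needed, because we must handle \emph{all} of $I^-(\gamma)$, whose ``endpoint behaviour'' lives only on the $c$-boundary. I expect the clean way is: since $\gamma(t_n)\ll q$ for every $n$, and $I^-(q)$ is open, we get $\gamma(t_n)\in I^-(q)$ for all $n$; hence $I^-(\gamma)=\bigcup_n I^-(\gamma(t_n))\subset I^-(q)$ \emph{directly}, with \textbf{no} need for causal continuity at all for this inclusion! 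Wait — that shows $I^-(\gamma)\subset I^-(q)$, i.e. $p\ll q$, which is exactly what we wanted. So actually the subtlety must be elsewhere: the definition of $\uparrow\gamma$ uses $p\ll\gamma(t)$ \emph{for all} $t$, i.e. $\uparrow\gamma=\bigcap_t I^+(\gamma(t))$, whereas $\uparrow I^-(\gamma)=\{q: I^-(\gamma)\subset I^-(q)\}$. The inclusion I just sketched shows $\uparrow\gamma\supset\uparrow I^-(\gamma)$ would be automatic, but that is the \emph{wrong} direction; re-examining, $\uparrow\gamma\subset\uparrow I^-(\gamma)$ is the automatic one and what I must prove is: $q\in\uparrow\gamma\Rightarrow q\in\uparrow I^-(\gamma)$, which unwinds to: $\gamma(t)\ll q$ for all $t$ implies $I^-(\gamma)\subset I^-(q)$. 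And $I^-(\gamma)=\bigcup_n I^-(\gamma(t_n))$; each $\gamma(t_n)\ll q$ gives $I^-(\gamma(t_n))\subset I^-(q)$ by transitivity of $\ll$ and openness; hence the union sits inside $I^-(q)$. So the inclusion holds trivially!

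**Locating the real obstacle.**
The resolution is that the nontrivial statement is the \emph{time-dual} packaged inside the same assertion, OR the subtlety is that we need $\uparrow I^-(\gamma)\subset\uparrow\gamma$ after all in disguise — but that too is automatic. So the genuine work must be that causal continuity is invoked not for a single ray but to guarantee the \emph{limit-operator / first-order} machinery, or — more likely — I have the direction of one inclusion backwards and it is $\downarrow$ or the common future $\uparrow$ rather than the chronological future $I^+$ that is at issue. The common future $\uparrow S$ is $I^-$ of the set of points chronologically below everything in $S$; so $\uparrow\gamma=I^-(\{r:r\ll\gamma(t)\,\forall t\})$ is itself a \emph{past set}, not a future set, and likewise $\uparrow I^-(\gamma)=I^-(\{r: r\ll p\,\forall p\in I^-(\gamma)\})$. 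The set $A_\gamma:=\{r:r\ll\gamma(t)\,\forall t\}$ and $A_{I^-(\gamma)}:=\{r:r\ll p\,\forall p\in I^-(\gamma)\}$: since $I^-(\gamma)\supset\gamma$, we get $A_{I^-(\gamma)}\subset A_\gamma$, hence $\uparrow I^-(\gamma)\subset\uparrow\gamma$ — again automatic the ``easy'' way, and Remark \ref{r2} claims the opposite containment is the trivial one, so $A_\gamma\subset A_{I^-(\gamma)}$ must hold, i.e. $r\ll\gamma(t)\,\forall t\Rightarrow r\ll p\,\forall p\in I^-(\gamma)$, which is clear since each such $p$ satisfies $p\ll\gamma(t_0)\ll$... no. I will stop: the \textbf{main obstacle} is precisely pinning down the correct nontrivial inclusion for the common-future operator and then establishing, via the inner-continuity of $r\mapsto I^+(r)$ along a sequence $\gamma(t_n)\nearrow b$ and a limit-curve/accumulation argument using a future-directed limit curve $\lambda$ of timelike curves from a fixed $r\in A_{I^-(\gamma)}$, that $r\in A_\gamma$; causal continuity enters to upgrade ``$I^+(\gamma(t_n))$ eventually contains a neighbourhood of the relevant point'' and to rule out the pathology of Remark \ref{r2} where outer discontinuity of $I^+$ along the null ray breaks the equality. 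The routine parts (openness of chronological pasts, transitivity of $\ll$, sequential characterisation of the topology) I would cite from Theorem \ref{thm:mainc-completion} and the standard references; the one delicate point is the use of \emph{outer} continuity of $I^{\pm}$ — the defining property of causal continuity that fails in the slit-Minkowski example — to show that the common future of the null ray does not ``jump'' relative to that of its past.
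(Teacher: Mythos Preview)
Your proposal never settles into an actual argument; it oscillates between the two inclusions without correctly identifying which one needs proof, and at one point you misremember the definition of the common future (you write $\uparrow\gamma=I^-(\{r:r\ll\gamma(t)\ \forall t\})$; the correct time-dual of $\downarrow$ is $\uparrow S = I^+(\{r: s\ll r\ \forall s\in S\})$, a \emph{future} set). Once the definitions are straight, the nontrivial inclusion is $\uparrow I^-(\gamma)\subset\uparrow\gamma$: given $p\in\uparrow I^-(\gamma)$, pick $q\ll p$ with $I^-(\gamma)\subset I^-(q)$, and you must produce $q'\ll p$ with $\gamma(t)\ll q'$ for every $t$. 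The point you keep circling but never land on is \emph{why} $I^-(\gamma)\subset I^-(q)$ should force each $\gamma(t)$---which need not lie in $I^-(\gamma)$, since $\gamma$ is a null ray---into the chronological past of something below $p$.

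The paper's proof supplies exactly this missing step, and it is short. From $I^-(\gamma(t_0))\subset I^-(\gamma)\subset I^-(q)$ one gets $\gamma(t_0)\in\overline{I^-(q)}$ (approximate $\gamma(t_0)$ by points in its own past). Now invoke the characterization of causal continuity as the \emph{reflecting} property: $x\in\overline{I^-(y)}\Longleftrightarrow y\in\overline{I^+(x)}$ (Minguzzi--S\'anchez, Lemmas~3.42--3.43). This gives $q\in\overline{I^+(\gamma(t_0))}$. Choosing an intermediate $q\ll q'\ll p$, openness of $I^-(q')$ then yields $q'\in I^+(\gamma(t_0))$, i.e.\ $\gamma(t_0)\ll q'$; since $t_0$ was arbitrary, $q'$ witnesses $p\in\uparrow\gamma$, a contradiction. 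No limit-curve or inner/outer-continuity sequence argument is needed---the reflecting equivalence does all the work in one line, and this is precisely the property that fails in the slit-Minkowski example of Remark~\ref{r2}.
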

\begin{proof}
Suppose, by way of contradiction, that $(M,g)$ is causality continuous but there exists a future-inextendible null geodesic ray $\gamma:[a,b) \rightarrow M$ which is not future-regular. Since we always have $\uparrow \gamma \subset \uparrow I^{-}(\gamma)$ (cf. Remark \ref{r2}), there exists $p \in \uparrow I^{-}(\gamma)\setminus \uparrow \gamma$. Let $q \in I^-(p)$ such that $I^{-}(\gamma) \subset I^-(q)$, and pick $q \ll q'\ll p$. Fix $t_0 \in [a,b)$. Since any sequence $\{x_n\}$ in $I^-(\gamma(t_0)) \subset I^-(q)$ such that $x_n \rightarrow \gamma(t_0)$ is in $I^-(q)$, we infer that $\gamma(t_0) \in \overline{I^-(q)}$. The causal continuity now implies (see Lemmas 3.42 and 3.43 of \cite{Minguzzicausalhierarchyspacetimes2008}) that $ q \in \overline{I^+(\gamma(t_0))}$, and hence $q' \in I^+(\gamma(t_0))$. Since $t_0$ is arbitrary, we conclude that $\gamma[a,b) \subset I^-(q')$. But then one concludes that $p \in \uparrow \gamma$, in contradiction.
\end{proof}

The importance of strong proper causality lies in that it provides the following characterization for the endpoints of any causal curve (which extends the equivalence (\ref{keyeq1}) for timelike curves):
\begin{proposition}\label{prop:strongfirst}
  Let $\overline{M}$  be the c-completion of a strongly causal spacetime $(M,g)$, and assume that $\overline{M}$ is strongly properly causal. A pair $(P,F)$ is endpoint of a future(resp. past)-directed, future(resp. past)-regular causal curve if, and only if, $P=I^{-}(\gamma)$ (resp. $F=I^{+}(\gamma)$).
\end{proposition}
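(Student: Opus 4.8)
The plan is to prove the equivalence in both directions, exploiting that $\overline{M}$ is strongly properly causal so that null geodesic rays are forced to be regular. Throughout I focus on the future case. For the ``only if'' direction, suppose $(P,F)$ is an endpoint of a future-directed, future-regular causal curve $\gamma:[a,b)\to M$. Then in particular $(P,F)\in L(\{\gamma(t_n)\})$ for any increasing $t_n\nearrow b$; if $P\neq\emptyset$, Corollary \ref{cor:endpoints}\ref{item:endpoint3} immediately gives $P=I^-(\gamma)$ and we are done. So the content is to rule out $P=\emptyset$. By Definition \ref{d1}, $P=\emptyset$ forces $F\neq\emptyset$ with no $P'\neq\emptyset$ satisfying $P'\sim_S F$. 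Since $(P,F)\in L(\{\gamma(t_n)\})$ with $P=\emptyset$, Eq. \eqref{eq:4} gives $F\in\check L(\{I^+(\gamma(t_n))\})$, so $F$ is a maximal IF inside $\mathrm{LS}(\{I^+(\gamma(t_n))\})$ and $F\subset\mathrm{LI}(\{I^+(\gamma(t_n))\})$. The key observation is that $\mathrm{LI}(\{I^+(\gamma(t_n))\})$ and $\mathrm{LS}(\{I^+(\gamma(t_n))\})$ both equal $\uparrow\gamma$: indeed $\uparrow\gamma=\bigcap_n I^+(\gamma(t_n))$ since the $\gamma(t_n)$ are cofinal along $\gamma$, and this intersection, being monotone decreasing, coincides with both Hausdorff limits. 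Hence $F$ is a maximal IF inside $\uparrow\gamma$. Now $I^-(\gamma)\neq\emptyset$ is an IP (a TIP or PIP), and future-regularity gives $\uparrow\gamma=\uparrow I^-(\gamma)$, so $F$ is a maximal IF inside $\uparrow I^-(\gamma)$. One then shows $I^-(\gamma)$ is a maximal IP inside $\downarrow F$: since $F\subset\uparrow I^-(\gamma)$ we have $I^-(\gamma)\subset\downarrow F$, and any IP $Q\supsetneq I^-(\gamma)$ with $Q\subset\downarrow F$ would force $F\subset\uparrow Q\subset\uparrow I^-(\gamma)=\uparrow\gamma$ while $Q$ being strictly larger contradicts the way $\uparrow$ shrinks (or one invokes the standard maximality arguments from \cite{Marolfnewrecipecausal2003, Floresfinaldefinitioncausal2011}). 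Thus $I^-(\gamma)\sim_S F$, contradicting the defining property of the pair $(\emptyset,F)$. Therefore $P\neq\emptyset$ and $P=I^-(\gamma)$.

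For the ``if'' direction, suppose $\gamma:[a,b)\to M$ is future-directed, future-regular, with $P=I^-(\gamma)$; I must show $\gamma$ has an endpoint of the form $(I^-(\gamma),F)$ on $\overline{M}$, i.e.\ that for every increasing $t_n\nearrow b$ one has $(I^-(\gamma),F)\in L(\{\gamma(t_n)\})$ for the appropriate $F$. The cleanest route is to split on whether $\gamma$ is eventually a null geodesic ray. If $\gamma$ is not eventually a null geodesic ray, the Proposition just above (the one whose proof concatenates timelike segments) produces a timelike curve $\eta$ with $I^-(\eta)=I^-(\gamma)$; by Corollary \ref{cor:endpoints}\ref{item:endpoints2}, $\eta$ has an endpoint $(I^-(\eta),F')=(I^-(\gamma),F')$, and since $\gamma$ and $\eta$ are cofinal-equivalent in the sense that $I^-(\gamma(t_n))$ and $I^-$ along $\eta$ have the same Hausdorff limits $I^-(\gamma)$, the pair $(I^-(\gamma),F')$ is also an endpoint of $\gamma$. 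If instead $\gamma$ is eventually a null geodesic ray, this is exactly the case covered by strong proper causality: Definition \ref{def:sproperlycausal} says such a $\gamma$ is future-regular (consistent with our hypothesis) and \cite[Theorem 3.35(iii)]{Floresfinaldefinitioncausal2011} guarantees it has an endpoint $(P',F')$ on $\overline{M}$; by Corollary \ref{cor:endpoints}\ref{item:endpoint3}, if $P'\neq\emptyset$ then $P'=I^-(\gamma)$, and the case $P'=\emptyset$ is excluded by the same $S$-relation argument as in the first paragraph (using future-regularity to replace $\uparrow\gamma$ by $\uparrow I^-(\gamma)$). Either way $\gamma$ has an endpoint of the form $(I^-(\gamma),F)$.

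The main obstacle, and the step deserving the most care, is the identification of $\mathrm{LI}$ and $\mathrm{LS}$ of the sequence $\{I^+(\gamma(t_n))\}$ (resp.\ $\{I^-(\gamma(t_n))\}$) with $\uparrow\gamma$ (resp.\ $I^-(\gamma)$), together with the verification that the resulting $F$ is genuinely $S$-related to $I^-(\gamma)$ — this is where future-regularity, $\uparrow\gamma=\uparrow I^-(\gamma)$, is indispensable, and without it one only gets an IF maximal inside $\uparrow\gamma\subsetneq\uparrow I^-(\gamma)$, which need not $S$-relate to $I^-(\gamma)$. Everything else is an application of Corollary \ref{cor:endpoints}, Definition \ref{d1}, and the cited results in \cite{Floresfinaldefinitioncausal2011}; in particular the dichotomy ``eventually a null geodesic ray or not'' lets us reduce the non-timelike case to the two situations already handled in the excerpt (the preceding Proposition and the notion of strong proper causality), so no new global machinery is needed.
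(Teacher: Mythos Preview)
Your plan follows the same route the paper points to (Theorem~\ref{thm:mainc-completion} together with \cite[Theorem~3.35(iii)]{Floresfinaldefinitioncausal2011}); the paper itself gives no more than that reference, so your write-up is actually more detailed than theirs. The overall architecture---rule out $P=\emptyset$ via the $S$-relation, and in the converse direction split on whether $\gamma$ is eventually a null geodesic ray---is exactly right.

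Three points need tightening. First, the identification $\mathrm{LI}(\{I^+(\gamma(t_n))\})=\mathrm{LS}(\{I^+(\gamma(t_n))\})=\uparrow\gamma$ is not literally correct: the decreasing intersection $C:=\bigcap_n I^+(\gamma(t_n))$ is the set of points chronologically after every $\gamma(t_n)$, whereas $\uparrow\gamma=I^+(C)$, and these can differ. What saves you is that any IF contained in $C$ satisfies $F=I^+(F)\subset I^+(C)=\uparrow\gamma\subset C$, so $C$ and $\uparrow\gamma$ have the \emph{same} maximal IFs; you should state it that way. Second, your argument that $I^-(\gamma)$ is a maximal IP in $\downarrow F$ does not work as written: from $Q\subset\downarrow F$ you cannot directly conclude $F\subset\uparrow Q$, and ``$\uparrow$ shrinks'' gives no contradiction. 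The clean fix is to pass via Zorn to a maximal IP $P'\supset I^-(\gamma)$ in $\downarrow F$, observe (using that $P'\subset\downarrow F$ forces every point of a generating chain for $P'$ to lie in $I^-(q)$ for all $q\in F$) that $F\subset\uparrow P'\subset\uparrow I^-(\gamma)$, and then use maximality of $F$ in $\uparrow I^-(\gamma)$ to get $P'\sim_S F$, the desired contradiction. Third, in the ``if'' direction the statement asks you to show that \emph{every} pair $(P,F)\in\overline{M}$ with $P=I^-(\gamma)$ is an endpoint, not merely that some such endpoint exists; your argument gives existence, but the same computations (using $\mathrm{LI}=\mathrm{LS}=I^-(\gamma)$ on the past side and the maximal-IF identification above on the future side) immediately upgrade to the full claim once you note that any $F\sim_S I^-(\gamma)$ is maximal in $\uparrow I^-(\gamma)=\uparrow\gamma$ and hence in $C$.
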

\begin{proof}
Apply the same ideas as in the proofs of Theorem \ref{thm:mainc-completion} and \cite[Theor. 3.35 (iii)]{Floresfinaldefinitioncausal2011}.
\end{proof}


\section{Conformal extensions vs $c$-completion}
\label{conf}

The standard, mathematically precise definitions of {\em null infinity} and {\em black hole} are based on the notion of  conformal boundary of a spacetime as introduced by R. Penrose \cite{PenroseAsymptoticStructure1963,PenroseConformalInfinity1964}. We wish to generalize these notions to the context of $c$-boundaries, and accordingly we start by revisiting the relation between these two boundaries, which comprises some of the main results in \cite[Section 4]{Floresfinaldefinitioncausal2011}.

Henceforth, the $c$-completion $\overline{M}$ of $(M,g)$ will always be understood to be endowed both with the chronological relations and the chronological topology as defined in the previous section.


%

\begin{definition}\label{def:envelopment}
  Given a strongly causal spacetime $(M,g)$, a {\em conformal extension} of $(M,g)$ is an open embedding $i:(M,g) \hookrightarrow (\tilde{M},\tilde{g})$ of $(M,g)$ into some strongly causal spacetime $(\tilde{M},\tilde{g})$ preserving time-orientation, for which there exists a strictly positive function $\Omega\in C^{\infty}(M)$ satisfying
    \begin{equation}
\label{conformalfactor1}
i^{*}\tilde{g}=\Omega^2 g.
\end{equation}
$\Omega$ is called the {\em conformal factor} of the conformal extension $i$.

The {\em conformal completion} of $(M,g)$ with respect to the conformal extension $i$ is defined as the (topological) closure $\overline{M}_i:=\overline{i(M)}\subset \tilde{M}$, and the associated {\em conformal boundary} as the (topological) boundary $\partial_i M:=\overline{i(M)}\setminus i(M)$.

Finally, we denote by $\overline{M}_i^*$ (resp. $\partial_i^* M$) the set of all the {\em accessible points} of the conformal completion (resp. conformal boundary), that is, the set of points on $\overline{M}_i$ (resp. $\partial_i M$) which are endpoints\footnote{Here, and throughout this section, the term ``endpoint'' is taken in the usual spacetime sense.} of timelike curves contained in $M$.
\end{definition}

\begin{remark}\label{rmk1}
{\em A few comments about Definition \ref{def:envelopment} are in order.
\begin{enumerate}
\item Note that we {\em do not} require, in this definition, that a given conformal boundary should have any regularity; even if it is piecewise smooth, we do not demand that the conformal factor extends smoothly to the boundary. As it stands, the definition is too weak to be useful in most applications, and has to be supplemented according to specific needs. In particular, the definition we give here is much more general than the ones usually found in the literature, which require at least $C^1$ smoothness of the boundary, extendibility of the conformal factor to the boundary and a host of other properties (see, e.g., Ch. 11 of \cite{WaldGeneralRelativity1984}). We too shall add some extra assumptions in what follows, but will shall do so gradually as needed, and the added assumptions will still be fairly general and comprise most concrete strongly causal examples in the literature.
\item If $(M,g)$ is extendible as a strongly causal spacetime, i.e., it is {\em isometric} to a proper open subset of a larger strongly causal spacetime $(\tilde{M}, \tilde{g})$, then the latter gives a conformal extension of the former with conformal factor $\Omega \equiv 1$. To avoid this kind of triviality, one usually assumes, in concrete applications, that $(M,g)$ is in some sense ``maximal'' in the strongly causal class.
\item The ``larger'' spacetime used in a conformal extension may well be $(M,g)$ itself. As a standard example, consider the following. Let $M = \mathbb{R}^2$ with the flat metric
\[
g = -dudv,
\]
and the time orientation such that both $\partial_u$ and $\partial_v$ are future-directed null vector fields. This spacetime is geodesically complete, and hence inextendible. Define
\[
i: (u,v) \in M=\mathbb{R}^2 \mapsto (\arctan u,\arctan v) \in \mathbb{R}^2
\]
and consider the smooth function
\[
\label{conformalfactor2}
\Omega : (u,v) \in \mathbb{R}^2 \mapsto \cos u \cos v \in \mathbb{R}.
\]
Then it is easy to check that $(\mathbb{R}^2,g)$ is a (nontrivial) conformal extension of $(M,g)$ such that the image of $M$ by $i$ is the open square $Q:= (-\pi/2, \pi/2)^2$, with conformal factor $\Omega \circ i$. Note that $\Omega$ vanishes on the boundary of $Q$ in $\mathbb{R}^2$. Thus, $(M,g)$ is conformally extended via a mapping onto an open set of itself.
\item Let $i:(M,g) \hookrightarrow (\tilde{M},\tilde{g})$ be a conformal extension of $(M,g)$. Since $i$ is a diffeomorphism when viewed as a map onto its open image $i(M)\subset \tilde{M}$, consider its (smooth) inverse $i^{-1}: i(M) \rightarrow M$. Then $(i(M),(i ^{-1})^{\ast} g)$ is a spacetime on its own right, and moreover it is isometric to $(M,g)$ by construction. Hence, there is no loss of generality in regarding $M \subseteq \tilde{M}$ and $i$ to be the inclusion map. We shall often do so in what follows, and will then abuse of notation by referring to $(\tilde{M},\tilde{g})$ itself as a conformal extension and discard any reference to the map $i$. In this case, we write the condition (\ref{conformalfactor1}) as
\[
\label{conformalfactor3}
\tilde{g}|_{M} = \Omega^2 g.
\]

\item Even upon identifying isometric spacetimes as in the previous item, conformal extensions, and also the associated conformal boundaries, are clearly far from being unique. But exploiting its relationship with the $c$-boundary, as we shall see below, allows one to build its accessible part in an essentially unique fashion.
\end{enumerate}
}
\end{remark}

A given conformal completion will always be assumed to be endowed with the induced topology from $\tilde{M}$ and with a chronological (resp. causal) relation defined as follows: two points $p,q\in \overline{M}_i$ are chronologically (resp. causally) related, $p\ll_i q$ (resp. $p \leq _i q$) if there exists a smooth future-directed timelike (resp. causal) curve $\gamma:[a,b]\rightarrow \overline{M}_i$ from $p$ to $q$ with $\gamma|_{(a,b)}\subset M$ (see \cite[Section 4.1]{Floresfinaldefinitioncausal2011} for additional discussion on these choices).

\medskip

As one might expect, in general the $c$-completion differs from a given conformal one. In fact, despite its name, the conformal completion is {\em not} conformally invariant (see, for instance, \cite[Figure 10]{Floresfinaldefinitioncausal2011}). However, (and this is one of the main points of this section) under some additional conditions on the conformal extension, it is possible that both completions coincide. We shall presently discuss some of these conditions.

\begin{definition}\label{def:chronologically complete}
Let $(M,g)$ be a spacetime. A conformal extension $(\tilde{M},\tilde{g})$ is said to be {\em chronologically complete} if any timelike curve $\gamma:[a,b)\rightarrow M$ which is inextendible in $M$ has an endpoint $p$ on the associated conformal boundary.
\end{definition}
Clearly, this condition parallels the point \ref{item:c-completioncompleta} in Theorem  \ref{thm:mainc-completion}, and is meant to ensure that the conformal completion has ``enough points''. (Otherwise, by suitably deleting points in a conformal completion, we have a highly non-unique construction which will then become geometrically useless.)

Another important requirement is that, around the points $p$ of the (accessible) conformal boundary, the causal structure of $(\tilde{M},\tilde{g})$ is adapted to that of $(M,g)$. To formalize this idea, let us begin with the concept of {\em timelike deformable points} (see \cite[Definition 4.10]{Floresfinaldefinitioncausal2011})

\begin{definition}\label{def:deformable}
Consider a continuous curve $\gamma:[a,b]\rightarrow \overline{M}_i$ such that $\gamma|_{[a,b)}$ is a future-directed smooth timelike curve contained in $M$ and $\gamma(b)\in \partial_i M$. Then, $\gamma$ is {\em future deformably timelike} if there exists a neighborhood $U=\tilde{U}\cap \overline{M}_i$ of $\gamma(b)$ (where $\tilde{U}$ is an open set of $\tilde{M}$) such that $\gamma(a)\ll_i \omega$ for all $\omega\in U$. (The notion of {\em past deformably timelike} is analogous.)

  An accessible point $p\in \partial_i^* M$ is {\em timelike deformable} if all the TIPs and TIFs associated to $p$ (i.e., TIPs and TIFs defined by timelike curves on $M$ with endpoint $p$) are intersections with $M$ of the chronological pasts or futures in $(\tilde{M},\tilde{g})$ of timelike deformable curves in $\overline{M}_i$.
\end{definition}

The previous notion ensures only that the {\em chronological} relation of the extension is well-behaved with respect to that of $(M,g)$; the following definition extends such good behaviour to the {\em causal} relation.

\begin{definition}\label{def:transitive}
  An accessible point $p\in \partial_i^* M$ is {\em (locally) timelike transitive} if it admits a neighborhood $V=\overline{M}_i\cap \tilde{V}$ ($\tilde{V}$ open in $\tilde{M}$) such that for any $q,q'\in V$:
  \begin{itemize}
  \item $q\ll_i p \leq_i q' \Rightarrow q\ll_i q'$.

  \item $q\leq_i p \ll_i q'\Rightarrow q\ll_i q'$.
  \end{itemize}
\end{definition}
 If an accessible point $p\in \partial_i^* M$ satisfies both timelike deformability and timelike transitivity, then we say that $p$ is {\it regularly accessible}. If all the accessible boundary points are regularly accessible, we will just say that $\partial_i^* M$ itself is {\it regularly accessible}.

\smallskip

We are now ready to present the main result in \cite[Section 4]{Floresfinaldefinitioncausal2011}.

\begin{theorem}\cite[Theorem 4.16]{Floresfinaldefinitioncausal2011} \label{thm:causaltoconformal} Let $i:(M,g) \hookrightarrow (\tilde{M},\tilde{g})$ be a chronologically complete extension of $(M,g)$. If the accessible conformal boundary $\partial_i^* M$ is regularly accessible, then the accessible part $\overline{M}_i^*$ of the conformal completion and the $c$-completion $\overline{M}$ are equivalent in the following precise sense:
\begin{enumerate}[label=(\roman*)]
  \item \label{thmcausaltoconformal-defPsi} There exists a homeomorphism $\Psi:\overline{M}\rightarrow \overline{M}_i^*$, which maps boundary to boundary.
  \item \label{thmcausaltoconformal-chrniso} $\Psi$ is a chronological isomorphism (i.e., $\Psi$ and $\Psi^{-1}$  preserve the chronological relations).
  \end{enumerate}
\end{theorem}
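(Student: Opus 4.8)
The plan is to establish the homeomorphism $\Psi$ explicitly using the natural identification of both completions with structures built from IPs and IFs of $M$, and then check that it respects boundaries, topology and the chronological relation. First I would recall that each accessible point $p \in \overline{M}_i^*$ is, by definition, the endpoint (in the spacetime sense, inside $\tilde{M}$) of some future-directed timelike curve $\gamma$ contained in $M$, and likewise may be the past endpoint of such a curve; associate to $p$ the pair $(P_p, F_p)$ where $P_p := I^-(\gamma) \cap M$ for $\gamma$ a timelike curve in $M$ with future endpoint $p$ (set $P_p = \emptyset$ if no such curve exists) and dually $F_p := I^+(\eta) \cap M$. The first substantive step is to verify this is well defined: using chronological completeness and the timelike deformability hypothesis, any two timelike curves in $M$ sharing the future endpoint $p$ generate the same TIP, because deformability forces the relevant chronological pasts in $\tilde{M}$ to be comparable in a neighborhood of $p$; and one must check $(P_p, F_p)$ indeed lies in $\overline{M}$, i.e.\ satisfies the $S$-relation clauses of Definition \ref{d1}, which again is where timelike transitivity enters (it guarantees maximality of $P_p$ inside $\downarrow F_p$ and vice versa). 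Define $\Psi^{-1}(p) = (P_p,F_p)$ for accessible points and $\Psi^{-1}(q) = (I^-(q), I^+(q))$ for $q \in i(M)$, and check injectivity from the distinguishing property plus the definition of accessibility.

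Next I would establish surjectivity of $\Psi$ onto $\overline{M}_i^*$, equivalently that every $(P,F) \in \overline{M}$ arises this way. For $(P,F)$ with $P \neq \emptyset$, write $P = I^-(\gamma)$ for an inextendible timelike curve $\gamma$ in $M$ (an IP is either a PIP or a TIP of this form; PIPs correspond to points of $M$); chronological completeness then supplies an endpoint $p$ of $\gamma$ on $\partial_i M$, which is accessible by construction, and one shows $(P_p,F_p) = (P,F)$ using that timelike deformability pins down $F_p$ as the $S$-related IF, which by uniqueness in Definition \ref{d1} must be $F$. The case $P = \emptyset$, $F \neq \emptyset$ is handled time-dually. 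This yields a bijection $\Psi: \overline{M} \to \overline{M}_i^*$; that it maps $\partial M$ to $\partial_i^* M$ and $M$ to $i(M)$ is then immediate, since interior points of $\overline{M}$ are exactly the PIP/PIF pairs and these correspond precisely to points of $i(M)$ under the embedding.

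For part \ref{thmcausaltoconformal-chrniso}, the chronological relation on $\overline{M}$ is $(P,F) \ll (P',F') \iff F \cap P' \neq \emptyset$ (Eq.\ (\ref{eq:7})), while on $\overline{M}_i$ it is defined by existence of a timelike curve in $\overline{M}_i$ with interior in $M$. I would show $\Psi$ preserves $\ll$ in both directions by translating $F_p \cap P_{p'} \neq \emptyset$ into the existence of a point $x \in M$ with $p \ll_i x \ll_i p'$ (concatenate the timelike curve realizing $x \in I^+(\eta_p)$, i.e.\ approaching $p$ from the future, with one realizing $x \in I^-(\gamma_{p'})$), and conversely using timelike transitivity at $p$ and $p'$ to push a timelike curve $p \to p'$ slightly into $M$ near both endpoints so as to produce such an intermediate $x$; the deformability condition is what converts "there is a timelike curve through the boundary point" into "its chronological past/future in $\tilde M$ restricted to $M$ is the TIP/TIF", closing the loop. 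Finally, continuity of $\Psi$ and $\Psi^{-1}$: since the chronological topology is sequential (property LO2) and, under the standing first-order assumption, convergence is detected by the limit operator $L$ via Hausdorff limits of the associated IP/IF sequences, it suffices to show $(P_n,F_n) \to (P,F)$ in $\tau_L$ iff $\Psi(P_n,F_n) \to \Psi(P,F)$ in $\tilde M$; one direction uses that nested chronological pasts in $\tilde M$ converging to $p$ force $\mathrm{LI}$/$\mathrm{LS}$ control on the $P_n$ (local timelike transitivity giving the maximality), and the other runs the argument backwards. \textbf{Main obstacle.} I expect the hardest part to be the well-definedness and the maximality/$S$-relation verification in the first step — i.e.\ showing that the pair $(P_p,F_p)$ really lands in $\overline{M}$ and is independent of the chosen curve — because this is exactly where the somewhat delicate hypotheses of timelike deformability and local timelike transitivity must be used in full, and where the non-Hausdorff, non-first-order pathologies of the general chronological topology are kept at bay; the topological continuity, while fiddly, should follow fairly mechanically once the bijection and its compatibility with $\ll$ are in place, given Theorem \ref{thm:mainc-completion} and the sequential characterization of $\tau_L$.
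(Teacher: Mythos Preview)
The paper does not prove this theorem: it is quoted verbatim from \cite[Theorem 4.16]{Floresfinaldefinitioncausal2011} and stated without proof, serving only as background for the later sections. So there is no ``paper's own proof'' to compare your proposal against.

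That said, your outline is broadly the right shape for how the result is actually established in \cite{Floresfinaldefinitioncausal2011}: one builds the map by associating to each accessible boundary point the TIP/TIF pair determined by timelike curves ending there, uses timelike deformability to make this well defined and timelike transitivity to secure the $S$-relation (maximality), invokes chronological completeness for surjectivity, and then checks the chronological and topological compatibilities. Your identification of the well-definedness/$S$-relation step as the crux is accurate. One point to tighten: in your continuity argument you lean on the first-order assumption for $L$, but the theorem as stated does not assume it, and in \cite{Floresfinaldefinitioncausal2011} the continuity is handled more carefully at the level of the limit operators themselves (showing directly that $\hat{L}$/$\check{L}$ limits correspond to topological convergence in $\tilde{M}$ under the regular-accessibility hypotheses), rather than via sequential convergence in $\tau_L$. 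If you want a self-contained proof you should either add that hypothesis explicitly or work at the limit-operator level.
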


\section{Null infinity on $c$-boundaries}\label{bh}

We wish to introduce a notion of null infinity on $c$-boundaries which remains valid even when a conformal boundary does not exist.
Again, we will start with some general considerations, which will motivate our main results and definitions, as well as set up the appropriate notation.

First, let us give a more precise description of what we shall mean by {\em null infinity} here in the case when conformal boundaries are present. We will try and capture only some minimal aspects of the original Penrose's definition \cite{PenroseAsymptoticStructure1963,PenroseConformalInfinity1964,Frauendiener2004}, without some of the more technical and specific assumptions needed in physical applications (such as those given, for example, in Refs. \cite{HawkingLargeScaleStructure1975,WaldGeneralRelativity1984}).

\begin{definition}
\label{conformalinfinity1}
Let $i: (M,g) \hookrightarrow (\tilde{M},\tilde{g})$ be a conformal extension of the spacetime $(M,g)$ with conformal boundary $\partial_i M$ and conformal factor $\Omega$. An accessible point $p \in \partial^{*}_i M$ is said to {\em be at infinity} if there exist an open set $\tilde{U} \ni p$ of $\tilde{M}$ and a $C^{\infty}$ extension\footnote{It is possible, and may be interesting for certain purposes, to consider a lower - say, $C^1$-differentiability class without deep changes in the main results. Nevertheless, since this would make the following discussion a little more cumbersome, we adopt for simplicity a stronger regularity assumption.} $\tilde{\Omega}$ of $\Omega$ to $M \cup \tilde{U}$ such that $\tilde{\Omega}(p) =0$. Such a point at infinity is said to be {\em regular} if in addition $d\tilde{\Omega}(p) \neq 0$. The set of all regular points at infinity is the {\em conformal null infinity} ({\em associated with the conformal extension} $(\tilde{M},\tilde{g})$) and will be denoted by ${\cal J}_c$. We will say that a point $p\in {\cal J}_c$ belongs to the {\em future} (resp. {\em past}) {\em null infinity}, denoted by ${\cal J}_c^+$, (resp. ${\cal J}_c^-$) if $p$ is future (resp. past) accessible (recall Defn. \ref{def:envelopment})
\end{definition}

\begin{remark}
\label{rmk3}
{\em The following points about Definition \ref{conformalinfinity1} are relevant.
\begin{itemize}
\item[1)] The notions of a point $p \in \partial^{*}_i M$ being at infinity and/or being regular there clearly do not depend on the particular extension $\tilde{\Omega}$ of the conformal factor.
\item[2)] The notion of a point being at infinity is {\em not} a conformally invariant one. For example, if $\Omega$ can be smoothly extended to a point $p \in \partial^{*}_i M$ but the extension does not vanish there, we may rescale $g$ by (the restriction of) a conformal factor $\omega^2 \in C^{\infty}(\tilde{M})$ which diverges at $p$ suitably slowly so that $\Omega/\omega$ is still $C^{\infty}$. The point $p$ then becomes a point at infinity with respect to $(\tilde{M}, \tilde{g})$ when the latter is viewed as a conformal extension of $(M,\omega^2g)$. Conversely, points which are at infinity for a certain spacetime may fail to be so for others in the same conformal class. Indeed, the familiar example of $2d$ hyperbolic space already illustrates these remarks: points at the boundary of the unit disc in $\mathbb{R}^2$ are at infinity if we view the Euclidean plane as a conformal extension of the Poincar\'{e} disc model of the hyperbolic plane, but not if viewed as a conformal (actually isometric) extension of the (restriction of the) Euclidean metric on the disc.
\item[3)] As an example of this definition, we may look back at the example in item 3. of Remark \ref{rmk1}. There, the set of points at infinity consists of the union of the two segments $u=\pi/2,\, v>-\pi/2$ and $v=\pi/2, \, u> -\pi/2$. The point $u=v= \pi/2$ is the only non-regular point. This point is not the future endpoint of any null geodesic in $(M,g)$, but all {\em regular} points at infinity are endpoints of null geodesic {\em rays}.
\end{itemize}
}
\end{remark}

\begin{theorem}
\label{conformalinfinity2}
Let $(\tilde{M},\tilde{g})$ be a conformal extension of the spacetime $(M,g)$ with conformal boundary $\partial_iM$ and conformal factor $\Omega$. Let $p \in \partial^{*}_iM$. Then the following statements hold.
\begin{itemize}
\item[i)] Suppose $p$ is at infinity, and let $\alpha:[0,A) \rightarrow M$ ($0<A \leq +\infty$) be a future-inextendible null geodesic in $(M,g)$ (analogously for past-directions). If $\alpha$ has a future endpoint at $p$ when viewed as a causal curve in $(\tilde{M}, \tilde{g})$, then $A=+\infty$ and $\alpha$ is future-complete. In other words, any null geodesic in $(M,g)$ with a future endpoint at infinity is future-complete.
\item[ii)] Conversely, suppose the conformal factor extends in a $C^{\infty}$ fashion to $M\cup \tilde{U}$ where $\tilde{U} \ni p$ is open in $\tilde{M}$. If $p$ is a future (resp. past) endpoint of a future-complete (resp. past-complete) null geodesic in $(M,g)$, then $p$ is at infinity.
\item[iii)] Suppose $(M,g)$ is inextendible (as a spacetime\footnote{Again, this result will still work {\em mutatis mutandis} if one only assumes that $(M,g)$ cannot be extended as a $C^1$ spacetime. On the other hand, there are trivial counterexamples to this result without the assumption of inextendibility, obtained by taking $(M,g)$ to be a suitable open subset of Minkowski with rough boundary.}). Then ${\cal J}_c$ is a smooth hypersurface in $(\tilde{M},\tilde{g})$.
\end{itemize}
\end{theorem}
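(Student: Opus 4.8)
The plan is to treat the three items separately. Items (i) and (ii) are the two directions of the heuristic ``$p$ is at infinity iff the null geodesics ending at $p$ are complete'', and both rely on (a) the behaviour of affine parameters of null geodesics under the conformal rescaling $\tilde g=\Omega^2 g$, and (b) the elementary fact that a $\tilde g$-geodesic having an endpoint in $\tilde M$ attains it at a finite value of its $\tilde g$-affine parameter (equivalently, a complete $\tilde g$-geodesic has no endpoint in $\tilde M$). For item (iii) the inextendibility hypothesis is brought in to show that, near a point of ${\cal J}_c$, the conformal boundary $\partial_i M$ is exactly a smooth regular level set $\{\tilde\Omega=0\}$ of the extended conformal factor.

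For item (i), I would fix a $g$-affine parameter $\lambda$ for $\alpha$; the standard Christoffel-symbol computation for a conformal change shows that the same curve, reparametrized by $\tilde\lambda$ with $d\tilde\lambda/d\lambda=\Omega^{2}$, is an affinely parametrized $\tilde g$-null geodesic, so that $d\lambda/d\tilde\lambda=\Omega^{-2}$. The first key step is fact (b): since $\alpha$, viewed as a $\tilde g$-causal curve, has the future endpoint $p\in\tilde M$, choosing a convex normal neighbourhood $\tilde U$ of $p$ and writing $\alpha(\tilde\lambda)=\exp_{x_1}\big((\tilde\lambda-\tilde\lambda_1)\dot\alpha(\tilde\lambda_1)\big)$ once $\alpha$ has entered $\tilde U$ (at parameter $\tilde\lambda_1$), then letting $\tilde\lambda$ run to its supremum $\tilde\lambda^{*}$ and applying $\exp_{x_1}^{-1}$ (a diffeomorphism on $\tilde U$) forces $\tilde\lambda^{*}<\infty$ and shows $\alpha$ extends to a smooth $\tilde g$-geodesic through $p$. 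Hence $A=\int_0^{\tilde\lambda^{*}}\Omega^{-2}\,d\tilde\lambda$, where $\Omega$ along $\alpha$ equals $\tilde\Omega\circ\alpha$, a smooth function near $\tilde\lambda^{*}$ vanishing there (since $\tilde\Omega(p)=0$); thus $0<\Omega\le C(\tilde\lambda^{*}-\tilde\lambda)$ near $\tilde\lambda^{*}$, the integral diverges, and $A=+\infty$, i.e.\ $\alpha$ is future-complete. Item (ii) follows from the same reparametrization: future-completeness of $\alpha$ means $\lambda$ ranges over $[0,+\infty)$, and fact (b) gives $\int_0^{\infty}\Omega^{2}(\alpha(\lambda))\,d\lambda=\tilde\lambda^{*}<\infty$; but $\Omega(\alpha(\lambda))=\tilde\Omega(\alpha(\lambda))\to\tilde\Omega(p)$ as $\lambda\to\infty$ by continuity of the extension, so $\tilde\Omega(p)>0$ would make this integral diverge. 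Hence $\tilde\Omega(p)=0$, i.e.\ $p$ is at infinity.

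For item (iii), given $p\in{\cal J}_c$ take the open set $\tilde U\ni p$ and the smooth extension $\tilde\Omega$ of $\Omega$ to $M\cup\tilde U$ from the definition; after shrinking $\tilde U$ we may assume $d\tilde\Omega\neq0$ throughout, so $\Sigma_p:=\{\tilde\Omega=0\}\cap\tilde U$ is a smooth embedded hypersurface through $p$ and $\tilde U\setminus\Sigma_p$ is the disjoint union of the open sets $\{\tilde\Omega>0\}\cap\tilde U$ and $\{\tilde\Omega<0\}\cap\tilde U$. The crux is the identity $M\cap\tilde U=\{\tilde\Omega>0\}\cap\tilde U$: the inclusion ``$\subseteq$'' holds because $\Omega>0$ on $M$; for ``$\supseteq$'', note that $\tilde\Omega^{-2}\tilde g$ is a smooth Lorentzian metric on $\{\tilde\Omega>0\}\cap\tilde U$ coinciding with $g$ on $M\cap\tilde U$, so that $M\cup(\{\tilde\Omega>0\}\cap\tilde U)$, equipped with the glued metric and the time-orientation induced from $\tilde M$, is a spacetime into which $(M,g)$ embeds isometrically, and a proper such inclusion would contradict the inextendibility of $(M,g)$. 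Consequently $\{\tilde\Omega<0\}\cap\tilde U$ is an open set disjoint from $M$, hence from $\overline M$, whereas $\Sigma_p=\partial(\{\tilde\Omega>0\}\cap\tilde U)\subseteq\overline M$, so $\partial_i M\cap\tilde U=\Sigma_p$. It remains to check that every $q\in\Sigma_p$ is accessible: the open half-space $\{w\in T_q\tilde M:d\tilde\Omega_q(w)>0\}$ contains a timelike vector $w$ (every open half-space of a Lorentzian vector space does, since the timelike vectors span the space), and then $s\mapsto\exp_q(sw)$, for small $s$ of the appropriate sign, is a timelike curve contained in $\{\tilde\Omega>0\}\cap\tilde U=M\cap\tilde U$ with endpoint $q$. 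Thus ${\cal J}_c\cap\tilde U=\Sigma_p$ for every $p\in{\cal J}_c$, and since on overlaps these local pieces all coincide with $\partial_i M$ they patch together, exhibiting ${\cal J}_c$ as a smooth hypersurface of $(\tilde M,\tilde g)$.

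I expect the main obstacle to be, first, fact (b) underpinning items (i) and (ii): it is standard but must be argued with some care, because a priori one only knows that $\alpha$ converges to $p$, not that its velocity has a limit; and, second, in item (iii), the step $M\cap\tilde U=\{\tilde\Omega>0\}\cap\tilde U$ --- which is precisely where the inextendibility hypothesis is used --- together with the verification that the whole of $\Sigma_p$, and not merely the point $p$, is accessible.
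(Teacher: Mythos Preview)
Your proposal is correct and follows essentially the same approach as the paper: both use the affine-parameter relation $d\tilde\lambda/d\lambda=\Omega^2$ for null geodesics under a conformal change (the paper derives it from the pregeodesic equation via O'Neill, you state it directly), together with the finiteness of the $\tilde g$-affine parameter at an endpoint in $\tilde M$, to handle (i)--(ii); and both use inextendibility to identify $M\cap\tilde U$ with $\{\tilde\Omega>0\}\cap\tilde U$ in (iii). Your explicit verification of accessibility for every point of $\Sigma_p$ is in fact more complete than the paper's argument, which only arranges $\tilde U\subset I^+(M,\tilde M)$ and leaves the passage from that to accessibility implicit.
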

\begin{proof} We shall present our arguments for future directions, since the past directed case is again analogous. \\
$(i)$ Using Definition \ref{conformalinfinity1} we can assume that the conformal factor extends in a $C^{\infty}$ fashion to $M\cup \tilde{U}$, where $\tilde{U} \ni p$ is open in $\tilde{M}$, and $\tilde{\Omega}(p)=0$. We shall, for notational simplicity, continue to refer to this extended conformal factor as $\Omega$. Denote by $\tilde{\nabla}$ the Levi-Civita connection on $(\tilde{M},\tilde{g})$. Using the fact that $\tilde{g}|_M = (\Omega|_M)^2 g$ and that $\alpha$ is a null geodesic in $(M,g)$, we get
\begin{equation}
\label{geodesic}
\tilde{\nabla}_{\dot{\alpha} }\dot{\alpha} = \nabla_{\dot{\alpha}}\dot{\alpha} + 2\frac{(\Omega \circ \alpha)'}{(\Omega \circ \alpha)} \dot{\alpha} \equiv f \cdot \dot{\alpha},
\end{equation}
where $f:[0,A) \rightarrow \mathbb{R}$ is the smooth function given by
\[
f := 2\frac{(\Omega \circ \alpha)'}{\Omega \circ \alpha}.
\]
Exercise 3.19 in p. 95 of \cite{ONeillSemiRiemannianGeometryApplications1983} now implies that $\alpha$ is a null pregeodesic in $(\tilde{M},\tilde{g})$. Let $h:[0,a) \rightarrow [0,A)$ ($0<a \leq +\infty$) be an increasing reparametrization of $\alpha$ such that $\hat{\alpha}:= \alpha \circ h$ is a geodesic in $(\tilde{M},\tilde{g})$. Since it has an endpoint in $p \in \partial^{*}_i M \subset \tilde{M}$, it is actually extendible as a null geodesic in $(\tilde{M},\tilde{g})$, and in particular $a <+\infty$. Denote by $\beta:[0,a] \rightarrow \tilde{M}$ the null $\tilde{g}$-geodesic segment extending $\hat{\alpha}$, with $\beta(a) \equiv p$. Since $\Omega(p) =0$, and $\Omega$ is $C^1$, by the mean value theorem there exists a number $k>0$ such that
\begin{equation}
\label{lipschitz}
(\Omega \circ \beta)(s) \leq k (a - s), \forall s \in [0,a].
\end{equation}
However, by exercise 3.19 of \cite{ONeillSemiRiemannianGeometryApplications1983} we must have
\[
h'' + (f \circ h) (h')^2 =0
\]
on $[0,a)$, which when we substitute the definition of $f$ gives, multiplying through $(\Omega \circ \alpha \circ h)^2$,
\begin{eqnarray}
0 &=& (\Omega \circ \alpha \circ h)^2 h'' + 2 (\Omega \circ \alpha \circ h)[(\Omega \circ \alpha)'\circ h)h ']h' \\ \nonumber
&=& (\Omega \circ \alpha \circ h)^2 h'' + 2 (\Omega \circ \alpha \circ h)(\Omega \circ \alpha \circ h)'h' \\ \nonumber
&=& (\Omega \circ \hat{\alpha})^2 h'' + 2(\Omega \circ \hat{\alpha})(\Omega \circ \hat{\alpha})'h' \\ \nonumber
&\equiv & [(\Omega \circ \hat{\alpha})^2 h']'.
\end{eqnarray}
Hence, for some constant $c \in \mathbb{R}$,
\begin{equation}
\label{relation}
(\Omega \circ \hat{\alpha})^2 h' = c.
\end{equation}
(We deduce that $c>0$, since $\Omega \circ \hat{\alpha} >0$ and $h$ is strictly increasing.) Therefore, for each $0\leq t <a$, we have
\begin{equation}
\label{integral}
h(t) = \int_0^t h'(s) ds = c \int_0^t \frac{ds}{[(\Omega \circ \hat{\alpha})(s)]^2} \geq \frac{c}{k^2} \int_0^t \frac{ds}{(a -s)^2},
\end{equation}
where we have used (\ref{lipschitz}) to get the last inequality. We thus conclude that for each $0\leq t <a$,
\[
A \geq h(t) \geq \frac{c}{k^2}\left( \frac{1}{a-t} - \frac{1}{a} \right).
\]
Since the right-hand side of this inequality diverges when $t \rightarrow a$, we conclude that $A \equiv +\infty$, proving $(i)$.

\smallskip

$(ii)$ Using the same notation as in $(i)$, assume, by way of contradiction, that $\alpha$ is future-complete in $(M,g)$, that is, that $A=+\infty$, but that $\Omega(p) \neq 0$. Then we may take $(\Omega \circ \hat{\alpha})$ to be bounded from below by some positive constant $B$ in Eq. (\ref{integral}), whence we conclude that
\[
h(t) \leq c/B, \, \forall t \in [0,a),
\]
and hence that $A \leq c/B$, a contradiction.

\smallskip

$(iii)$ Again, we may focus without loss of generality on the future part ${\cal J}^+_c$ of the conformal boundary. Let $p \in {\cal J}^+_c$. By definition, there exists an open neighborhood $\tilde{U}\ni p$ in $\tilde{M}$ to which $\Omega$ extends as a $C^{\infty}$ function (which we again denote by $\Omega$). Let $\tilde{\nabla}\Omega$ denote the (metrically) associated gradient vector field in $M\cup \tilde{U}$. Since $\tilde{\nabla}\Omega (p) \neq 0$, we can assume (shrinking it if necessary) that $\tilde{U}$ is a connected coordinate neighborhood contained in $I^+(M, \tilde{M})$, with a coordinate system denoted by $(x^1, \ldots x^{dim M})$, say, such that $\partial _{x^1} = \tilde{\nabla}\Omega|_{\tilde{U}}$. Then the set
\[
S := \{q \in \tilde{U} \, : \, x^1(p) = \Omega(q) =0 \}
\]
is a $C^{\infty}$ embedded codimension 1 submanifold (i.e., a hypersurface) of $\tilde{M}$ containing ${\cal J}^+_c \cap \tilde{U}$ and closed in $\tilde{U}$. Denote by $\tilde{U}_+$ the connected open subset of $\tilde{U}$ in which $\Omega >0$. Now, $M \cap \tilde{U}$ is obviously (non-empty and) contained in $\tilde{U}_+$, and if it were {\em properly} contained therein, then $(M\cup \tilde{U}_{+}, (1/\Omega^2) \tilde{g} |_{M\cup \tilde{U}_{+}})$ would be a non-trivial extension of $(M,g)$, contrary to our assumption. Therefore, $M \cap \tilde{U} = \tilde{U}_+$, and we then conclude that $\partial ^+ M \cap \tilde{U}\equiv S$, and hence that $S = {\cal J}^+_c \cap \tilde{U}$. This in turn establishes the first statement.

\end{proof}

We now turn to a generalized notion of null infinity. 
Let again a strongly causal spacetime $(M,g)$ with $c$-completion $\overline{M}$ and $c$-boundary $\partial M$ be given. Wherever confusion might arise between conformal and $c$-completions, we use lowercase letters $p,q,r, \ldots$ to denote points on $\overline{M}_i$, while pairs $(P,F)$ are used to denote elements of the $c$-completion $\overline{M}$. The chronological future/past of a subset $U$ on $\overline{M}$ (resp. $\overline{M}_i$) will be denoted by $I^{\pm}(U)$ (resp. $I_i^{\pm}(U)$).

\begin{definition}
\label{scri}
The {\em future null infinity} of $M$, denoted as ${\cal J}^+$, is the set of points $(P,F) \in \partial  M$ such that
\begin{enumerate}[label=(\roman*)]
\item $\exists$ a future-complete and future-regular null ray $\eta:[0,+\infty) \rightarrow M$ with $(P,F)$ as a future endpoint of $\eta$,

\item $\mbox{every future-inextedible null geodesic with future endpoint $(P,F)$ is fu\-tu\-re-com\-ple\-te.}$
\end{enumerate}
A time-dual analogous definition is immediate for the \emph{past null infinity} ${\cal J}^-$.
\end{definition}

\begin{remark}\label{rem:1}
  \em Condition (i) in the previous definition ensures, in particular, that for any $(P,F)\in \mathcal{J}^{+}$, $P\neq \emptyset$ (recall Proposition \ref{prop:strongfirst}).
\end{remark}
Definition \ref{scri} is meant to capture the standard physical notion of ``distant observers'' in the absence of a conformal boundary. This is clearly a {\em geometric} rather than just {\em conformal} notion, as shows up here in the geodesic completeness requirements. It seems that we do not miss the standard points at infinity at least when they are regular and the conformal extension is well-behaved. However, the example in Remark \ref{rmk1} (3) still shows that points at conformal infinity which are {\em not} regular do not need to be in ${\cal J}^+$ (modulo $\Psi$) as defined. Indeed, the point $u=v=\pi/2$ in that example is a point at conformal infinity which is not regular, and the corresponding element in $\partial  M$ is $(M,\emptyset)$ itself (the first component is a TIP in this case!). But $M$ is not the chronological past of any future-directed {\em null} ray (so clause (i) is not satisfied), although it is the past of (infinitely many) future-directed {\em timelike} rays, and clause (ii) in Definition \ref{scri} is trivially satisfied. The reason why we include the clause (i) in Definition \ref{scri} is because we are interested only in the {\em null} infinity here. Clause (ii), on the other hand, seems physically justified if our ``distant observers'' are not to ``see up close'' any (potentially ``fatal''!) ``naked singularity''.


\section{Generalized black holes}\label{bh2}
We now use the new notion of null infinity to define a generalized notion of black hole. There is, of course, little choice for this particular definition once we have defined ${\cal J}^+$.

\begin{definition}
\label{defbh} Let $(M,g)$ be a spacetime. We shall say that a point $p \in M$ is {\em visible from (the future null)  infinity} ${\cal J}^+$ if there exists a future-inextendible null geodesic $\alpha$ starting at $p$ \cambios{and with future endpoint} some $(P,F) \in {\cal J}^+$. We denote the set of all points of $M$ visible from infinity by $V_{\infty}$. If ${\cal J}^+ \neq \emptyset$ (or equivalently, if $V_{\infty} \neq \emptyset$), then the {\em black hole (region)} of $(M,g)$ and the {\em (future) event horizon} are, respectively,
\[
B^+ := M\setminus J^-(V_{\infty}),
\]
and
\[
H^+:= \partial B^+.
\]
\end{definition}
\noindent (We require that ${\cal J}^+ \neq \emptyset$ in this definition to avoid that $M = B^+$.)

\begin{remark}
\label{rmk5}
{\em Given any point $p \in V_{\infty}$, and a future-complete null geodesic $\alpha:[0,+\infty) \rightarrow M$ starting
at $p$ with endpoint at ${\cal J}^+$, we have $I^-(\alpha) = I^-(\alpha|_{[t,+\infty)})$ for any $t \in [0,+\infty)$, and therefore any point along $\alpha$ is also visible from infinity.}
\end{remark}

Some of the more basic properties of black holes in this context are summarized in the following proposition. We particularly call the reader's attention to BH4), which precisely clinches the notion of a black hole as a region where causal communication with infinity (i.e., ``distant observers'') is forbidden.
\begin{proposition}
\label{bhprops}
The following properties of the black hole region $B^+$ and the event horizon $H^+$ hold.
\begin{itemize}
\item[BH1)] $I^+(B^+) \subset int(B^+)$. In particular, the interior $int(B^+)$ of the black hole is a future set. Moreover, the event horizon is the common boundary between $int(B^+)$ and the past set $I^-(V_{\infty})$, i.e., $M$ is the disjoint union
\begin{equation}
\label{equality}
M = int(B^+) \dot{\cup} H^+ \dot{\cup}I^-(V_{\infty}).
\end{equation}
In particular, $H^+$ is an {\em achronal boundary} (and hence \cite{ONeillSemiRiemannianGeometryApplications1983,PenroseDifferentialTopology1972} a closed $C^0$ hypersurface) in $M$.
\item[BH2)] For any  $p \in H^+\cap J^-(V_{\infty})$ there exists a future-complete null geodesic ray $\eta \subset H^+$ starting at $p$ with future endpoint at ${\cal J}^+$.
\item[BH3)] $\overline{B^+} = M \setminus I^-(V_{\infty}) = M \setminus I^-({\cal J}^+)$.
\item[BH4)] $p \in M \setminus B^+$ if and only if there exists a future-directed causal curve in $M$ starting at $p$ with a future endpoint on ${\cal J}^+$.
\end{itemize}
\end{proposition}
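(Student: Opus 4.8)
The plan is to base everything on the single topological identity $\overline{B^+}=M\setminus I^-(V_\infty)$, which is simultaneously the first equality in BH3 and the backbone of BH1, and then to deduce the remaining items from it by elementary causal bookkeeping about $V_\infty$ and $\mathcal{J}^+$. For the identity: the inclusion $\overline{B^+}\subseteq M\setminus I^-(V_\infty)$ is immediate since $B^+=M\setminus J^-(V_\infty)\subseteq M\setminus I^-(V_\infty)$ and $I^-(V_\infty)$ is open; conversely, if $q\notin\overline{B^+}$ then a neighbourhood of $q$ lies in $J^-(V_\infty)$, and picking $q^+\gg q$ in that neighbourhood gives $q^+\le r$ for some $r\in V_\infty$, hence $q\ll r$ and $q\in I^-(V_\infty)$. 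Granting this, BH1 comes out as follows: for $q\in I^+(B^+)$, say $q\gg p\in B^+$, the open set $I^+(p)$ lies in $B^+$ (a point of $I^+(p)\cap J^-(V_\infty)$ would give $p\ll r\in V_\infty$, contradicting $p\in B^+$), so $q\in\mathrm{int}(B^+)$; thus $I^+(B^+)\subseteq\mathrm{int}(B^+)$ and $\mathrm{int}(B^+)$ is a future set. Since every point of $B^+$ is a limit of points of $I^+(B^+)\subseteq\mathrm{int}(B^+)$ along a future timelike curve, $\overline{B^+}=\overline{\mathrm{int}(B^+)}$, so $H^+=\partial B^+=\partial(\mathrm{int}(B^+))$ is the boundary of a future set which is nonempty and proper (because $\mathcal{J}^+\neq\emptyset$), hence a closed achronal $C^0$ hypersurface by the standard results invoked; and $M=\mathrm{int}(B^+)\,\dot{\cup}\,H^+\,\dot{\cup}\,I^-(V_\infty)$ is just $M=\mathrm{int}(B^+)\cup\partial B^+\cup(M\setminus\overline{B^+})$ rewritten with the identity.

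To complete BH3 it then remains to check $I^-(V_\infty)=I^-(\mathcal{J}^+)$. For $\subseteq$, if $q\ll y\in V_\infty$ I pick a future-inextendible null geodesic $\nu$ from $y$ with future endpoint $(P,F)\in\mathcal{J}^+$; by Corollary \ref{cor:endpoints}(iii) and Remark \ref{rem:1}, $P=I^-(\nu)\neq\emptyset$, and $y\le\nu(t)$ for $t>0$ gives $q\ll\nu(t)$, so $I^+(q)\cap P\neq\emptyset$, i.e.\ $q\ll(P,F)$ in $\overline{M}$. For $\supseteq$, if $q\ll(P,F)\in\mathcal{J}^+$ I use clause (i) of Definition \ref{scri} to produce a future-complete null ray $\eta$ with future endpoint $(P,F)$ and $P=I^-(\eta)$; then $I^+(q)\cap I^-(\eta)\neq\emptyset$ forces $q\ll\eta(t)$ for some $t$, and $\eta(t)\in V_\infty$ by Remark \ref{rmk5}.

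For BH2, fix $p\in H^+\cap J^-(V_\infty)$; since $H^+\subseteq\overline{B^+}$ we have $p\notin I^-(V_\infty)$, and there is a future causal curve $\mu$ from $p$ to some $r\in V_\infty$ and a future-inextendible null geodesic $\nu$ from $r$ with future endpoint $(P,F)\in\mathcal{J}^+$. No point $x$ of $\mu$ or $\nu$ can lie in $I^-(V_\infty)$, since $p\le x\ll y\in V_\infty$ would contradict $p\notin I^-(V_\infty)$, yet each such $x$ lies in $J^-(V_\infty)=M\setminus B^+$; hence the concatenation $\eta:=\mu\ast\nu$ lies in $\overline{B^+}\setminus B^+\subseteq\partial B^+=H^+$. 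Because $H^+$ is achronal (BH1), a standard causal-theory fact forces the causal curve $\eta$ to be a null geodesic; it starts at $p$, is future-inextendible (it extends $\nu$), has future endpoint $(P,F)\in\mathcal{J}^+$, and so is future-complete by clause (ii) of Definition \ref{scri}. That is the required generator.

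Finally BH4. The direction $p\in M\setminus B^+\Rightarrow$ (causal curve from $p$ to $\mathcal{J}^+$) is easy: $p\in J^-(V_\infty)$ gives $p\le y\in V_\infty$, and splicing the causal curve $p\to y$ with the null geodesic $y\to\mathcal{J}^+$ yields a future causal curve from $p$ with future endpoint in $\mathcal{J}^+$. The converse is where the real work is. Given a future causal curve $\gamma$ from $p$ with future endpoint $(P,F)\in\mathcal{J}^+$, I note $\gamma$ may be taken future-inextendible, that $P=I^-(\gamma)\neq\emptyset$ (Corollary \ref{cor:endpoints}(iii), Remark \ref{rem:1}), and, taking $\eta$ as in clause (i) of Definition \ref{scri}, that $I^-(\eta)=P=I^-(\gamma)$; consequently every $\gamma(t)$ lies in $\overline{I^-(\eta)}=I^-(\eta)\,\dot{\cup}\,\partial I^-(\eta)$. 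If some $\gamma(t_0)\in I^-(\eta)$, then $p\le\gamma(t_0)\ll\eta(s)\in V_\infty$ for a suitable $s$, so $p\in I^-(V_\infty)$; otherwise $\gamma$ lies entirely in the achronal set $\partial I^-(\eta)$ (the boundary of the past set $I^-(\eta)$), hence $\gamma$ is itself a null geodesic and $p\in V_\infty$. Either way $p\in J^-(V_\infty)=M\setminus B^+$. The hard part is precisely this last dichotomy: since $J^-(V_\infty)$ is not closed one cannot deduce a causal relation between $p$ and a point of $V_\infty$ from $I^-(\gamma)=I^-(\eta)$ by a mere limiting argument, so the proof must play ``$\gamma$ meets the open set $I^-(\eta)$'' off against ``$\gamma$ is trapped in the achronal boundary and must therefore be a null geodesic''; throughout, one relies on the standing strong proper causality of $\overline{M}$ (via Proposition \ref{prop:strongfirst} and Remark \ref{rem:1}) to ensure the relevant boundary pairs have nonempty first component.
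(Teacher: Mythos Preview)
Your proof is correct and follows essentially the same strategy as the paper's, with two noteworthy differences. First, in BH2 you explicitly verify $\eta\subset H^+$ by showing each point of the concatenated curve lies in $\overline{B^+}\setminus B^+$; the paper's proof only argues that $\eta$ is globally achronal (hence a null geodesic) and does not spell out the containment in $H^+$, so your version is actually more complete on this point. Second, for the converse in BH4 you work harder than necessary: the paper simply invokes the standard dichotomy ``either $\gamma$ is a null geodesic (whence $p\in V_\infty$ directly) or else $p\in I^-(\gamma)=P$, whence $p\in I^-(\mathcal{J}^+)=I^-(V_\infty)$ by BH3''. Your detour through $\eta$ and the achronal boundary $\partial I^-(\eta)$ is valid but redundant, and your closing remark that this step is ``the hard part'' overstates the difficulty---the direct dichotomy makes it routine. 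Also, strong proper causality is not a standing assumption in the paper; it enters only through Remark~\ref{rem:1} (to guarantee $P\neq\emptyset$ for points of $\mathcal{J}^+$), so you should cite that remark rather than the global hypothesis.
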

\begin{proof}
  \textit{BH1):} Let $q,p \in M$, with $q \in B^+$ and $q \ll_g p$. Then,
\[
q \notin J^-(V_{\infty}) \Longrightarrow p \notin  J^-(V_{\infty}) \Longrightarrow p \in B^+,
\]
but this actually proves that $I^+(q) \subset B^+$, and since $I^+(q)$ is open and contains $p$, that $p \in int(B^+)$. Now, $\overline{B^+} = int(B^+)\dot{\cup} H^+$, and
\[
p \notin \overline{B^+} \Longrightarrow \exists U \ni p \mbox{ open with }U \cap B^+ =\emptyset \Longrightarrow U \subset J^-(V_{\infty}) \Longrightarrow p \in I^-(V_{\infty}).
\]
That is, $M \setminus \overline{B^+} \subset I^-(V_{\infty})$. These implications are easily reversed, so also $I^-(V_{\infty}) \subset M \setminus \overline{B^+}$. Therefore, $M\setminus\overline{B}^+=I^{-}(V_{\infty})$, whence (\ref{equality}) follows.

\smallskip

\textit{BH2):} Let $p \in H^+\cap J^-(V_{\infty})$, and let $\alpha$ be a future-directed causal curve segment starting at $p$ and ending at some point $q \in V_{\infty}$. By the definition of visible from infinity, there exists a future-complete null geodesic $\gamma$ starting at $q$ with a future endpoint $(P,F) \in {\cal J}^+$. Let $\eta$ be the juxtaposition of these two curves. This is a future-inextendible causal curve starting at $p$. Now, we claim that its image has to be globally achronal. Otherwise, we could find some $r \in \gamma$ with $p \ll_g r$. But $H^+ \equiv \partial I^-(V_{\infty})$ by BH1), and $r \in J^-(V_{\infty})$ (cf. Remark \ref{rmk5}), so this would mean that $p \in I^-(V_{\infty})\cap \partial I^-(V_{\infty})$, which is impossible. But since $\eta$ is globally achronal, it can be affinely reparametrized as null geodesic ray (which we still call $\eta$). Since $\gamma$ is future-complete, so is $\eta$. Clearly, $(P,F)$ is also an endpoint for $\eta$ which thus has a future endpoint on ${\cal J}^+$, as desired.

\smallskip

\textit{BH3):} The equality $\overline{B^+} = M \setminus I^-(V_{\infty})$ follows immediately from (\ref{equality}). Let $p \in I^-(V_{\infty})$. Then for some $q \in I^+(p)$, there exist a future-directed  null geodesic $\gamma$ starting at $q$ and with an endpoint $(P,F) \in {\cal J}^+$. Moreover, taking into account that $P\neq \emptyset$ (recall Remark \ref{rem:1}), item \ref{item:endpoint3} of Corollary \ref{cor:endpoints} implies that $P=I^{-}(\gamma)$. But this means $p \in P=I^-(\gamma)$ and hence, from \eqref{eq:7}
\[
(I^-(p),I^+(p))\ll (P,F) \in {\cal J}^+,
\]
that is, $(I^-(p),I^+(p)) \in I^-({\cal J}^+)\cap M$. Therefore  we have
\[
I^-(V_{\infty}) \subset I^-({\cal J}^+) \cap M.
\]
Conversely, let $p\equiv (I^-(p),I^+(p)) \ll (P,F) \in {\cal J}^+$. This means that $p \in P=I^-(\gamma)$ for some future-complete  null ray $\gamma$, and so $p\ll_g q$ for some $q \in \gamma$. But then $q$ is visible from infinity (by Remark \ref{rmk5}), so $p \in I^-(V_{\infty})$. This now implies the opposite inclusion \[ I^-({\cal J}^+) \cap M\subset I^-(V_{\infty}), \] and hence $M \setminus I^-(V_{\infty}) = M \setminus I^-({\cal J}^+)$.

\smallskip

\textit{BH4):} Let $\gamma:[0,A) \rightarrow M$ ($A \leq +\infty$) be a future-inextendible causal curve starting at $p =\gamma(0)$ and with a future endpoint $(P,F) \in {\cal J}^+$. Either $\gamma$ is a (necessarily future-complete) null geodesic, in which case $p$ is visible from infinity, or else $p \in I^-(\gamma)$. But then $(I^-(p),I^+(p)) \in I^-({\cal J}^+)$, and hence $p \in I^-(V_{\infty})$ from the proof of \textit{BH3)}. In any case $p \in J^-(V_{\infty})$ and so $p \notin B^+$.

Conversely, if $p \notin B^+$ then there exists a future-complete null geodesic $\gamma:[0,+\infty) \rightarrow M$ with $p \in J^-(\gamma(0))$
such that some $(P,F)\in {\cal J}^+$ is endpoint of $\gamma$. We can therefore juxtapose a future-directed causal curve from $p$ to $\gamma(0)$ with $\gamma$ to obtain a future-inextendible causal curve $\beta$ starting at $p$. Clearly, $(P,F)$ is also endpoint of $\beta$.
\end{proof}

At least in some important cases, one can draw a direct connection between the absence of a black hole region and nonspacelike geodesic completeness.

\begin{proposition}
\label{completeness1}
Suppose that $(M,g)$ is globally hyperbolic and future null complete, with non-compact Cauchy hypersurfaces. Then ${\cal J}^+ \neq \emptyset$ but $B^+ = \emptyset$.
\end{proposition}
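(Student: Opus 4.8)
The plan is to establish the stronger statement $V_{\infty}=M$, from which $\mathcal{J}^{+}\neq\emptyset$ and $B^{+}=M\setminus J^{-}(V_{\infty})=\emptyset$ both follow at once; equivalently, I would show that through every $q\in M$ there passes a future-complete, globally achronal null geodesic whose future $c$-boundary endpoint lies on $\mathcal{J}^{+}$.

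I would first collect the relevant causal preliminaries. A globally hyperbolic spacetime is strongly causal and causally continuous, so by Proposition~\ref{prop:causalcontinuity} its $c$-completion is strongly properly causal; moreover, by Remark~\ref{r2}, $\uparrow I^{-}(\gamma)=\emptyset$ for every future-directed causal curve $\gamma$, so every future-inextendible causal curve is future-regular. Hence, by Proposition~\ref{prop:strongfirst} and Corollary~\ref{cor:endpoints}, any future-inextendible null geodesic $\eta$ has a future endpoint $(P,F)\in\partial M$ with $P=I^{-}(\eta)\neq\emptyset$. Finally, future null completeness guarantees that every null geodesic used below is defined on all of $[0,+\infty)$, and it makes clause~(ii) of Definition~\ref{scri} hold automatically at every boundary point that is a future endpoint of a null geodesic.

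The crux --- and the step I expect to be the main obstacle --- is the purely causal assertion that \emph{in a globally hyperbolic spacetime with non-compact Cauchy hypersurfaces a future-inextendible achronal null geodesic emanates from every point $q$.} I would argue by contradiction. Fix an auxiliary complete Riemannian metric, let $N_{q}$ be the (compact) sphere of future null directions at $q$, and for $v\in N_{q}$ let $\gamma_{v}:[0,+\infty)\to M$ be the null geodesic from $q$ in direction $v$ (globally defined by future null completeness); set $b_{v}:=\sup\{s\ge0:\gamma_{v}(s)\notin I^{+}(q)\}\in(0,+\infty]$, the last parameter at which $\gamma_{v}$ still lies in $\partial J^{+}(q)$. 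If no achronal null geodesic leaves $q$ then $b_{v}<+\infty$ for all $v$; since $I^{+}(q)$ is open and $\gamma_{v}(s)$ varies continuously with $(v,s)$, the map $v\mapsto b_{v}$ is upper semicontinuous, hence bounded by some $B<+\infty$. As $q=\gamma_{v}(0)$ and every other point of $\partial J^{+}(q)=J^{+}(q)\setminus I^{+}(q)$ is joined to $q$ by an unbroken null geodesic whose image stays in $\partial J^{+}(q)$, the closed set $\partial J^{+}(q)$ is contained in the compact set $\{\gamma_{v}(s):v\in N_{q},\,0\le s\le B\}$ and is therefore compact. Now choose a smooth splitting $M\cong\mathbb{R}\times\Sigma$ whose first factor is a Cauchy time function $\tau$: compactness of $\partial J^{+}(q)$ yields some $\tau_{0}$ with the Cauchy slice $\Sigma_{\tau_{0}}$ disjoint from $\partial J^{+}(q)$ and meeting $I^{+}(q)$, hence $\Sigma_{\tau_{0}}\subset I^{+}(q)$ by connectedness; but then every fibre $\{(\tau,y):\tau\in\mathbb{R}\}$ over $y\in\Sigma$ meets $I^{+}(q)$ (at $\tau=\tau_{0}$) and also meets $M\setminus J^{+}(q)$ (at $\tau<\tau(q)$, since $J^{+}(q)\subset\{\tau\ge\tau(q)\}$), so by connectedness it crosses $\partial J^{+}(q)$. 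Thus the projection $\pi:M\to\Sigma$ sends the compact set $\partial J^{+}(q)$ onto all of $\Sigma$, forcing $\Sigma$ to be compact --- a contradiction. (Morally this is the classical fact that non-compact Cauchy surfaces force the existence of null rays; getting the upper semicontinuity of $v\mapsto b_{v}$ and the structure of $\partial J^{+}(q)$ straight is the delicate point.)

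Granting the lemma, the rest is routine. For $q\in M$ pick a future-inextendible achronal null geodesic $\gamma_{q}$ out of $q$; it is future-complete by hypothesis, future-regular by Remark~\ref{r2}, and by the preliminaries it has a future endpoint $(P,F)\in\partial M$ with $P=I^{-}(\gamma_{q})$. Then $(P,F)\in\mathcal{J}^{+}$: clause~(i) of Definition~\ref{scri} is witnessed by $\gamma_{q}$ itself and clause~(ii) holds by future null completeness; in particular $\mathcal{J}^{+}\neq\emptyset$. Since $\gamma_{q}$ also exhibits $q$ as visible from infinity (Definition~\ref{defbh}), we get $V_{\infty}=M$, whence $B^{+}=M\setminus J^{-}(V_{\infty})=\emptyset$.
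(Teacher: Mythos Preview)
Your proof is correct and follows essentially the same strategy as the paper's: show that from every point $q$ there emanates a future-inextendible null geodesic ray (using non-compactness of the Cauchy hypersurfaces to rule out compactness of $\partial I^{+}(q)$), then check that its $c$-boundary endpoint lies in $\mathcal{J}^{+}$ and conclude $q\notin B^{+}$. The only difference is cosmetic: the paper cites standard references (O'Neill, Thm.~61, Ch.~14, and Beem--Ehrlich--Easley, Prop.~8.18) for the existence of the ray, whereas you supply the explicit upper-semicontinuity/compactness argument for $\partial J^{+}(q)$ and the projection onto $\Sigma$ yourself.
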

\begin{proof}
Let $p \in M$. Assume first that $\partial I^+(p)$ is compact. In this case, a standard argument (see, e.g., the proof of Theorem 61, Ch.14, p. 437 of \cite{ONeillSemiRiemannianGeometryApplications1983}) implies that $\partial I^+(p)$ is homeomorphic to a given Cauchy hypersurface $S\subset M$, which is absurd. Therefore, $\partial I^+(p)$ is non-compact, and again by simple arguments using limit curves (cf. the proof of Prop. 8.18, Ch. 8, p. 289 of \cite{BeemGlobalLorentzianGeometry1996}) we conclude that there exists an inextendible future-directed null geodesic ray $\eta$ starting at $p$. Now, consider the TIP $P=I^{-}(\eta)$. By Remark \ref{r0}, we can pick $F \in \check{M}\cup\{\emptyset\}$ such that $(P,F) \in \overline{M}$ (actually, $F\equiv \emptyset$ will do). Since $P$ is a TIP, $(P,F) \in \partial M$. By Remark \ref{r}, global hyperbolicity guarantees that $\eta$ is future-regular, and Proposition \ref{prop:strongfirst} now implies that $(P,F)$ is a future endpoint of $\eta$. Finally, null geodesic completeness and the future-regularity of $\eta$ now imply that $(P,F) \in {\cal J}^+$. We conclude that $p \notin B^+$ by Proposition \ref{bhprops} BH4).
\end{proof}

If we assume some extra natural geometric conditions on $(M,g)$, we can show that the general conclusion of the previous proposition can be extended to strongly causal spacetimes.

\begin{theorem}
\label{completeness2}
Suppose that the strongly causal spacetime $(M^{n+1},g)$, with $n\geq 2$, satisfies the following conditions:
\begin{itemize}
\item[(a)] $(M,g)$ is timelike and null geodesically complete;
\item[(b)] $(M,g)$ satisfies the {\em timelike convergence condition}, i.e., $Ric(v,v)\geq 0$ for any timelike $v\in TM$;
\item[(c)] ${\cal J}^+ \neq \emptyset$;
  \item[(d)] $\overline{M}$ is strongly properly causal (recall this holds, in particular, if $(M,g)$ is causally continuous - cf. Prop. \ref{prop:causalcontinuity}.)
\end{itemize}
Then $B^+ = \emptyset$.
\end{theorem}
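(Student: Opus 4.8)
The plan is to argue by contradiction: assume $B^+\neq\emptyset$ and produce a contradiction via a Penrose–type focusing argument, the key leverage being that, by (a) and (d) together, an \emph{achronal} future-inextendible null geodesic can never ``fail'' to deliver a point of $\mathcal{J}^+$. First I would record the preliminary structure. By BH4) of Proposition~\ref{bhprops}, $p\in B^+$ is equivalent to $J^+(p)\cap V_\infty=\emptyset$, and since $J^+(q)\subset J^+(p)$ whenever $q\in J^+(p)$ this shows immediately that $B^+$ is a future set. Hypothesis (c) gives $V_\infty\neq\emptyset$ (clause (i) of Definition~\ref{scri} supplies a future-complete null ray, whose points are in $V_\infty$ by Remark~\ref{rmk5}), so $B^+\neq M$ and hence $H^+=\partial B^+\neq\emptyset$. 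The crucial reduction is: \emph{if $\gamma$ is an achronal, future-inextendible null geodesic issuing from a point $q$, then $q\in V_\infty$.} Indeed, by (a) $\gamma$ is future-complete, hence a future-complete null ray; by (d) it is future-regular; so by Proposition~\ref{prop:strongfirst} it has a future endpoint $(P,F)$ with $P=I^-(\gamma)$. This $(P,F)$ lies in $\partial M$ (otherwise Corollary~\ref{cor:endpoints} would force $\gamma$ to have an ordinary future endpoint in $M$, contradicting inextendibility); it satisfies clause (i) of Definition~\ref{scri} via $\gamma$ itself; and it satisfies clause (ii) automatically, since by (a) \emph{every} future-inextendible null geodesic in $M$ is future-complete. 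Thus $(P,F)\in\mathcal{J}^+$ and $q\in V_\infty$.

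Since $V_\infty\cap B^+=\emptyset$, no point of $B^+$ can issue an achronal future-inextendible null geodesic. Fix $p\in B^+$. Then every future null geodesic $\gamma$ from $p$ fails to be achronal, i.e.\ $\gamma(s)\in I^+(\gamma(s_1))\subset I^+(p)$ for suitable $s_1<s$, so $\gamma$ abandons the future horismos $E^+(p)=J^+(p)\setminus I^+(p)$ after finite affine parameter; in particular every generator of $\partial I^+(p)$ leaves it in finite affine time. At this point I would invoke the standard focusing machinery: completeness (a), so that these geodesics are defined on all of $[0,+\infty)$; the null convergence condition (a consequence of (b)); and $n\geq 2$, so that the transverse Jacobi equation along null geodesics is nontrivial (there is nothing to focus in $1+1$ dimensions); combined with the semicontinuity of the null cut function over the compact sphere of future null directions at $p$, this yields a uniform bound on the cut parameters and hence that $\partial I^+(p)$ (equivalently $E^+(p)$ up to closure) is compact. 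In other words, every $p\in B^+$ is a \emph{future-trapped point}.

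The final step — which I expect to be the main obstacle — is to turn ``$B^+$ consists of future-trapped points'' into an outright contradiction. Here the remaining hypotheses must be combined in a Penrose/Hawking-type singularity argument: a future-trapped point in a strongly causal (hence chronological) spacetime satisfying the timelike convergence condition (b) and the completeness (a) cannot coexist with an achronal future-complete null ray which, by strong causality (non-imprisonment), escapes every compact set — and such a ray is precisely what hypothesis (c), $\mathcal{J}^+\neq\emptyset$, provides. Note that (c) is genuinely needed: the Einstein static universe satisfies (a), (b), (d) and has future-trapped points, yet possesses no achronal null rays, so there $\mathcal{J}^+=\emptyset$. Making the contradiction precise — whether to localize to the region $I^+(p)$, which once $\partial I^+(p)$ is compact is globally hyperbolic with compact cross-sections and so amenable to Hawking's compact-Cauchy-surface theorem, or to argue globally with limit curves in the style of Penrose's original proof, all while using the energy and completeness inputs only in the forms (a), (b) actually assumed and without any extra genericity hypothesis — is the technical heart of the argument and the place where the exact interplay of (a)–(d) (in particular the role of strong proper causality) has to be pinned down. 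Once the contradiction is in hand, $B^+=\emptyset$ follows.
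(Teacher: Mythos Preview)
Your setup is essentially the paper's: assume $B^+\neq\emptyset$, pick $p\in B^+$, and use (a)+(d) to show that any achronal future-inextendible null geodesic from a point $q$ forces $q\in V_\infty$; hence no null ray issues from $p$, so $E^+(p)$ is compact (the paper cites \cite[Prop.~8.18]{BeemGlobalLorentzianGeometry1996} rather than a cut-function argument, which avoids the closedness issue for $E^+(p)$ you'd otherwise have to address). Up to here you are on the paper's track.

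The genuine gap is your final step. None of the standard singularity theorems you gesture at will close the argument with only (a)--(d): Hawking--Penrose needs the generic condition, Hawking's compact-Cauchy theorem needs a mean-curvature sign on a spacelike hypersurface (and $\partial I^+(p)$ is null), and Penrose's theorem needs a non-compact Cauchy hypersurface. The paper's missing ingredient is the \emph{Lorentzian Splitting Theorem}. From the compact $E^+(p)$ and strong causality one extracts a causal line $\gamma$ meeting $E^+(p)$ (\cite[Thm.~8.13]{BeemGlobalLorentzianGeometry1996}). This line cannot be null: a point $x\in\gamma\cap E^+(p)$ lies in the future set ${\rm int}(B^+)$ (one takes $p\in{\rm int}(B^+)$ from the outset, using BH1)), and the future half of $\gamma$ from $x$ would then be a null ray from a point of $B^+$, contradicting your own key reduction. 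So $\gamma$ is a complete timelike line, and (a)+(b) let you invoke the splitting theorem: $(M,g)\cong(\mathbb{R}\times S,-dt^2\oplus h)$ with $(S,h)$ complete Riemannian. This is globally hyperbolic with Cauchy hypersurfaces $\cong S$; if $S$ is non-compact, Proposition~\ref{completeness1} gives $B^+=\emptyset$, a contradiction; if $S$ is compact, no future-inextendible null geodesic is achronal, so $\mathcal{J}^+=\emptyset$, contradicting (c). Your Einstein-static remark is exactly the second horn of this dichotomy, so you were close --- what you were missing is that the splitting theorem is the bridge from ``trapped point'' to ``product structure''.
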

\begin{proof}
Suppose, by way of contradiction, that $(a)-(d)$ do hold, but $B^+ \neq \emptyset$. In this case, Proposition \ref{bhprops} BH1) also implies that ${\rm int} (B^+) \neq \emptyset$; therefore, we can pick $p \in {\rm int} (B^+)$. Now, if there exists a future-directed null ray $\eta$ starting at $p$, from $(a)$ and $(d)$ it must be both future-complete and future-regular. Therefore, arguing exactly as in the end of the proof of Proposition \ref{completeness1} with strong proper causality in lieu of global hyperbolicity, we would conclude that $p \notin B^+$, a contradiction. But if such a ray does not exist, then $p$ is a {\em future trapped set}, i.e., its future horismos $E^+(p):= J^+(p) \setminus I^+(p)$ is compact, again by the proof of  of \cite[Prop. 8.18, Ch. 8, p. 289]{BeemGlobalLorentzianGeometry1996}. Therefore, since $(M,g)$ is strongly causal, the latter result together with \cite[Theor. 8.13]{BeemGlobalLorentzianGeometry1996} imply that $(M,g)$ admits a causal line intersecting $E^+(p)$, $\gamma$ say.

We claim that $\gamma$ is actually a {\em timelike} line. Without loss of generality we may assume that $\gamma$ is future-directed. Now, Proposition \ref{bhprops} BH1) means that $int(B^+)$ is a future set, and hence $E^+(p)$ is contained in ${\rm int}(B^+)$. Therefore, we can pick some $x \in {\rm int}(B^+) \cap \gamma$. If $\gamma$ were null, then its portion to the future of $x$ would be a future-directed null geodesic ray, that can be considered complete and future-regular, that is, (again arguing exactly as in the proof of Proposition \ref{completeness1}) with future endpoint on ${\cal J}^+$, which again contradicts Proposition \ref{bhprops} BH4). Thus, $\gamma$ has to be timelike.

However, conditions $(a)$ and $(b)$, together with the existence of a (complete) timelike geodesic line mean that we can apply the Lorentzian Splitting Theorem (see, e.g., Ch. 14 of \cite{BeemGlobalLorentzianGeometry1996} and references therein) to conclude that $(M,g)$ is actually {\em isometric} to a product spacetime $(\mathbb{R}\times S,-dt^2\oplus h)$, where $(S,h)$ is a complete Riemannian manifold.

Now, Theorem 3.67 in Ref. \cite{BeemGlobalLorentzianGeometry1996} implies that $(M,g)$ is actually globally hyperbolic, with Cauchy hypersurfaces homeomorphic to $S$. Hence $S$ cannot be non-compact, for in that case Proposition \ref{completeness1} would mean that $B^+ \equiv \emptyset$, contrary to our assumption. We conclude that $S$ is compact.

But then no future-inextendible null geodesic can be achronal, i.e., a geodesic ray; so that in this case we would have ${\cal J}^+ \equiv \emptyset$, contradicting $(c)$. This final contradiction ends the proof.
\end{proof}

In order to get deeper results out of our black hole definition, we need to refine it. We shall need some additonal conditions ensuring that future null infinity is ``good enough''. The following definition is meant to encode this.
\begin{definition}
\label{ample}
The future null infinity ${\cal J}^+$ is said to be:
\begin{itemize}
\item[(A1)] {\em Ample} if for any compact set $C \subset M$, and for any connected component ${\cal J}^+_0$ of ${\cal J}^+$,
${\cal J}^+_0\cap (\overline{M} \setminus \widetilde{I^+ (C)})$ is a non-empty open set, where
   \begin{equation}
\widetilde{I^+(C)}:=\{(P,F)\in \overline{M}: I^-(x)\subset P \mbox{ for some $x\in C$}\}.\label{eq:2}
\end{equation}
(Roughly speaking, no connected component of future null infinity can be entirely contained in the future of a compact set.)


\item[(A2)] \emph{Past-complete} if given $(P,F)\in \mathcal{J}^+$, any $(P',F')\in\partial M$ with $P'=I^-(\eta)$, being $\eta$ a future-directed inextendible null geodesic generator of $\partial P$, also belong to $\mathcal{J}^+$.


\end{itemize}
We will say that the future null infinity ${\cal J}^+$ is \emph{regular} if it is both ample and past-complete.
\end{definition}
Condition (A1) in the previous definition means that the future of compact sets cannot encopass the whole future null infinity, and (A2) that ${\cal J}^+$ contains any point on the future boundary which may lie in its past. Together, they mean that the future null infinity is "big" in a precise sense. These assumptions are not really restrictive when some classical examples of physical interest are considered. In fact, on the one hand, the condition that the {\em conformal} $\mathcal{J}^+$ escapes from the future of any compact set holds, for instance, in many standard solutions of Einstein field equation, with vanishing cosmological constant, admitting a conformal completion for which conformal null infinity $\mathcal{J}^+$ is a null hypersurface in the extended spacetime having past-complete null generators, such as those in the Kerr-Newman family with suitable parameters. It also holds in some solutions to the Einstein fields equation with a negative cosmological constant, such as the Schwarzschild-Anti de Sitter spacetime, for example. On the other hand, it fails on, say, the Schwarzschild-de Sitter solutions, and will tend to fail, more generally, on globally hyperbolic spacetimes with compact Cauchy hypersurfaces. In any case, the seemingly technical assertion on the open character of the intersection considered in (A1) holds in very general situations, as becomes apparent from the following result:



\begin{proposition}
\label{lema:auxiliar}
If $\hat{M}$ is Hausdorff, then $\widetilde{I^+(C)}$ is a closed set.
\end{proposition}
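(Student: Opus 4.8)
The plan is to show that the complement $\overline{M}\setminus\widetilde{I^+(C)}$ is open, equivalently that $\widetilde{I^+(C)}$ is sequentially closed in the chronological topology (recall that by (LO2) the chronological topology is sequential, so sequential closedness suffices). So I would take a sequence $\sigma=\{(P_n,F_n)\}\subset\widetilde{I^+(C)}$ converging to some $(P,F)\in\overline{M}$ and try to prove $(P,F)\in\widetilde{I^+(C)}$, i.e. that there is $x\in C$ with $I^-(x)\subset P$. For each $n$, by definition of $\widetilde{I^+(C)}$ there is $x_n\in C$ with $I^-(x_n)\subset P_n$. Since $C$ is compact, after passing to a subsequence we may assume $x_n\to x\in C$ in $M$; by passing to this subsequence we do not lose convergence of $\sigma$ to $(P,F)$ (the limit operator only grows under subsequences, so $(P,F)$ is still a limit point of the subsequence, using first-orderness / (LO1)).

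The heart of the matter is then to pass the inclusions $I^-(x_n)\subset P_n$ to the limit and deduce $I^-(x)\subset P$. First, since $(P,F)\in L(\sigma)$ and $P\neq\emptyset$ (one should check $P\neq\emptyset$: if $P=\emptyset$ then $(P,F)=(\emptyset,F)$ would have to be a limit of the $(P_n,F_n)$ with $P_n\supset I^-(x_n)\neq\emptyset$, and I would argue via $\hat{L}$ that $\emptyset$ is never a maximal IP in $\mathrm{LS}(P_n)$ when the $P_n$ contain a fixed-up-to-convergence chronological past, so this case is excluded), the definition of $L$ in \eqref{eq:4} gives $P\in\hat{L}(P_n)$, hence by \eqref{eq:3} we have $P\subset\mathrm{LI}(P_n)$ \emph{and} $P$ is a maximal IP in $\mathrm{LS}(P_n)$. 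Now fix any $q\in I^-(x)$. Since $I^-$ is open and $x_n\to x$ in $M$, for $n$ large we have $q\in I^-(x_n)\subset P_n$; hence $q\in\mathrm{LI}(P_n)$. This shows $I^-(x)\subset\mathrm{LI}(P_n)$. It remains to upgrade this to $I^-(x)\subset P$. Here is exactly where the Hausdorff hypothesis on $\hat{M}$ enters: I claim $\mathrm{LI}(P_n)$ is itself an IP (or at least contained in one) whenever $\sigma$ converges, and by maximality of $P$ in $\mathrm{LS}(P_n)\supset\mathrm{LI}(P_n)$ together with Hausdorffness of $\hat{M}$ one gets $\mathrm{LI}(P_n)=P$ (in a Hausdorff $\hat{M}$ a convergent sequence of IPs has a unique limit, and that forces $\mathrm{LI}$ and the maximal-in-$\mathrm{LS}$ IP to coincide); granting this, $I^-(x)\subset\mathrm{LI}(P_n)=P$, so $(P,F)\in\widetilde{I^+(C)}$, completing the argument.

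The main obstacle I anticipate is precisely the last step: justifying that Hausdorffness of $\hat{M}$ forces $\mathrm{LI}(P_n)$ to be a genuine IP equal to $P$ rather than merely a subset of $P$. Without Hausdorffness one can have $\mathrm{LI}(P_n)\subsetneq P$ (the sequence $P_n$ can "converge" to a larger IP than its set-theoretic $\liminf$), and then an $x\in C$ with $I^-(x)\subset\mathrm{LI}(P_n)$ does not immediately give $I^-(x)\subset P$ — but that direction is the wrong one anyway, so the real worry is the reverse. Let me restate the needed fact cleanly: in the future chronological topology on $\hat{M}$, if $\hat{M}$ is Hausdorff and $P_n\to P$, then $P=\mathrm{LI}(P_n)$ (both inclusions). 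The inclusion $P\subset\mathrm{LI}(P_n)$ is built into \eqref{eq:3}; for $\mathrm{LI}(P_n)\subset P$, note $\mathrm{LI}(P_n)$ is a past set, take a maximal IP $P'$ inside it, then $P'\subset\mathrm{LI}(P_n)\subset\mathrm{LS}(P_n)$ and maximality makes $P'$ a candidate limit in $\hat{L}$, i.e. $P'\in\hat{L}(P_n)$, so $P_n\to P'$; Hausdorffness then yields $P'=P$, hence $\mathrm{LI}(P_n)\subset\overline{\bigcup\text{maximal IPs}}=P$ after checking $\mathrm{LI}(P_n)$ is covered by its maximal sub-IPs (true for past sets in a distinguishing spacetime). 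Assembling these pieces gives the claim. I would double-check the edge cases $C$ meeting $\partial M$ trivially and $x$ lying where $I^-(x)$ might be empty (impossible: $I^-(x)\neq\emptyset$ always since $M$ has no past endpoints being strongly causal with... actually $I^-(x)\neq\emptyset$ for every $x$ in a spacetime), but these are routine.
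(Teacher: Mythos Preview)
Your overall strategy---prove sequential closedness, extract $x\in C$ by compactness, and show $I^-(x)\subset\mathrm{LI}(P_n)$---is exactly the paper's. The gap is in your final step: the claim that Hausdorffness of $\hat{M}$ forces $\mathrm{LI}(P_n)=P$ is \emph{false}, and both arguments you sketch for it break down.

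For a concrete counterexample, take $2$-dimensional Minkowski space with $p_n=(1/n,0)\to p=(0,0)$ and $P_n=I^-(p_n)$. The sequence is decreasing, so $\mathrm{LI}(P_n)=\mathrm{LS}(P_n)=\bigcap_n I^-(p_n)=J^-(p)$, while the (unique) chronological limit is $P=I^-(p)$. Thus $\mathrm{LI}(P_n)=J^-(p)\supsetneq I^-(p)=P$, even though $\hat{M}$ is Hausdorff here. Your assertion that ``$\mathrm{LI}(P_n)$ is covered by its maximal sub-IPs'' fails: $\mathrm{LI}(P_n)$ is a past set but generally \emph{not open}, and the null-cone points of $J^-(p)$ lie in no IP contained in $J^-(p)$. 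Separately, your step ``$P'$ maximal IP in $\mathrm{LI}(P_n)$ implies $P'\in\hat{L}(P_n)$'' requires $P'$ to be maximal in $\mathrm{LS}(P_n)$, which does not follow from maximality in the smaller set $\mathrm{LI}(P_n)$ when $\mathrm{LI}\subsetneq\mathrm{LS}$.

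The paper does not try to prove $\mathrm{LI}(P_n)=P$; it only needs the weaker conclusion $I^-(x)\subset P$. Starting from $I^-(x^*)\subset\mathrm{LI}(P_n)\subset\mathrm{LS}(P_n)$, it invokes Zorn's Lemma to produce an IP $\overline{P}$ with $I^-(x^*)\subset\overline{P}$ that is \emph{maximal in} $\mathrm{LS}(P_n)$, and passes to a subsequence so that additionally $\overline{P}\subset\mathrm{LI}$ of that subsequence. Then $\overline{P}\in\hat{L}$ of the subsequence, while $P$ remains in $\hat{L}$ of the subsequence (limit operators only grow under subsequences); Hausdorffness of $\hat{M}$ now gives $\overline{P}=P$, hence $I^-(x^*)\subset P$. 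This Zorn-plus-subsequence step is the missing idea in your argument.
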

\begin{proof}
Let $\left\{(P_n,F_n)  \right\}_{n}\subset \widetilde{I^+(C)}$ be a sequence and consider $(P,F)\in L(\left\{ (P_{n},F_n) \right\}_{n})$. Our aim is to prove that $(P,F)\in \widetilde{I^{+}(C)}$. For any $n$, let $x_{n}\in C$ be a point so $I^-(x_n)\subset P_{n}$. Observe that, as $C$ is compact, we can assume (up to a subsequence) that $\{x_{n}\}_{n}$ converges in $C$ to a point, say $x^{*}\in C$.

From such a convergence it follows (see \cite[Remark 3.17]{Floresfinaldefinitioncausal2011}) that $I^{-}(x^{*})\subset \mathrm{LI} (\left\{ I^-(x_{n}) \right\}_{n})$, and so, that $I^{-}(x^{*})\subset \mathrm{LI} (\left\{ P_n\right\}_{n})$. From standard arguments involving Zorn's Lemma, and up to a subsequence, we can ensure the existence of a IP $\overline{P}$ so $I^{-}(x^{*})\subset \overline{P}$ and $\overline{P}$ is maximal on $\mathrm{LS} (\left\{ P_{n} \right\}_{n})$, i.e., $\overline{P}\in \hat{L}(\left\{ P_{n} \right\}_{n})$. As $\hat{M}$ is Hausdorff and $P\in \hat{L}(\left\{ P_n \right\}_{n})$, $P=\overline{P}\supset I^{-}(x^{*})$, proving that $(P,F)\in \widetilde{I^+(C)}$.
\end{proof}

Condition (A2) in Definition \ref{ample} is comparatively more restrictive than (A1). As we will see on the Appendix, this condition seems unavoidable if we are to obtain the following theorem, insofar as imposing stronger requirements on the causality of the underlying spacetime will not help one to evade it.

\begin{theorem}
\label{main}
Assume that ${\cal J}^+$ is regular and let $C \subset M$ be an {\em achronal} compact set. If $C$ is not entirely contained in $B^+$, then there exists a future-complete null $C$-ray $\eta: [0,+\infty) \rightarrow M$ \cambios{with endpoint} $(P,F) \in {\cal J}^+$.
\end{theorem}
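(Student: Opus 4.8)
The plan is to argue by contradiction, along lines parallel to the proof of Theorem \ref{completeness2} and the use of the Beem--Ehrlich--Easley machinery of trapped sets and causal lines. Suppose ${\cal J}^+$ is regular, $C$ is achronal and compact, but $C \not\subset B^+$ while there is \emph{no} future-complete null $C$-ray with endpoint on ${\cal J}^+$. Pick $p \in C \setminus B^+$. By Proposition \ref{bhprops} BH4), there is a future-directed causal curve from $p$ with future endpoint on ${\cal J}^+$; the idea is to promote this to a statement about a point of $C$ emitting a null $C$-ray to infinity, or else extract a compact trapped set and derive a contradiction with ampleness (A1).

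First I would treat the ``dichotomy'' at $p$: either some future-inextendible null geodesic ray $\eta$ starts at $p$, or $p$ is a future trapped set in the sense that $E^+(p)$ is compact (this is exactly the alternative used in Theorem \ref{completeness2}, via the proof of \cite[Prop. 8.18]{BeemGlobalLorentzianGeometry1996}; here I would use the analogous horismos $E^+(C) := J^+(C)\setminus I^+(C)$ of the compact achronal set $C$, which is well-behaved precisely because $C$ is compact and achronal). In the first branch, future regularity of ${\cal J}^+$ and the hypothesis $p\notin B^+$ together with BH4) should force $\eta$ (suitably reparametrized after checking global achronality, as in the proof of BH2)) to be future-complete with endpoint on ${\cal J}^+$; this is the desired $C$-ray (noting $\eta$ starts on $C$), contradiction. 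So the real content is the second branch: $E^+(C)$ is compact. Then, since $(M,g)$ is strongly causal, \cite[Theor. 8.13]{BeemGlobalLorentzianGeometry1996} gives a causal line $\gamma$ meeting $E^+(C)$. One shows $\gamma$ cannot be a null line: if it were, its future portion past a point of $E^+(C)$ would be a future-complete achronal null ray (completeness must be arranged — this is where I expect to need either an extra completeness hypothesis in scope or the regularity/strong proper causality to deliver the endpoint), hence would realize a $C$-ray to ${\cal J}^+$ by the same argument as above, contradicting the standing assumption. So $\gamma$ is a timelike line; but \emph{no} completeness of $(M,g)$ is assumed in Theorem \ref{main}, so the Splitting Theorem route used in Theorem \ref{completeness2} is unavailable, and I would instead argue directly with ${\cal J}^+$.

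The decisive step, and the main obstacle, is to convert ``$E^+(C)$ compact'' into a contradiction with ampleness. The point is that compactness of the future horismos of $C$ should imply that $J^+(C)$, hence $\widetilde{I^+(C)}$ in the sense of \eqref{eq:2}, ``swallows'' a whole connected component of ${\cal J}^+$: indeed, if $E^+(C)$ is compact then every future-inextendible causal curve issuing from it (in particular every null generator limiting to a point of ${\cal J}^+$) re-enters $I^+(C)$, so that points of ${\cal J}^+$ lying in the closure of such generators lie in $\widetilde{I^+(C)}$; using (A2) (past-completeness of ${\cal J}^+$) one propagates this along the null generators of the relevant past boundaries to fill up an entire connected component ${\cal J}^+_0 \subset \widetilde{I^+(C)}$, whence ${\cal J}^+_0 \cap (\overline M \setminus \widetilde{I^+(C)}) = \emptyset$, violating (A1). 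Making this rigorous requires (i) Proposition \ref{lema:auxiliar} (so $\widetilde{I^+(C)}$ is closed, assuming $\hat M$ Hausdorff — I would either invoke this or note it follows under the standing hypotheses), (ii) a careful limit-curve argument, as in the proofs of Propositions \ref{completeness1} and \ref{bhprops}, that every point of the component ${\cal J}^+_0$ is the endpoint of a null generator whose past lies in $I^+(C)$, and (iii) the connectedness bookkeeping that lets (A2) carry the conclusion across the whole component. The technical heart is therefore the interplay of compactness of $E^+(C)$, the limit-curve construction of null generators reaching ${\cal J}^+$, and clause (A2); once that is in place, (A1) closes the argument and the contradiction is complete, so such a future-complete null $C$-ray $\eta:[0,+\infty)\to M$ with endpoint $(P,F)\in{\cal J}^+$ must exist.
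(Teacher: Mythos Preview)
Your strategy is modeled on Theorem \ref{completeness2}, but that proof works only because of its hypothesis (a) of timelike and null geodesic completeness, which is \emph{absent} in Theorem \ref{main}. You yourself flag the problem: in your ``first branch'' you need an arbitrary null ray from $p$ to be future-complete and to have its endpoint on ${\cal J}^+$, and you write ``completeness must be arranged --- this is where I expect to need either an extra completeness hypothesis in scope or the regularity/strong proper causality to deliver the endpoint.'' There is no such hypothesis, and strong proper causality only gives existence of an endpoint on $\partial M$, not membership in ${\cal J}^+$ nor completeness. A null ray from $p\in C\setminus B^+$ may perfectly well run into a singular part of $\partial M$; nothing in the hypotheses prevents this. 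Your ``second branch'' inherits the same defect (you need the null line to be complete), and the further claim that compactness of $E^+(C)$ forces an entire connected component ${\cal J}^+_0\subset\widetilde{I^+(C)}$ is not established: the step ``using (A2) one propagates this along the null generators \ldots\ to fill up an entire connected component'' is a hope, not an argument, and there is no reason the past-completeness clause should move you between distinct TIPs of ${\cal J}^+_0$ in this way.

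The paper's proof is not by contradiction and does not use the trapped-set/causal-line machinery at all. It is a direct construction. In the main case ($C\not\subset\overline{B^+}$), one notes ${\cal J}^+_0\cap I^+(C)\neq\emptyset$ for some component ${\cal J}^+_0$; ampleness (A1) guarantees ${\cal J}^+_0\cap(\overline{M}\setminus\widetilde{I^+(C)})$ is nonempty and open, so by connectedness the relative boundary $\partial_{{\cal J}^+_0}({\cal J}^+_0\cap I^+(C))$ is nonempty. Any $(P,F)$ on this boundary satisfies $C\cap P=\emptyset$ but $I^-(x_0)\subset P$ for some $x_0\in C$ (the open-set part of (A1) is used here), whence $x_0\in\partial_M P$. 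The null generator $\eta$ of $\partial_M P$ through $x_0$ is the desired ray: past-completeness (A2) puts $(I^-(\eta),F')$ in ${\cal J}^+$, and then clause (ii) of Definition \ref{scri} forces $\eta$ to be future-complete; achronality of $C$ and $\eta\subset\partial_M I^+(C)$ give that $\eta$ is a $C$-ray. The residual case $C\subset\overline{B^+}$, $C\not\subset B^+$ is handled directly via BH2). The key idea you are missing is this topological ``boundary-in-${\cal J}^+_0$'' argument, which is precisely what (A1) and (A2) are designed to feed; completeness of $\eta$ is not assumed but \emph{deduced} from membership in ${\cal J}^+$.
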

\begin{proof}
First, assume that $C$ is not entirely contained in $\overline{B^+}$. Then, BH3) in Proposition \ref{bhprops} implies that ${\cal J}^+\cap I^+ (C)\neq \emptyset$. Let $x \in {\cal J}^+\cap I^+ (C)$, and pick a TIP $P_0$ such that $x \in P_0$ and $(P_0,F_0)\in {\cal J}^+$. Denote by ${\cal J}^+_0$ the connected component of ${\cal J}^+$ containing $(P_0,F_0)$.

\cambios{Since ${\cal J}^+$ is ample, we have
\[
{\cal J}_0^+\cap(\overline{M}\setminus I^+(C))\supset {\cal J}_0^+\cap(\overline{M}\setminus\widetilde{I^+(C)}) \neq \emptyset.
\]
So, taking into account that ${\cal J}^+_0$ is connected, and $(P_0,F_0)\in {\cal J}^+_0\cap I^+ (C)\neq\emptyset$. Thus,
\[
\partial_{{\cal J}^+_0}({\cal J}^+_0\cap I^+ (C))\neq\emptyset.
\]
Take some $(P,F)\in \partial_{{\cal J}^+_0}({\cal J}^+_0\cap I^+ (C))$. Since $I^+ (C)$ is open, it follows that $(P,F) \notin I^+ (C)$ and in particular $C \cap P = \emptyset$.}

We now claim that $I^{-}(x_0)\subset P$ for some $x_0\in C$. Suppose, by way of contradiction, that this is false. Then, it follows that $(P,F)\in U=\mathcal{J}_{0}^{+}\cap\left(\overline{M}\setminus\widetilde{I^{+}(C)}\right)$, which is an open set from the ample condition. As $I^{+}(C)\subset \widetilde{I^{+}(C)}$, then $I^{+}(C)\cap U=\emptyset$. Hence, the point $(P,F)$ belongs to an open set with empty intersection with $I^{+}(C)$, in contradiction with $(P,F)\in\partial_{{\cal J}^+_0}(\mathcal{J}^+_{0}\cap I^+ (C))$. This establishes the claim.

Therefore, $x_0 \in \overline{I^{-}(x_0)}\subset \overline{P}$, but $x_0\in C\setminus P$. We conclude that $x_0 \in\partial_{M} P$. Since $(P,F) \in \partial M$, in particular $P$ is a TIP. Its boundary in $M$ is thus a union of future-inextendible null geodesics, so that we can take $\eta$ the future-inextendible null geodesic generator of $\partial_{M} P$ starting at $x_0$. Note that since $I^-(\eta)$ is also a TIP, we conclude that $(P',F') \in {\cal J}^+$ with $P'=I^-(\eta)$ by clause (A2) in Definition \ref{ample}, and in particular $\eta$ must be future-complete by clause (ii) in Definition \ref{scri}.

We wish to show that $\eta$ is a $C$-ray. By construction, $\eta \subset \overline{I^+(C)}$, but since $\partial_{M} P \cap I^+(C) = \emptyset$, $\eta \subset \partial_{M} I^+(C)$. Moreover, since $C$ is achronal, $C \subset \partial_{M} I^+(C)$. Finally, since  $\partial_{M} I^+(C)$ is an achronal set, any causal curve segment connecting a point of $C$ with a point $x$ (say) along $\eta$ must be a reparametrization of a null geodesic, and in particular has zero Lorentzian arc-length. This means that the initial segment of $\eta$ between $C$ and $x$ is maximal. Hence, $\eta$ is a $C$-ray as claimed.

Now, assume that $C$ is contained in $\overline{B^+}$, but not in $B^+$. We can then pick a point $p \in H^+\cap C\cap J^-(V_{\infty})$. By the item BH2) of Proposition \ref{bhprops}, there exists a future-complete null geodesic ray, which we again denote by $\eta$, starting at $p$ and with future endpoint on ${\cal J}^+$. We only need to check it is again a $C$-ray. But if this were not the case, then there would exist some $q \in C$ and some point $r$ along $\eta$ with $q \ll_g r$. But since $r$ is visible from infinity (cf. Remark \ref{rmk5}), this would mean that $q \in I^-(V_{\infty}) \equiv M\setminus \overline{B^+}$ (cf. BH3) in Proposition \ref{bhprops}), a contradiction.
\end{proof}

We can now use this theorem to prove a classic result in the theory of black holes \cite{BeemGlobalLorentzianGeometry1996,HawkingLargeScaleStructure1975,ONeillSemiRiemannianGeometryApplications1983}  for this extended context. Specifically, we wish to show that {\em any closed trapped surface stays inside the black hole region}, or, in other words, ``hidden from distant observers'' by the event horizon.

Recall that a {\em closed (future) trapped surface} in $(M,g)$ is a smooth, codimension 2, spacelike, achronal, compact submanifold $S \subset M$ without boundary whose mean curvature vector field is everywhere past-directed timelike \cite{ONeillSemiRiemannianGeometryApplications1983}. (The presence of this geometric object was introduced by Penrose \cite{PenroseGravitationalCollapse1965} as the mathematical surrogate for ``a point of no return'' in gravitational collapse.)  Also, recall that the {\em null convergence condition} holds in $(M,g)$ when $Ric(v,v) \geq 0$ for each null $v \in TM$. For solutions of the Einstein field equation of General Relativity (with or without a cosmological constant), this condition is implied by all the standard ``energy conditions'' on the stress-energy tensor, such as the dominant energy condition or the weak energy condition \cite{HawkingLargeScaleStructure1975,WaldGeneralRelativity1984}.

\begin{corollary}
\label{trappedcor}
Assume that ${\cal J}^+$ is regular and that the null convergence condition holds in $(M,g)$. If $S \subset M$ is a closed trapped surface, then $S \subset B^+$.
\end{corollary}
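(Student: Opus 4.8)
The plan is to argue by contradiction, combining Theorem \ref{main} with the classical null focusing (Raychaudhuri) estimate. So suppose $S \not\subset B^+$. A closed trapped surface is, by hypothesis, achronal and compact, hence $C := S$ meets the hypotheses of Theorem \ref{main}; since ${\cal J}^+$ is regular, that theorem hands us a future-complete null $S$-ray $\eta : [0,+\infty) \to M$ with future endpoint some $(P,F) \in {\cal J}^+$. From the proof of Theorem \ref{main} I would record the two facts I actually need: $\eta$ lies on the achronal boundary $\partial_M I^+(S)$, so no point of $\eta$ is in $I^+(S)$; and every initial segment of $\eta$ is a \emph{maximal} causal curve issuing from $S$, starting at a point $p := \eta(0) \in S$.

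Next I would translate maximality into differential-geometric data along $\eta$. By the first-variation formula for causal curves with an endpoint on a spacelike submanifold (see \cite{ONeillSemiRiemannianGeometryApplications1983,BeemGlobalLorentzianGeometry1996}), a maximal null geodesic issuing from $S$ must leave $S$ orthogonally, so $\dot\eta(0)$ is one of the two future-directed null normals of $S$ at $p$. Moreover, a null geodesic normal to $S$ enters $I^+(S)$ past its first focal point to $S$ and thus ceases to be maximal there. Since $\eta$ never meets $I^+(S)$, I conclude that $\eta$ carries no focal point to $S$ on all of $[0,+\infty)$.

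The contradiction then comes from focusing. Because $S$ is a \emph{future} trapped surface, its mean curvature vector at $p$ is past-directed timelike, which is equivalent to both future null expansions of $S$ at $p$ being strictly negative; in particular the initial expansion $\theta_0$ of the (twist-free) null congruence emitted from $S$ in the direction $\dot\eta(0)$ satisfies $\theta_0 < 0$. Along $\eta$ the expansion $\theta(\lambda)$ obeys the null Raychaudhuri equation, and the null convergence condition $Ric(\dot\eta,\dot\eta)\ge 0$ together with $n-1 = \dim S \ge 1$ (here $n\ge 2$ enters) yields the Riccati inequality $d\theta/d\lambda \le -\theta^2/(n-1)$; integrating this gives $\theta \to -\infty$ within affine parameter $\lambda \le (n-1)/|\theta_0| < +\infty$, i.e. $\eta$ develops a focal point to $S$, and since $\eta$ is future-complete this focal point is genuinely attained. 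This contradicts the conclusion of the previous paragraph, so $S \subset B^+$.

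I expect the main obstacle to be not the focusing computation — which is routine and mirrors Penrose's singularity theorem — but the intermediate step of feeding the output of Theorem \ref{main} into it: one must invoke with care the standard but somewhat delicate variational facts that a maximal null geodesic issuing from a spacelike submanifold is orthogonal to it and has no focal point to it in its interior, and that the ``$S$-ray'' produced by Theorem \ref{main} is indeed such a geodesic (so that its being future-complete and contained in $\partial_M I^+(S)$ can be played against the Raychaudhuri bound). Once those facts are cleanly stated, the rest is short.
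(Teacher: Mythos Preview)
Your proposal is correct and follows essentially the same route as the paper: argue by contradiction, invoke Theorem \ref{main} to produce a future-complete null $S$-ray, and then derive a contradiction with the standard focusing result for closed trapped surfaces under the null convergence condition. The paper simply cites the latter step as a ``standard result'' (referencing \cite{BeemGlobalLorentzianGeometry1996,HawkingLargeScaleStructure1975,ONeillSemiRiemannianGeometryApplications1983}) rather than spelling out the orthogonality, absence of focal points, and Raychaudhuri estimate as you do.
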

\begin{proof}
Suppose, to the contrary, that $S$ is not contained in $B^+$. Since $S$ is in particular achronal and compact by definition, Theorem \ref{main} implies the existence of some future-complete null $S$-ray $\eta: [0,+\infty) \rightarrow M$ with future endpoint $(P,F)\in {\cal J}^+$. But $\eta$ would be then a future-complete normal null geodesic ray starting at $S$ and without focal points, which contradicts standard results for closed trapped surfaces in spacetimes where the null converge condition holds (see, e.g. \cite{BeemGlobalLorentzianGeometry1996,HawkingLargeScaleStructure1975,ONeillSemiRiemannianGeometryApplications1983}).
\end{proof}

Finally, if ${\cal J}^+$ is regular, we can strengthen the item BH4) of Proposition \ref{bhprops} as follows.
\begin{corollary}
  Assume that ${\cal J}^+$ is regular. Then $p \in M \setminus B^+$ if and only if $p$ is visible from infinity.
\end{corollary}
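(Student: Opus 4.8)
The plan is to prove the equivalence by combining the already-established BH4) with the extra leverage that regularity of $\mathcal{J}^+$ gives us. One direction is essentially free: if $p$ is visible from infinity, then by definition there is a future-inextendible null geodesic $\alpha$ starting at $p$ with future endpoint on $\mathcal{J}^+$; in particular $p \in J^-(V_\infty)$, so $p \notin B^+$. This requires no assumption on $\mathcal{J}^+$ beyond what is already in Definition \ref{defbh}, and indeed it is just a sub-case of the argument in BH4).

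For the converse — the content of the corollary — suppose $p \in M \setminus B^+$. By BH4) there is a future-directed causal curve $\gamma$ starting at $p$ with future endpoint some $(P,F) \in \mathcal{J}^+$. If $\gamma$ can be taken to be a null geodesic, we are done by definition of $V_\infty$. The problematic case is when $p \in I^-(V_\infty)$ but $p$ is not itself the start of such a null geodesic ray. Here is where regularity enters. Since $p \in I^-(\mathcal{J}^+)$, we have $p \in P = I^-(\eta)$ for some future-complete null ray $\eta$ ending at a point $(P,F) \in \mathcal{J}^+$ (using $P \neq \emptyset$ from Remark \ref{rem:1} and item \ref{item:endpoint3} of Corollary \ref{cor:endpoints}, exactly as in the proof of BH3). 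The idea is to treat the singleton $C = \{p\}$ — which is trivially achronal and compact — and apply Theorem \ref{main}: since $p \notin B^+$, that theorem yields a future-complete null $\{p\}$-ray $\eta': [0,+\infty) \to M$ starting at $p$ with future endpoint on $\mathcal{J}^+$. But such a $\{p\}$-ray is precisely a future-inextendible null geodesic starting at $p$ with endpoint on $\mathcal{J}^+$, so $p \in V_\infty$ by Definition \ref{defbh}.

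The main obstacle — and the reason the hypothesis ``${\cal J}^+$ regular'' is genuinely needed rather than decorative — is exactly the step that invokes Theorem \ref{main}: without ampleness and past-completeness of $\mathcal{J}^+$, a point in $I^-(V_\infty)$ need not lie on any null ray reaching infinity (it could ``see'' infinity only via a strictly timelike detour, or via a null ray whose continuation past a boundary point fails clause (ii) of Definition \ref{scri}), and then $p \in M\setminus B^+$ without $p \in V_\infty$. So the whole weight of the corollary rests on packaging the one-point set $\{p\}$ as the achronal compact set $C$ in Theorem \ref{main} and reading off that the resulting $C$-ray is the desired null geodesic from $p$. Once that is in place, the two inclusions $V_\infty \subset M\setminus B^+$ and $M \setminus B^+ \subset V_\infty$ close up, giving the stated equivalence.
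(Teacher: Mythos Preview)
Your proposal is correct and follows essentially the same approach as the paper: the paper's proof is the single line ``Immediate from Theorem \ref{main} by taking $C= \{p\}$,'' which is exactly the core step you isolate. Your write-up adds the (trivial) forward direction and some explanatory context, but the substance is identical.
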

\begin{proof}
Immediate from Theorem \ref{main} by taking $C= \{p\}$.
\end{proof}


\section{Application to generalized plane waves}\label{ppwaves}

We are going to study in this section the null infinity of, and prove the absence of black holes in, the class of {\em generalized plane waves}. This family of spacetimes has been intensely studied in the literature, and is especially relevant for us here because conformal boundaries are not always available therein, but $c$-boundaries do, since they are strongly causal under mild assumptions \cite[Section 3]{0264-9381-20-11-322}. Hence, definitions of null infinity and black holes based on the latter become a natural alternative. In that case, the classical notions of null infinity and/or black holes do not make sense. (For a different perspective on the same issue, see \cite{SenovillaNoBHinPPWaves2003}.) Let us begin by reviewing some generalities about these spacetimes.

A {\em generalized plane wave} is any spacetime $(M^{n+1},g)$ of the form
\begin{equation}\label{pfw}
M=M_0 \times \mathbb{R}^2,\qquad
g(\cdot,\cdot) = g_0(\cdot,\cdot) + 2dudv + H(x,u)du^2,
\end{equation}
where $g_0$ is a smooth Riemannian metric on the ($n-1$)-dimensional manifold $M_0$, the variables $(v,u)$ are the standard Cartesian coordinates of $\mathbb{R}^2$, and $H:M_0 \times \mathbb{R} \rightarrow \mathbb{R}$ is some smooth real function.

The vector field $\partial_v$ is parallel (i.e.,
covariantly constant) and null, and the time-orientation will
always be chosen which makes it past-directed. In particular, its integral curves are past-directed null geodesics.

Consider any future-directed
causal curve segment $t \in [a,b] \mapsto \gamma(s)=(x(s),v(s),u(s))$ in $(M,g)$. Since the gradient $\nabla \,u \equiv \partial_v$ is past-directed null, we have
\begin{equation}
\label{cute}
\dot{(u\circ \gamma)}(s) = g(\nabla \, u (\gamma(s)), \dot \gamma(s)) = g(\partial_v|_{\gamma(s)},\dot \gamma(s)) = \dot u(s) \geq 0,
\end{equation}
and the inequality is strict whenever $\dot \gamma(s)$ is timelike. Integrating Eq. (\ref{cute}), we get
\[
u(b) - u(a)\geq 0,
\]
with strict inequality unless $\gamma$ is a null pregeodesic without conjugate points contained in a $u\equiv\hbox{constant}$ hypersurface. If the latter is not the case, then $\gamma$ leaves any such hypersurface where it starts, and in particular $\gamma$ cannot be a closed curve.

This calculation reveals, in particular, that each $u\equiv\hbox{constant}$ hypersurface is null and achronal and $(M,g)$ does not admits any closed timelike curves, i.e., it is chronological. The null geodesic generators of any such hypersurface coincide with the maximal integral curves of $\partial_{v}$ therein, and achronality implies that any null geodesic inside these hypersurfaces coincide with such generators.

Finally, a closer analysis of the geodesic equations easily shows that any null geodesic inside a $u\equiv\hbox{constant}$ hypersurface is {\em injective} and {\em complete}. We conclude that {\em every generalized plane wave is causal}; moreover, the vector $\partial_v$ is complete and all its maximal integral curves are null geodesic lines.

Fixing some local coordinates $x_1, \dots, x_{n-1}$ for the Riemannian
part $M_0$,
%
%
%
%
the three geodesic equations for
a curve $\gamma(s)= (x(s), v(s), u(s))$, $s\in (a,b)$, can be solved in
the following three steps  \cite[Proposition 3.1]{CandelaGeneralPlaneFronted2003}:
\begin{enumerate}
\item[(a)] $u(s)$ is any affine function, $u(s) = u_0 + s \Delta
u$, for some constant $\Delta u\in {\mathbb R}$;

\item[(b)] $x = x(s)$ is a solution on $M_0$ of
\[
D_s\dot x = - {\rm grad}_x V_{\Delta}(x(s),s) \quad \mbox{for all
$s \in \ (a,b)$,}
\]
where $D_s$ denotes the covariant derivative and $V_{\Delta}$ is
defined as:
\[
V_{\Delta}(x,s) := -\ \frac{(\Delta u)^2}{2}\ H(x, u_0 + s \Delta
u);
\]

\item[(c)] finally, with a fixed $v_{0}$ and an $s_0\in (a,b)$,
$v(s)$ can be computed from
\[
v(s) = v_0 + \frac{1}{2 \Delta u} \int_{s_0}^s \left( E_{\gamma} -
g_0(\dot x(\sigma), \dot x(\sigma)) + 2
V_{\Delta}(x(\sigma), \sigma)\right) d\sigma,
\]
where $E_{\gamma}=g(\dot{\gamma}(s), \dot{\gamma}(s))$ is a
constant (if $\Delta u = 0$ then $v = v(s)$ is affine).
\end{enumerate}
In particular, if we fix some $(\overline{x},\overline{u})\in M_0\times {\mathbb R}$, then the curve
\[
\gamma_{\overline{x},\overline{u}}(s)=(\overline{x},-s,\overline{u})\quad\hbox{(resp. $\beta_{\overline{x},\overline{u}}(s)=(\overline{x},s,\overline{u}))$,}\quad s\in {\mathbb R},
\]
is a future-directed (resp. past-directed) null geodesic line.

The $c$-boundary for generalized plane waves has been systematically studied in \cite{Florescausalboundarywavetype2008}, where it was shown that its structure strongly depends on the growth of $H$ at infinity. Some of these asymptotic behaviours, which will be also relevant in our discussion, are the following:
\begin{definition} A function ${\cal F}:M_0\times{\mathbb R}\rightarrow {\mathbb R}$ is said to be:
\begin{itemize}
%
\item[(1)] {\em at most quadratic}, if there exist $\hat{x}\in M_0$ and positive continuous functions $R_0(u), R_1(u) > 0$ such that
\[
{\cal F}(x,u)\leq R_1(u)d(x,\hat{x})^2 + R_0(u),\quad\forall (x,u)\in M_0\times {\mathbb R};
\]
\item[(2)] {\em $\lambda$-asymptotically quadratic, with} $\lambda>0$, if $M_0$ is non-compact and there
exist $\hat{x}\in M_0$, continuous functions $R_0(u), R_1(u)>0$ and a constant $R_0^-\in {\mathbb R}$ such that:
\[
\frac{\lambda^2d(x,\hat{x})^2+R_0^-}{u^2+1}\leq {\cal F}(x,u)\leq R_1(u)d(x,\hat{x})^2 + R_0(u),\quad\forall (x,u)\in M_0\times {\mathbb R}.
\]
\end{itemize}
\end{definition}
Let us recall now some noteworthy statements about the structure of the $c$-boundary for generalized plane waves (see \cite[Theorems 7.9 and 8.2]{Florescausalboundarywavetype2008}; note that $F$ in that reference corresponds with $-H$ here). From now on, we will assume that $(M_0,g_0)$ {\em is geodesically complete}:
\begin{theorem}\label{t} Let $(M,g)$ be a generalized plane wave as in (\ref{pfw}). Then, the following assertions hold.
\begin{itemize}
\item[(i)] If $|H|$ is at most quadratic, then the future (resp. past) $c$-boundary of $(M,g)$ {\em contains} a
copy $L^+$ (resp. $L^-$) of ${\mathbb R}$ plus the ideal point $i^+$ (resp. $i^-$)\footnote{Here, by $i^{+}$ and $i^{-}$ we are denoting the pairs $(M,\emptyset)$ and $(\emptyset,M)$, and the entire manifold $M$ is a terminal set.}. Every ideal point $\overline{u}\in L^{+}$ (resp. $\overline{u}'\in L^{-}$) can be identified with the IP $I^{-}(\gamma_{\overline{x},\overline{u}})$ for any $\overline{x}\in M_0$ (resp. the IF $I^{+}(\gamma_{\overline{x}',\overline{u}'})$, for any $\overline{x}'\in M_0$).

    The (total) $c$-boundary of $(M,g)$ {\em contains} a subset which can be identified with the quotient space $$((L^{+}\cup \{i^{+}\})\cup (L^{-}\cup \{i^{-}\}))/R,$$ where $R$ is the equivalence relation obtained by symmetrizing the following relation:\footnote{From \cite[Remark 7.10]{Florescausalboundarywavetype2008}, a pair $(\overline{x},\overline{u})$ cannot be related with more than one pair $(\overline{x}',\overline{u}')$, and viceversa.}
    \[
    (\overline{x},\overline{u})R(\overline{x}',\overline{u}')\quad\Leftrightarrow\quad (I^{-}(\gamma_{\overline{x},\overline{u}}),I^{+}(\gamma_{\overline{x}',\overline{u}'}))\in\overline{M}.
    \]

\item[(ii)] If $-H$ is $\lambda$-asymptotically quadratic for some
$\lambda>1/2$, then the future, past and total $c$-boundaries not only contains the structures described in (i), but necessarily {\em coincide} with them.



\end{itemize}
\end{theorem}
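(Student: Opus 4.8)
\emph{Strategy.} Everything is organised around \eqref{cute}: along any future‑directed causal curve the coordinate $u$ is non‑decreasing, and strictly increasing unless the curve is (a reparametrisation of) a complete null geodesic lying in some hypersurface $u=c$. Hence every future‑inextendible timelike curve $\gamma$ has $u(\gamma(s))\nearrow\bar u\in(-\infty,+\infty]$; since by Geroch--Kronheimer--Penrose every TIP is $I^-(\gamma)$ for such a $\gamma$, the natural candidates for the relevant ideal points are the ``slabs'' $P_{\bar u}:=\{(x,v,u)\in M:u<\bar u\}$ for $\bar u\in\mathbb{R}$ (the case $\bar u<+\infty$) together with $M$ itself (the case $\bar u=+\infty$), and time‑dually $Q_{\bar u}:=\{u>\bar u\}$ and $M$ for TIFs. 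The proof then splits into (i) exhibiting these as ideal points and computing the $S$‑relations among them, and (ii) showing that, under the stronger hypothesis, there are no others.

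\emph{Proof of (i).} Here the hypothesis ``$|H|$ at most quadratic'' is used only qualitatively: it is one of the conditions ensuring $(M,g)$ is strongly causal (so the $c$‑boundary machinery applies, cf. \cite{0264-9381-20-11-322}) and that a slab $\{u<\bar u\}$ is \emph{never} the chronological past of a point of $M$ --- intuitively, traversing $M_0$‑distance $d$ within a $u$‑increment $\varepsilon$ costs $\gtrsim d^2/\varepsilon$ of ``kinetic energy'', which an at‑most‑quadratic $H$ cannot offset, so $I^-(p)\subsetneq\{u<u(p)\}$ always. Granting this, I would argue as follows. For the null geodesic line $\gamma_{\bar x,\bar u}(s)=(\bar x,-s,\bar u)$ of \eqref{pfw} one has $I^-(\gamma_{\bar x,\bar u})\subset P_{\bar u}$ by \eqref{cute}; conversely, given $p$ with $u(p)<\bar u$, connect the $M_0$‑component of $p$ to $\bar x$ by any fixed smooth path in $M_0$, reparametrise it over the $u$‑interval $[u(p),\bar u]$, and let $v$ decrease fast enough to make the curve timelike (possible since $g_0(\dot x,\dot x)$ and $H$ are bounded on this compact piece) --- it lands at a point $(\bar x,v_1,\bar u)$ of $\gamma_{\bar x,\bar u}$, so $p\in I^-(\gamma_{\bar x,\bar u})$. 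Thus $I^-(\gamma_{\bar x,\bar u})=P_{\bar u}$ for every $\bar x\in M_0$; being the past of an inextendible curve it is an IP, and by the previous remark it is terminal. The map $\bar u\mapsto P_{\bar u}$ is injective, and computing $\mathrm{LI}$ and $\mathrm{LS}$ of a sequence $\{P_{\bar u_n}\}$ shows (with the first‑order property in force) that $P_{\bar u_n}\to P_{\bar u}$ iff $\bar u_n\to\bar u$, so $L^+:=\{P_{\bar u}\}$ sits in $\hat\partial M$ with the standard topology of $\mathbb{R}$. Next, from any point one builds a future‑inextendible timelike curve with $u\to+\infty$ (again by letting $v$ decrease fast enough), and an argument as above gives its past $=\bigcup_cP_c=M$, so $i^+=(M,\emptyset)\in\hat\partial M$; dually $L^-\cup\{i^-\}\subset\check\partial M$. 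Finally, for the total $c$‑boundary, Definition \ref{d1} says $(P_{\bar u},F)\in\overline M$ with $F\neq\emptyset$ precisely when $P_{\bar u}\sim_S F$; I would compute $\downarrow Q_{\bar u'}$ and $\uparrow P_{\bar u}$ explicitly and read off that $P_{\bar u}\sim_S Q_{\bar u'}$ is exactly the symmetrised relation $R$, the maximality clauses in $\sim_S$ forbidding a given $(\bar x,\bar u)$ from being $R$‑related to two distinct pairs (the footnote). Since $\uparrow M=\emptyset=\downarrow M$, both $(M,\emptyset)$ and $(\emptyset,M)$ lie in $\overline M$, as do the ``unpaired'' pairs $(P_{\bar u},\emptyset)$ and $(\emptyset,Q_{\bar u'})$; collecting everything realises $((L^+\cup\{i^+\})\cup(L^-\cup\{i^-\}))/R$ as a subset of $\partial M$, which is (i).

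\emph{Proof of (ii).} When $-H$ is $\lambda$‑asymptotically quadratic with $\lambda>1/2$, the upper bound in that definition still forces $|H|$ to be at most quadratic, so the structures of (i) are present; it remains to exclude any other TIP (dually TIF). Let $P=I^-(\gamma)$ with $\gamma$ future‑inextendible timelike and $u(\gamma(s))\nearrow\bar u$. If $\bar u=+\infty$ then $P\supset P_c$ for all $c$, so $P=M$. If $\bar u<+\infty$, the lower bound $-H(x,u)\ge(\lambda^2 d(x,\hat x)^2+R_0^-)/(u^2+1)$ makes the potential $V_\Delta\gtrsim\tfrac12(\Delta u)^2\lambda^2 d(x,\hat x)^2/(u^2+1)$ strongly confining: comparing the $M_0$‑dynamics with the Euler‑type equation $\ddot y+(\lambda^2/s^2)y=0$, whose solutions oscillate \emph{precisely} when $\lambda>1/2$, one shows that the $M_0$‑component of any causal curve keeping $u$ bounded above stays bounded and must ``refocus'', so that such a curve still sees all of $P_{\bar u}$; hence $P=P_{\bar u}$. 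Thus $\hat\partial M=L^+\cup\{i^+\}$, $\check\partial M=L^-\cup\{i^-\}$, and by Definition \ref{d1} the total $c$‑boundary coincides with the quotient obtained in (i), with the same chronological relations and topology.

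\emph{Main obstacle.} The genuinely hard step is the confinement/refocusing analysis in (ii): controlling the $M_0$‑component of \emph{causal} (not merely geodesic) curves in the time‑dependent potential $V_\Delta$ finely enough to rule out every ``runaway'' ideal point, $\lambda=1/2$ being exactly the borderline resonant case; and, in parallel, computing $\downarrow Q_{\bar u'}$ and $\uparrow P_{\bar u}$ precisely enough to pin down all $S$‑relations. By contrast, once strong causality and the ``no slab is a PIP'' fact are granted, the construction of $L^\pm\cup\{i^\pm\}$ in (i) is essentially soft, via the explicit null and timelike curves above.
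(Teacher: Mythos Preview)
The paper does not give a proof of this theorem: it is explicitly introduced as a recall of \cite[Theorems 7.9 and 8.2]{Florescausalboundarywavetype2008} (``Let us recall now some noteworthy statements\ldots''), so there is no in-paper argument to compare your proposal to. What you have written is, in outline, the strategy that reference actually carries out: use the monotonicity of $u$ along causal curves to parametrise the candidate TIPs by the slabs $P_{\bar u}=\{u<\bar u\}$, identify these with $I^-(\gamma_{\bar x,\bar u})$ by the ``let $v$ decrease fast enough'' trick, and in the $\lambda$-asymptotically quadratic case use the confinement of the $M_0$-component (via comparison with the Euler equation $\ddot y+(\lambda^2/s^2)y=0$, oscillatory precisely for $\lambda>1/2$) to rule out any further ideal points.

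Two comments on substance. First, your phrasing that the at-most-quadratic hypothesis is used ``only qualitatively'' undersells it: the bound is used \emph{quantitatively} both to secure strong causality and to show $I^-(p)\subsetneq\{u<u(p)\}$, and in the cited reference this latter fact is obtained through a genuine estimate, not a soft argument. Second, in (ii) your sentence ``such a curve still sees all of $P_{\bar u}$; hence $P=P_{\bar u}$'' hides the real work: boundedness of the $M_0$-component of $\gamma$ is not by itself enough, one needs that every point of the slab can be joined to $\gamma$ by a future timelike curve, and this is where the oscillation/refocusing is actually exploited (together with a careful analysis of $\uparrow P_{\bar u}$ and $\downarrow Q_{\bar u'}$ for the $S$-relations). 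You correctly flag this as the main obstacle, but be aware that in \cite{Florescausalboundarywavetype2008} this step occupies several technical lemmas and is not merely a comparison with a scalar ODE.
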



Now, we are ready to obtain the (future) null infinity for these spacetimes according to Definition \ref{scri}. To simplify the study, we will restrict our attention to {\em geodesically complete} generalized plane waves. This property is guaranteed, for instance, if $H(x,u)\equiv H(x)$ is at most quadratic (see \cite[Corollary 3.4]{CandelaGeneralPlaneFronted2003}), but there are pretty more situations were it holds. We will also assume that the null rays $\gamma_{\overline{x},\overline{u}}$ are future-regular.
\begin{corollary} Let $(M,g)$ be a geodesically complete generalized plane wave whose null rays $\gamma_{\overline{x},\overline{u}}$ are future-regular. Then, the following assertions hold:
\begin{itemize}
\item[(i)] If $|H|$ is at most quadratic, then ${\cal J}^+$ {\em contains} all the pairs of the form $(P,F)$, with $P\neq\emptyset$, which appear in case (i) of Theorem \ref{t}.
\item[(ii)] If $-H$ is $\lambda$-asymptotically quadratic for some
$\lambda>1/2$, then ${\cal J}^+$ not only contains, but also {\em coincides} with the structure described in previous case (i).
\end{itemize}
\end{corollary}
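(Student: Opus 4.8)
The plan is to prove this by directly checking the two clauses of Definition \ref{scri} for each relevant boundary pair, reading off the structure of $\partial M$ from Theorem \ref{t}. The easy observation up front is that clause (ii) is automatic in both parts: since $(M,g)$ is geodesically complete, \emph{every} future-inextendible null geodesic in $(M,g)$ is future-complete, so clause (ii) of Definition \ref{scri} holds vacuously for every $(P,F)\in\partial M$. Thus the entire content lies in clause (i), and here geodesic completeness plays no role beyond what we already used — the distinguished null geodesic lines $\gamma_{\overline{x},\overline{u}}$ will do the job.

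For part (i): fix $\overline{u}\in\mathbb{R}$ and $\overline{x}\in M_0$, and let $(P,F)\in\partial M$ be the boundary pair of Theorem \ref{t}(i) with $P=I^-(\gamma_{\overline{x},\overline{u}})$ (so $F$ is $\emptyset$, or the IF $S$-related to $P$ through the relation $R$ when a partner exists). I would take as the null ray required in clause (i) the curve $\gamma_{\overline{x},\overline{u}}$ itself, reparametrised on $[0,+\infty)$. As recalled in Section \ref{ppwaves}, it is a complete — hence future-complete — null geodesic line, it lies inside the achronal hypersurface $\{u=\overline{u}\}$ and so is globally achronal (a genuine null ray), and it is future-regular by hypothesis. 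It then remains only to see that $(P,F)$ is a future endpoint of $\gamma_{\overline{x},\overline{u}}$ in $\overline{M}$. The first component is immediate: for any $t_n\nearrow+\infty$ the sets $I^-(\gamma_{\overline{x},\overline{u}}(t_n))$ increase, so $\mathrm{LI}=\mathrm{LS}=I^-(\gamma_{\overline{x},\overline{u}})=P$ and $P\in\hat{L}(\{I^-(\gamma_{\overline{x},\overline{u}}(t_n))\})$. For the second component: if $F=\emptyset$ there is nothing to check; if $F\neq\emptyset$, use future-regularity, $\uparrow\gamma_{\overline{x},\overline{u}}=\uparrow I^-(\gamma_{\overline{x},\overline{u}})=\uparrow P$, together with the characterisation of endpoints of future-regular causal curves (Proposition \ref{prop:strongfirst} and item \ref{item:endpoint3} of Corollary \ref{cor:endpoints}) to identify the $S$-related IF $F$ with the relevant Hausdorff limit of $\{I^+(\gamma_{\overline{x},\overline{u}}(t_n))\}$. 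This yields $(P,F)\in\mathcal{J}^+$, and since $\overline{x},\overline{u}$ were arbitrary, part (i) follows.

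One clarifying remark is in order about exactly which pairs ``appear in case (i)'': besides the pairs $(I^-(\gamma_{\overline{x},\overline{u}}),F)$ just handled, the only one with nonempty first component is $i^+=(M,\emptyset)$, and I claim this pair is \emph{not} in $\mathcal{J}^+$ (so the statement should be read with $i^+$ excluded). Indeed, by item \ref{item:endpoint3} of Corollary \ref{cor:endpoints} a null ray $\eta$ admitting $(M,\emptyset)$ as a future endpoint would satisfy $I^-(\eta)=M$, hence $\eta\subseteq I^-(\eta)$, forcing $\eta(t)\ll\eta(t')$ for some $t<t'$ and contradicting achronality of $\eta$; so clause (i) fails for $i^+$. (Independently, by (\ref{cute}) one has $I^-(\gamma_{\overline{x},\overline{u}})\subseteq\{u<\overline{u}\}\subsetneq M$, so $i^+$ is in any case distinct from every $L^+$-pair.) Part (ii) is then a short add-on: under the $\lambda$-asymptotically quadratic hypothesis with $\lambda>1/2$, Theorem \ref{t}(ii) says $\partial M$ \emph{coincides} with the quotient structure of (i), so its only pairs with nonempty first component are the $L^+$-pairs together with $i^+$. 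Since every element of $\mathcal{J}^+$ has nonempty first component (Remark \ref{rem:1}), combining part (i) with $i^+\notin\mathcal{J}^+$ gives $\mathcal{J}^+=\{(I^-(\gamma_{\overline{x},\overline{u}}),F)\in\partial M\}$, which is the assertion of (ii).

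The hard part — the only genuinely nonformal step — is the second-component verification in part (i): showing $F\in\check{L}(\{I^+(\gamma_{\overline{x},\overline{u}}(t_n))\})$ when $F\neq\emptyset$. The first component is trivial because $I^-$ is monotone along a causal curve, but the past-side limit operator involves $\bigcap_n I^+(\gamma_{\overline{x},\overline{u}}(t_n))$, and equating the maximal IF inside it with precisely the IF $S$-related to $P$ is where future-regularity of $\gamma_{\overline{x},\overline{u}}$ — and, implicitly, the boundary analysis of \cite{Florescausalboundarywavetype2008} that produced the pairing $R$ — does the real work. I expect this to be the main obstacle. If one were willing to strengthen the hypothesis to ``$\overline{M}$ is strongly properly causal'' (rather than merely ``the $\gamma_{\overline{x},\overline{u}}$ are future-regular''), this step would be immediate from Proposition \ref{prop:strongfirst}, and the whole of part (i) would collapse to a one-line application of that proposition plus geodesic completeness.
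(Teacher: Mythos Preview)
Your proposal is correct and follows the same approach as the paper: verify Definition \ref{scri} for the boundary pairs of Theorem \ref{t} using the null lines $\gamma_{\overline{x},\overline{u}}$ themselves, with clause (ii) automatic from geodesic completeness. The paper's proof is considerably terser; the one substantive difference in emphasis is that you flag the second-component verification ($F\in\check{L}$ when $F\neq\emptyset$) as the ``hard part'' and worry that Proposition \ref{prop:strongfirst} demands global strong proper causality. The paper does not share this concern: it simply asserts that $\uparrow\gamma_{\overline{x},\overline{u}}=\uparrow I^-(\gamma_{\overline{x},\overline{u}})$ (future-regularity of \emph{this particular} curve) already forces $(I^-(\gamma_{\overline{x},\overline{u}}),F)$ to be a future endpoint, and this is right --- the mechanism behind Proposition \ref{prop:strongfirst} (namely \cite[Theorem 3.35(iii)]{Floresfinaldefinitioncausal2011}) works curve-by-curve, so regularity of the single curve at hand suffices without any global hypothesis. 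Thus what you call the main obstacle is in fact already dissolved by the hypothesis you have. Your clarification that $i^+=(M,\emptyset)\notin\mathcal{J}^+$ and your explicit derivation of part (ii) from part (i) plus Remark \ref{rem:1} are both sound additions that the paper omits (it dispatches (ii) with ``totally analogous'').
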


\begin{proof}
(i) Recall that $\gamma_{\overline{x},\overline{u}}=(\overline{x},-s,\overline{u})$, is a future-directed null geodesic line for every $(\overline{x},\overline{u})\in M_0\times {\mathbb R}$. Since $\uparrow \gamma_{\overline{x},\overline{u}}=\uparrow I^-(\gamma_{\overline{x},\overline{u}})$, the curve $\gamma_{\overline{x},\overline{u}}$ has a future endpoint of the form $(I^-(\gamma_{\overline{x},\overline{u}}),F)$, where either $F=I^+(\gamma_{\overline{x}',\overline{u}'})$ or $F=\emptyset$.
Moreover, since $(M,g)$ is assumed to be geodesically complete, any other inextendible future-directed null geodesic $\alpha$ with future endpoint $(I^-(\gamma_{\overline{x},\overline{u}}),F)$ is complete. Hence, $(I^{-}(\gamma_{\overline{x},\overline{u}}),F)$ belongs to ${\cal J}^+$ for every $(\overline{x},\overline{u})\in M_0\times {\mathbb R}$.

The argument for the case (ii) is totally analogous.
\end{proof}

\noindent As a direct consequence we can now deduce the absence of BH for these spacetimes.
\begin{corollary}\label{cc}
 If $(M,g)$ is a geodesically complete generalized plane wave whose null rays $\gamma_{\overline{x},\overline{u}}$ are future-regular, then it does not contain black holes. In particular, causally continuous, geodesically complete generalized plane waves have no black holes.
\end{corollary}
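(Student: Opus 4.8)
The plan is to prove that the black hole region $B^+$ is empty; since the argument will simultaneously show ${\cal J}^+\neq\emptyset$, this is precisely the assertion that $(M,g)$ has no black holes (recall that in Definition \ref{defbh} a black hole region is only declared when ${\cal J}^+\neq\emptyset$, so if it were empty there would be nothing to rule out). By item BH4) of Proposition \ref{bhprops} it suffices to check that every point of $M$ admits a future-directed causal curve with a future endpoint on ${\cal J}^+$. Given $p=(x_0,v_0,u_0)\in M_0\times\mathbb{R}^2$, I would use the canonical future-directed null geodesic line $\gamma_{x_0,u_0}(s)=(x_0,-s,u_0)$, which passes through $p$ at $s=-v_0$, and take $\eta$ to be its restriction to $[-v_0,+\infty)$, reparametrised on $[0,+\infty)$; this is a future-inextendible null geodesic ray starting at $p$.

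The core of the proof is to verify that $\eta$ satisfies both clauses of Definition \ref{scri} and genuinely reaches $\partial M$. First, $\eta$ is future-complete, because the maximal integral curves of the parallel null field $\partial_v$ are complete null geodesic lines. Second, $\eta$ is future-regular: since $\eta\subset\gamma_{x_0,u_0}$ one has $I^-(\eta)=I^-(\gamma_{x_0,u_0})=:P$ (every point of the line lies in the past of a point of the ray), and then
\[
\uparrow\gamma_{x_0,u_0}\ \subset\ \uparrow\eta\ \subset\ \uparrow I^-(\eta)\ =\ \uparrow I^-(\gamma_{x_0,u_0})\ =\ \uparrow\gamma_{x_0,u_0},
\]
using Remark \ref{r2} for the middle inclusion and the hypothesis on $\gamma_{x_0,u_0}$ for the last equality; hence $\uparrow\eta=\uparrow I^-(\eta)$. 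Now $P$ is a TIP, so by Remark \ref{r0} there is $F\in\check{M}\cup\{\emptyset\}$ with $(P,F)\in\partial M$, and — arguing exactly as in the final part of the proof of Proposition \ref{completeness1}, i.e. applying the endpoint characterisation behind Proposition \ref{prop:strongfirst} and \cite[Theorem 3.35 (iii)]{Floresfinaldefinitioncausal2011} to the single future-regular curve $\eta$ — $(P,F)$ is a future endpoint of $\eta$. Clause (i) of Definition \ref{scri} is then witnessed by $\eta$ itself, and clause (ii) holds automatically because $(M,g)$ is geodesically complete, so every future-inextendible null geodesic is future-complete. Thus $(P,F)\in{\cal J}^+$, $p$ is visible from infinity (Definition \ref{defbh}), and $p\notin B^+$; as $p$ was arbitrary, $B^+=\emptyset$.

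For the last sentence, if $(M,g)$ is moreover causally continuous then Proposition \ref{prop:causalcontinuity} gives that $\overline{M}$ is strongly properly causal, whence every future-inextendible null geodesic ray, in particular each $\gamma_{\overline{x},\overline{u}}$, is future-regular, and the first part applies verbatim. The only step requiring care is the one in the second paragraph: passing from future-regularity of the particular lines $\gamma_{\overline{x},\overline{u}}$ to the existence of a genuine future endpoint of $\eta$ in $\overline{M}$ of the form $(I^-(\eta),F)$, \emph{without} assuming strong proper causality of the whole $\overline{M}$. The point is that the relevant endpoint characterisation is used curve by curve and only needs $\eta$ (equivalently $\gamma_{x_0,u_0}$) to be future-regular, exactly as in the proof of the preceding corollary; everything else in the argument is routine.
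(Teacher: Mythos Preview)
Your proof is correct and follows essentially the same strategy as the paper: use that each null line $\gamma_{\overline{x},\overline{u}}$ is future-complete and (by hypothesis) future-regular to produce a pair $(I^-(\gamma_{\overline{x},\overline{u}}),F)\in{\cal J}^+$, and conclude $B^+=\emptyset$. The paper does this by picking $\overline{u}>u_0$ and observing $p_0\in I^-(\gamma_{\overline{x},\overline{u}})$ (so $p_0\in I^-({\cal J}^+)$ and BH3) applies), whereas you take the ray through $p$ itself and invoke BH4); your variant is slightly more self-contained since it avoids the unproved (though true) characterisation ``$p_0\in I^-(\gamma_{\overline{x},\overline{u}})\Leftrightarrow u_0<\overline{u}$''.

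One small point of presentation: you assert ``$P$ is a TIP'' before establishing that $(P,F)$ is an endpoint of $\eta$, but the cleanest justification runs in the opposite order. Once you know $(P,F)$ is a future endpoint of the future-inextendible curve $\eta$ (via the curve-by-curve future-regularity argument you invoke), Corollary~\ref{cor:endpoints}\ref{item:endpoints1} forces $(P,F)\notin M$, hence $(P,F)\in\partial M$ and $P$ is a TIP. As written your order is harmless (the paper's own proof also leans on the preceding corollary for this step), but reversing it would make the argument fully self-contained.
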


\begin{proof}
Given an arbitrary event $p_0=(x_0,v_0,u_0)\in M$, it suffices to show that $p_0\in I^-({\cal J}^+)$. From the proof of the previous theorem, $(I^{-}(\gamma_{\overline{x},\overline{u}}),F)\in {\cal J}^+$ for all $(\overline{x},\overline{u})\in M_0\times {\mathbb R}$. Moreover, $p_0=(x_0,v_0,u_0)\in I^-((I^{-}(\gamma_{\overline{x},\overline{u}}),F))$ if and only if $u_0<\overline{u}$. Hence, $p_0\in I^-({\cal J}^+)$ whenever $u_0<\overline{u}$. In conclusion, $M\subset I^-({\cal J}^+)$, and thus, $B^+\subset M\setminus I^-({\cal J}^+)=\emptyset$.

For the last assertion, just recall that any causally continuous spacetime is strongly properly causal (Proposition \ref{prop:causalcontinuity}), and thus, the null rays $\gamma_{\overline{x},\overline{u}}$ are future-regular (Definition \ref{def:sproperlycausal}).
\end{proof}

\begin{remark} {\em In \cite[Thm. 6.9]{EHRLICH_1992} (see also \cite{minguzzi12:causal_kam}) the authors provide mild conditions under which a plane wave is causally continuous. So, according to Corollary \ref{cc}, these conditions joined to the previously cited ones for geodesic completeness ensure that a plane wave has no black holes.
}
\end{remark}

\section*{Appendix}
\label{sec:some-examples}
In Definitions \ref{scri} and \ref{ample} we have included some clauses that, even though natural when interpreted from the classical conformal approach viewpoint, might conceivably be discarded in favor of less technical-looking ones. The following construction shows that this is not the case if the thesis of Theorem \ref{main} is to be preserved. In fact, we will display a globally hyperbolic spacetime which is not past-complete and where (unsurprisingly) Theorem \ref{main} fails. The example also shows that the Theorem \ref{main} is also false if past-completeness is assumed but condition (ii) on Definition \ref{scri} is removed. This will show in particular that, even if with strong causality requirements on the spacetime, one cannot expect to conserve Theorem \ref{main} if the past-completeness condition is removed from the notion of regularity in Definition \ref{ample}.

\begin{note}
\emph{The construction considered below is given by making simple modifications of Minkowski spacetime. It is not difficult to realize that the future c-boundaries of the resulting spacetimes are always Hausdorff. In particular, from Proposition \ref{lema:auxiliar}, the set $\widetilde{I^+(C)}$ will be closed for any compact set $C$ in the spacetime.}
\end{note}





Consider the $3$-dimensional Minkowski spacetime
\[
\mathbb{L}^3 = (\mathbb{R}^3, dx^2+dy^2-dt^{2}).
\]
Isometrically compactify its (cartesian) $x$-coordinate as a circle $\mathbb{S}^1$ in order to obtain a new (flat) Lorentzian manifold $\tilde{M}$ whose spatial sections are $2$-dimensional cylinders. Note that the future-complete null lines of $\tilde{M}$ consist of straight lines of $\tilde{M}$ with spatial component parallel to the $y$-axis. In fact, the spatial component of any other inextendible lightlike geodesic $\sigma$ in $\tilde{M}$ will ``waste time'' spinning around the spatial cylinder, and eventually, two points of $\sigma$ will become chronologically related by some timelike curve $c$ (see Figure \ref{fig:1}).

\begin{figure}
\centering

  \setlength{\unitlength}{1bp}%
  \begin{picture}(99.80, 230.32)(0,0)
  \put(-13,0){\includegraphics{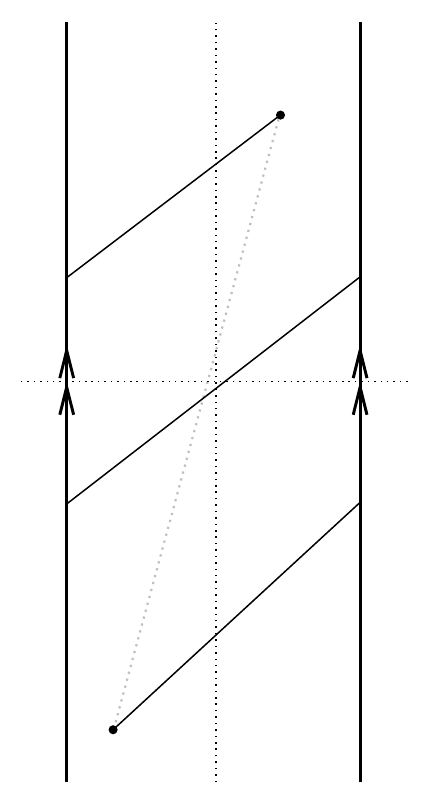}}
  \put(76.52,124.51){\rotatebox{0.00}{\fontsize{8.54}{10.24}\selectfont \smash{\makebox[0pt][l]{$1$}}}}
   \put(96.52,109.76){\fontsize{8.54}{10.24}\selectfont $\mathbf{x}$}
  \put(8.37,124.28){\rotatebox{0.00}{\fontsize{8.54}{10.24}\selectfont \smash{\makebox[0pt][l]{$-1$}}}}
  \put(69.36,78.76){\fontsize{8.54}{10.24}\selectfont $\sigma$}
  \put(14.36,10.76){\fontsize{8.54}{10.24}\selectfont $\sigma(t_0)$}
  \put(26.36,67.76){\fontsize{8.54}{10.24}\selectfont $c$}
  \put(69.36,202.76){\fontsize{8.54}{10.24}\selectfont $\sigma(t_1)$}
   \put(42.36,218.76){\fontsize{8.54}{10.24}\selectfont $\mathbf{y}$}
  \end{picture}%

  \caption{\label{fig:1} Representation of the spatial component of the spacetime $\tilde{M}$. If the spatial component of a lightlike geodesic $\sigma$ in $\tilde{M}$ is not parallel to the $y$-axis, then two points of $\sigma$ will eventually become chronologically related in $\tilde{M}$ by some timelike curve $c$.}

\end{figure}

Next, consider the compact set
\[
C=\{(0,y,0)\, : \, -1\leq y \leq 1\},
\]
and define

  \[
\Sigma:= \{(0,y,1)\, : \, y \in \mathbb{R}\}\cap J^+(C,\tilde{M}).
    \]
Consider the Lorentzian manifold $(M,g)$, where $M:=\tilde{M}\setminus J^+(\Sigma,\tilde{M})$ and $g$ is the induced metric on $M$ from $\tilde{M}$ (see figures  \ref{fig:3}  and \ref{fig:2} for illustrations of the projections of $M$ onto the $y=0$ and $x=0$ planes, resp.).

  The $c$-boundary of $(M,g)$ is the disjoint union of the future and past c-boundaries, each being formed by pairs with an empty $F$ or $P$ component, resp. In particular, $(M,g)$ is globally hyperbolic according to \cite[Theorem 3.29]{Floresfinaldefinitioncausal2011}. Moreover, the pairs $(P,\emptyset)$ of the future c-boundary can be separated in two classes: (a) those pairs defined by inextensible timelike curves with divergent component $y$, and (b) those pairs identifiable with the points of the (topological) boundary of $J^+(\Sigma,\tilde{M})$. It readily follows that the former points are in $\mathcal{J}^{+}$, since there exist complete null rays defining the corresponding TIPs (here we can proceed just as in Minkowski spacetime). However, the latter pairs belong to $\partial M\setminus \mathcal{J}^+$, since the null rays defining such TIPs are incomplete.  In any case, the following properties hold:

\begin{figure}
\centering
  \setlength{\unitlength}{1bp}%
  \begin{picture}(227.21, 140.19)(0,0)
    \put(0,0){\includegraphics{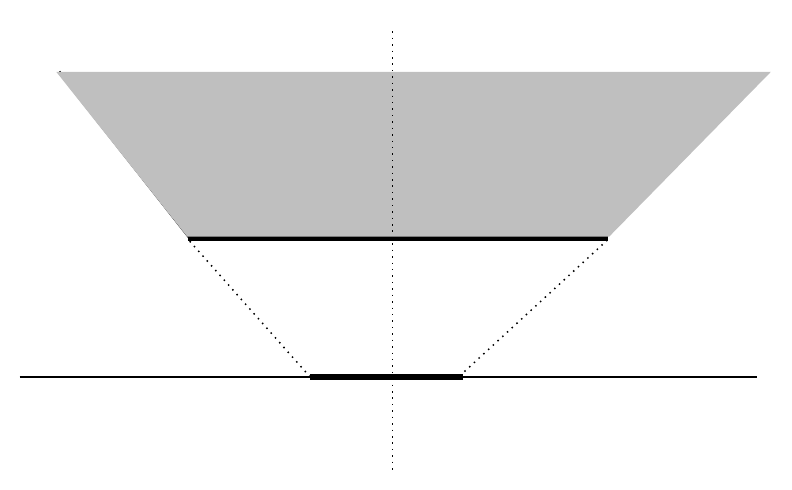}}
  \put(205.82,35.23){\fontsize{11.16}{13.00}\selectfont $\mathbf{y}$}
  \put(86.47,24.23){\fontsize{9.16}{11.00}\selectfont $C$}
  \put(134.95,63.63){\fontsize{9.16}{11.00}\selectfont $\Sigma$}
  \put(99.32,90.85){\fontsize{9.16}{11.00}\selectfont $J^+(\Sigma,\tilde{M})$}
  \put(115.03,123.78){\fontsize{11.75}{13.50}\selectfont $\mathbf{t}$}
  \end{picture}%
  \caption{\label{fig:3} Representation of the intersection of $(M,g)$ with the plane $x=0$. This is a standard plane with both the grey area and the set $\Sigma$ removed. All future-complete null rays in $M$ departing from points of the form $(t,0,y)$ are contained in this plane. In particular, any null ray departing from $C$ intersects $\Sigma$, and thus, there are no future-complete null $C$-rays.}
\end{figure}

\begin{figure}
\centering
  \setlength{\unitlength}{1bp}%
  \begin{picture}(216.58, 213.83)(0,0)
  \put(0,0){\includegraphics{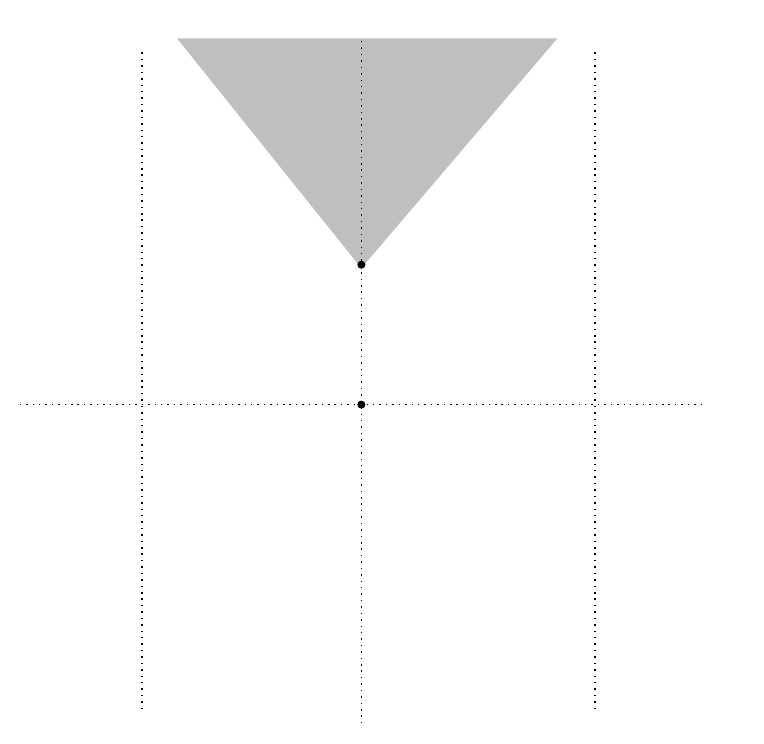}}
  \put(104.74,199.52){\fontsize{11.07}{13.28}\selectfont $\mathbf{t}$}
  \put(191.92,99.81){\fontsize{11.07}{13.28}\selectfont $\mathbf{x}$}
  \put(161.92,99.81){\fontsize{11.07}{13.28}\selectfont $1$}
   \put(42.92,99.81){\fontsize{11.07}{13.28}\selectfont $-1$}
  \put(107.35,172.11){\fontsize{9.38}{9.86}\selectfont $J^+(\Sigma,\tilde{M})$}
  \put(107.71,135.42){\fontsize{9.38}{9.86}\selectfont $\Sigma$}
  \put(107.07,99.46){\fontsize{9.38}{9.86}\selectfont $C$}
  \end{picture}%
  \caption{\label{fig:2} Representation of the intersection of $(M,g)$ with the plane $y=0$. Here, the projections of $\Sigma$ and $C$ are points, and the lines $x=-1$, $x=1$ of this plane are identified.}
\end{figure}

\begin{itemize}
\item ${\cal J}^+$ is ample: Let $K\subset M$ be any compact set. In order to show the existence of points in ${\cal J}^+$ not contained in the closed set $\widetilde{I^+(K)}$, let $(x_0,y_0,t_0)$ be a point in $M$ such that both $K$ and $J^+(\Sigma)$ are contained in $I^+(x_0,y_0,t_0)$. For any $\epsilon>0$ the null line $\gamma(t)=(x_0,y_0+t,t_0-\epsilon+t)$ is contained in $M$ (since it does not intersect $J^+(\Sigma)$), and defines a point $(P,\emptyset)\in {\cal J}^+$ with $(x_0,y_0,t_{0})\not \in P=I^-(\gamma)$. Moreover, the pair $(P,\emptyset)$ is not contained in $\widetilde{I^{+}(K)}$. Indeed, otherwise there would exist
some $q\in K$ such that $I^{-}(q)\subset P$. But by construction,
$q\in I^{+}(x_0,y_0,t_{0})$, and thus, $(x_0,y_0,t_{0})\in P$, in contradiction with the properties of $\gamma$.



\item $\mathcal{J}^{+}$ is not past-complete: Consider the null line $\sigma$ in $M$ given by $\sigma(t)=(1/2,1+t,t)$, which defines a pair $(P,\emptyset)\in \mathcal{J}^+$ with $P=I^-(\sigma)$. By construction, $\partial P$ is a plane containing $\sigma$ and lying on the boundary of $J^+(\Sigma,\tilde{M})$. In particular, the future null line $\sigma'(t)=(0,1+t,t)$ is a null geodesic generator of $\partial P$. However, the pair $(P',\emptyset)\in \overline{M}$, $P'=I^-(\sigma')$ (which is associated with the point $(0,2,1)$) is not included in $\mathcal{J}^{+}$. Therefore $\mathcal{J}^{+}$ is not past-complete.
  \item There are no future-complete null C-rays, since any such null C-ray in $\tilde{M}$ must intersect $J^{+}(\Sigma,\tilde{M})$ (see Figure \ref{fig:3}).

    \end{itemize}
Thus, in order to violate the thesis of Theorem \ref{main}, we need to show that $C$ is not entirely contained inside the black hole.
Our construction, however still does not satisfy this property, since all causal curves emanating from $C$ intersect $J^{+}(\Sigma,\tilde{M})$, whose boundary is not contained in $\mathcal{J}^+$.

To complete the construction, consider then a conformally rescaled metric $\mathfrak{g}=\Omega\, g$ such that: (i) all the boundary points in $J^+(\Sigma,\tilde{M})\setminus\Sigma$ are in $\mathcal{J}^{+}$ with $\mathfrak{g}$ and (ii) all the null $C$-rays are still incomplete. In order to obtain an appropriate conformal factor $\Omega$, recall a classic result due to Clarke \cite{Clarkegeodesiccompletenesscausal1971} ensuring the existence of a conformal factor $\tilde{\Omega}$ such that all lightlike geodesics on $(M,\tilde{\Omega} g)$ are complete. Now define $\Omega$ thus: (i) $\Omega\equiv 1$ in $I^{-}_{1/4}(\Sigma)$ and (ii) $\Omega\equiv \tilde{\Omega}$ in $M\setminus I^{-}(\Sigma)$, where $I^{-}_{a}(\cdot)$ denotes the chronological past computed with the metric $-adt^2+dx^2+dy^{2}$ (see Figure \ref{fig:4}).
This conformal factor ensures that all the future-directed lightlike geodesics with endpoint in $J^+(\Sigma,\tilde{M})\setminus \Sigma$ (and so, contained in a region where $\Omega\equiv \tilde{\Omega}$) are complete. In particular, $\left(J^+(\Sigma,\tilde{M})\setminus \Sigma\right)\subset \mathcal{J}^+$.

The situation with $\Sigma$ is quite peculiar: there exist future-complete null rays with endpoints on $\Sigma$ (so condition (i) on Defn. \ref{scri} holds) but not all null geodesics with endpoints on $\Sigma$ are complete (consider for instance the null rays emanating from $C$.) Hence, $\Sigma$ does not intersect $\mathcal{J}^+$ \textit{just because condition (ii) on Defn. \ref{scri} fails}. But now, since $\Sigma$ does not intersect $\mathcal{J}^+$, $(M,\Omega\,g)$ in turn fails to be past-complete, just by the same previous arguments. However, all the points in $C$ are now visible, as we can connect them with boundary points of $\left(J^+(\Sigma,\tilde{M})\setminus \Sigma\right)$ by means of a future-directed timelike curves.

\begin{figure}
\centering
\ifpdf
  \setlength{\unitlength}{1bp}%
  \begin{picture}(216.58, 213.83)(0,0)
  \put(0,0){\includegraphics{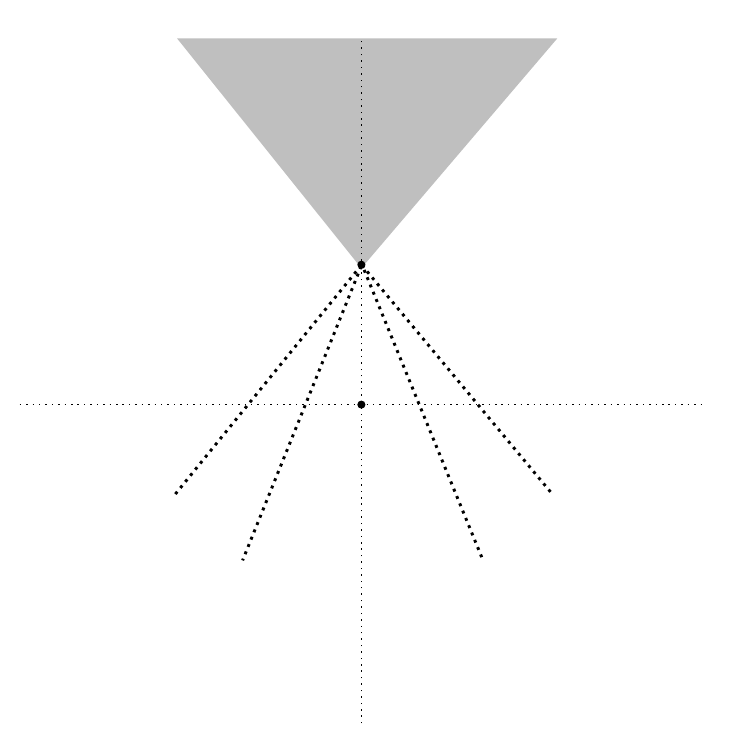}}
  \put(90.52,48.36){\fontsize{13.38}{16.86}\selectfont $\Omega\equiv 1$}
  \put(190.52,100.36){\fontsize{11.38}{14.86}\selectfont $\mathbf{x}$}
     \put(104.52,205.36){\fontsize{11.38}{14.86}\selectfont $\mathbf{t}$}
  \put(35.76,133.07){\fontsize{13.38}{16.86}\selectfont $\Omega=\tilde{\Omega}$}
  \end{picture}%
\else
  \setlength{\unitlength}{1bp}%
  \begin{picture}(216.58, 213.83)(0,0)
  \put(0,0){\includegraphics{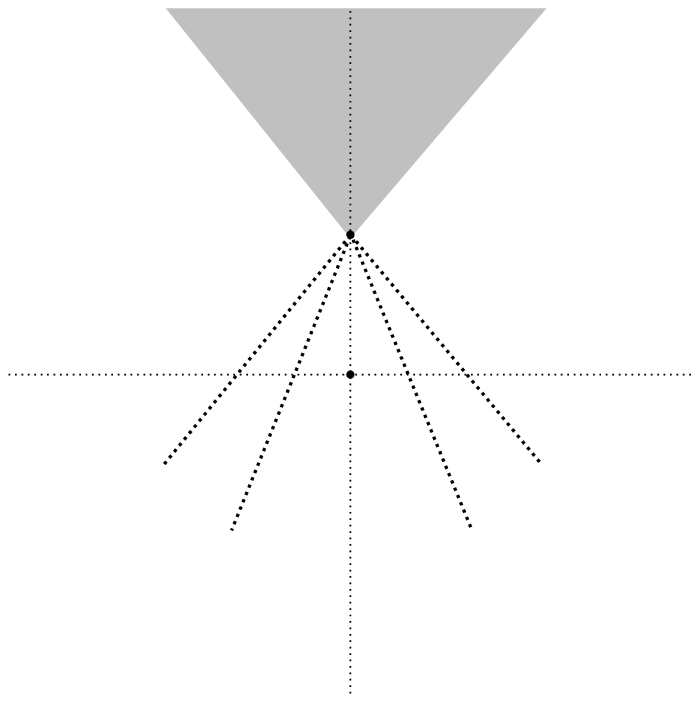}}
  \put(94.52,52.36){\fontsize{7.38}{8.86}\selectfont $\Omega\equiv 1$}
  \put(36.76,133.07){\fontsize{7.38}{8.86}\selectfont $\Omega=\tilde{\Omega}$}
  \end{picture}%
  \fi

  \caption{\label{fig:4} Illustration of the behaviour of the conformal factor $\Omega$ in the section $y=0$ of the spacetime.}
\end{figure}

\begin{remark}
 \emph{Observe that the previous example also shows that condition (ii) on Definition \ref{scri} too necessary to obtain Theorem \ref{main}. In fact, as we have mentioned, if we remove such a condition then the points of $\Sigma$ will also belong to $\mathcal{J}^{+}$, and the spacetime is actually past-complete. However, it is still true that there are no future-complete null $C$-rays. }
\end{remark}

\section*{Acknowledgments}

The authors are partially supported by the Spanish Grant MTM2016-78807-C2-2-P (MINECO and FEDER funds). They wish to acknowledge the IEMath-GR, and especially Miguel S\'anchez, for the kind hospitality while part of the work on this paper was being carried out.

\vspace*{\fill}


\vspace{.5cm}
\end{document}